\date{}
\def\BibTeX{{\rm B\kern-.05em{\sc i\kern-.025em b}\kern-.08em
    T\kern-.1667em\lower.7ex\hbox{E}\kern-.125emX}}
\newtheorem{claim}{Claim}
\newtheorem{theorem}{Theorem}
\newtheorem{definition}{Definition}
\newtheorem{lemma}{Lemma}
\newtheorem{proposition}{Proposition}
\newtheorem{remark}{Remark}
\newcommand{\Text}[1]{\text{\textnormal{#1}}}
\newenvironment{lemmarep}[1]{\noindent {\bf #1.}\begin{it}}{\end{it}}
\newenvironment{claimrep}[1]{\noindent {\bf #1.}\begin{it}}{\end{it}}
\begin{document}

\title{
MISO Broadcast Channel with Hybrid CSIT: Beyond Two Users \footnote{This work has been presented in part in \cite{OursICC2015, OursISIT2015}.  The research of S. Lashgari and A. S. Avestimehr  is supported  by NSF Grants CAREER 0953117, CCF-1161720, NETS-1161904, and ONR award N000141310094; and the research of R. Tandon is supported by the NSF Grant CCF 14-22090.
}}

\author[1]{Sina Lashgari}
\author[2]{Ravi Tandon}
\author[3]{Salman Avestimehr}
\affil[1]{School of Electrical and Computer Engineering,
Cornell University, Ithaca, NY}
\affil[2]{Discovery Analytics Center and Department of Computer Science, Virginia Tech, 
Blacksburg, VA}
\affil[3]{Department of Electrical Engineering,
University of Southern California,
 Los Angeles, CA}

\vspace{-10pt}

\maketitle

\begin{abstract}

We study the impact of heterogeneity of channel-state-information available at the transmitters (CSIT) on the capacity of broadcast channels with a multiple-antenna transmitter and $k$ single-antenna receivers (MISO BC).
In particular, we consider the $k$-user MISO BC, where the CSIT with respect to each receiver can be either instantaneous/perfect, delayed, or not available; and we study the impact of  this heterogeneity of CSIT  on the degrees-of-freedom (DoF) of such network.
We first focus on the $3$-user MISO BC; and we  completely characterize the DoF region for all possible heterogeneous CSIT configurations, assuming  linear encoding strategies at the transmitters.
The result shows that the state-of-the-art achievable schemes in the literature are indeed sum-DoF optimal, when restricted to linear encoding schemes.
To prove the result, we develop  a novel bound,  called \emph{Interference Decomposition Bound}, which provides a lower bound on the interference dimension at a receiver which supplies delayed CSIT based on the average dimension of constituents of that interference, thereby decomposing the interference into its individual components.
Furthermore, we extend our outer bound on the DoF region to the general $k$-user MISO BC, and demonstrate that it leads to an approximate characterization of  linear sum-DoF to within an additive gap of $0.5$ for a broad range of CSIT configurations.
Moreover, for the special case where only one receiver supplies delayed CSIT, we completely characterize the linear sum-DoF.
\end{abstract}

\section{Introduction}

Channel state information at the transmitters (CSIT) plays a crucial role in the design and operation of multi-user wireless networks.
Timely and accurate knowledge of the channels can potentially help the transmitters mitigate the interference that they cause at the unintended receivers, therefore enabling them to increase the communication rate to their intended receivers.
The common procedure for obtaining channel state information (CSI) is to send training symbols (or pilots) at the transmitters, and then estimate the channels at the receivers and feed the estimates back to the transmitters. As a result of this feedback mechanism, CSI may not always be perfect and instantaneous. For instance, CSIT may be outdated due to the fast fading nature of the channels or slow feedback mechanism, it can be noisy (imperfect), or not available at all. 
Therefore, one can expect that in a large network there would be various types of CSI available at the transmitters with respect to different receivers.
This results in communication scenarios with \emph{heterogeneous}  or \emph{hybrid}  CSIT.

As a result, there has been a growing interest in studying the impact of CSIT on the capacity of wireless networks, especially the broadcast channel.
In particular, it was shown in \cite{MAT} that even when the transmitter(s) only have access to delayed CSIT, there is significant potential for degrees-of-freedom (DoF) gain.
They studied the problem of $k$-user multiple-input single-output  broadcast channel (MISO BC) with delayed CSIT, and showed that for such network $\text{DoF}= \frac{k}{1+ \frac{1}{2} +  \ldots + \frac{1}{k} }$.
This work was followed by several other works which studied other network configurations under the assumption of delayed CSIT, including interference channel \cite{abdoli,varanasi,VMABinaryIC,OursAllerton2014}, X-channel \cite{xchannel, Ours},  multi-hop networks \cite{AAMultiHopJ}, and other variations of delayed  CSIT \cite{Yang}.

Most of these prior works assume that the entire network state information is obtained with delay.
However, in a large network, one can expect various types of CSIT available at the transmitters with respect to different receivers.
As a result, there have also been several works on studying the impact of heterogeneous (or hybrid) CSIT on the capacity of  wireless networks,
where the CSIT with respect to each receiver can now be either instantaneous/perfect ($P$), delayed ($D$), or not available ($N$)
 \cite{retrospective, RaviMaddah,MIMOICHybrid, synergistics, MTU, Mohanty,  3UserHybrid}.
However, studying networks under the assumption of heterogeneous CSIT becomes quite challenging, to the extent that only the  DoF for  $2$-user MISO BC is characterized \cite{GholamiJafar, synergistics}; and beyond the 2-user network configuration even the DoF is unknown and the problem remains widely open.

To make progress on the MISO BC beyond 2 users, in this paper we focus on characterizing the degrees of freedom when restricted to linear schemes (also called LDoF). Our motivation to focus on the linear degrees of freedom is based on recent progress made in \cite{Ours,OursAllerton}, where the concept of LDoF was introduced;
and it was shown that for 2-user X-channel with delayed CSIT, LDoF can be characterized, while the information theoretic DoF remains open.
Linear schemes are also of significant practical interests due to their low complexity; and in fact, the majority of DoF-optimal schemes developed so far for networks with delayed CSIT are linear.

We consider the problem of MISO BC with hybrid CSIT, with a multiple-antenna transmitter and $k$ single-antenna receivers ($k > 2$), and study its linear degrees of freedom.
 The channels are time-varying, and the CSIT provided by each receiver is either instantaneous ($P$), delayed ($D$), or none ($N$). We first study the case of $k = 3$, and fully characterize the LDoF for all $3^3$ possible hybrid CSIT configurations. The result is obtained by developing a general outer bound on the LDoF region, and a matching achievable scheme for each of the CSIT configurations.

The outer bound, which is the main contribution of this paper,
is based on three main ingredients.
The first ingredient  is a novel lemma,  called \emph{Interference Decomposition Bound}. 
It  essentially lower bounds  the interference dimension at a receiver with delayed CSIT by the 
average dimension of its constituents, thereby decomposing the interference into its individual components.
As a result of Interference Decomposition Bound, we can then focus on analyzing the dimension of constituents of interference at receivers which supply delayed CSIT, in order to derive an upper bound on LDoF.
Proof of Interference Decomposition Bound is based on temporal analysis of dimensions of transmit signals at different receivers, leading to necessary conditions on the increments of such dimensions using the delayed CSIT constraint.

The second main ingredient of the converse proof  is  \emph{MIMO  Rank Ratio Inequality for Broadcast Channel}, which provides a lower bound on the dimension of  interference components at receivers supplying delayed CSIT.
In particular, the bound states that if the transmitter employs linear precoding schemes, the dimension of each interference component  at a single-antenna receiver supplying delayed CSIT is at least half of the dimension of the corresponding signal at any other single-antenna receiver. 
This inequality can be viewed as a variation of the Rank Ratio Inequality proved in \cite{Ours}, which shows that if two \emph{distributed}  single-antenna transmitters employ linear strategies, the  dimensions of received linear subspace at  a single-antenna receiver supplying delayed CSIT is at least  $\frac{2}{3}$ of the dimension of the same signal at any other single-antenna receiver.
Note that the key difference between the two lemmas lies in the assumption of \emph{distributed} antennas  in Rank Ratio Inequality, which changes the proof techniques required to establish the inequality.

Finally, the third ingredient of the converse, called  \emph{Least Alignment Lemma},  provides a lower bound on the dimension of  interference components at receivers supplying no CSIT.
In particular, the bound states that once the transmitter(s) in a network has no CSIT of a certain receiver, the least amount of alignment will occur at that receiver, meaning that transmit signals will occupy the maximal signal dimensions at that receiver.
As a result, Least Alignment Lemma implies that the dimension of interference caused  at a receiver which supplies no CSIT by the message intended for another receiver is at least equal to the dimension of the message itself.
Using the three main ingredients we develop a converse proof which characterizes the linear DoF region for all  $3^3$ possible hybrid CSIT configurations of the 3-user MISO BC.

We next extend the key proof ingredients of the converse for 3-user MISO BC to the general $k$-user setting.
 In particular, we extend the Interference Decomposition Bound to 
the $k$-user setting to provide lower bound on the dimension of any interfering signal in  an arbitrary receiver supplying delayed CSIT.
In addition, we present a generalized version of MIMO Rank Ratio Inequality for BC, which provides a lower bound on the dimension of \emph{joint} received signals at any arbitrary subset of receivers supplying delayed CSIT.
Additionally, we extend the Least Alignment Lemma  and show that under linear schemes, for arbitrary transmit signals the dimension of received signal at a receiver supplying no CSIT cannot be less than any other receiver.

By extending the converse tools to the general $k$-user  setting, we provide a  new outer bound on the linear DoF region of the general $k$-user MISO BC with arbitrary hybrid CSIT configuration.
We demonstrate that our new outer bound leads to  an approximate  linear sum-DoF characterization to within an additive gap of $0.5$ for networks with more number of receivers supplying instantaneous CSIT than   delayed CSIT; and the approximation gap decays exponentially with the increase in number of receivers supplying instantaneous CSIT. 
Furthermore, by using the outer bound and providing a new multi-phase achievable scheme, we present the exact characterization of linear sum-DoF  for  networks in which only one receiver supplies delayed CSIT.

{\bf Notation.} We use small letters (e.g. $ x$) for scalars, arrowed letters (e.g. $\vec x$) for vectors,  capital letters (e.g. $X$) for matrices, and  calligraphic font (e.g. $\mathcal X$) for sets.
We also use bold letters (e.g. $\bold x$) for random entities, and non-bold letters for deterministic values (e.g., realizations of random variables).

\section{System Model} \label{model}

We consider the Gaussian $k$-user multiple-input single-output broadcast channel (MISO BC) as depicted in Figure ~\ref{xch}. It consists of
 a transmitter with $m$ antennas, and $k$ single-antenna receivers, $\Text{Rx}_1,\Text{Rx}_2,\ldots, \Text{Rx}_k$, where  $m\geq k$.
The transmitter has a separate message for each of the receivers.

\begin{figure}[h!]
\centering
\includegraphics[scale=.26, trim= 10mm 10mm 10mm 10mm]{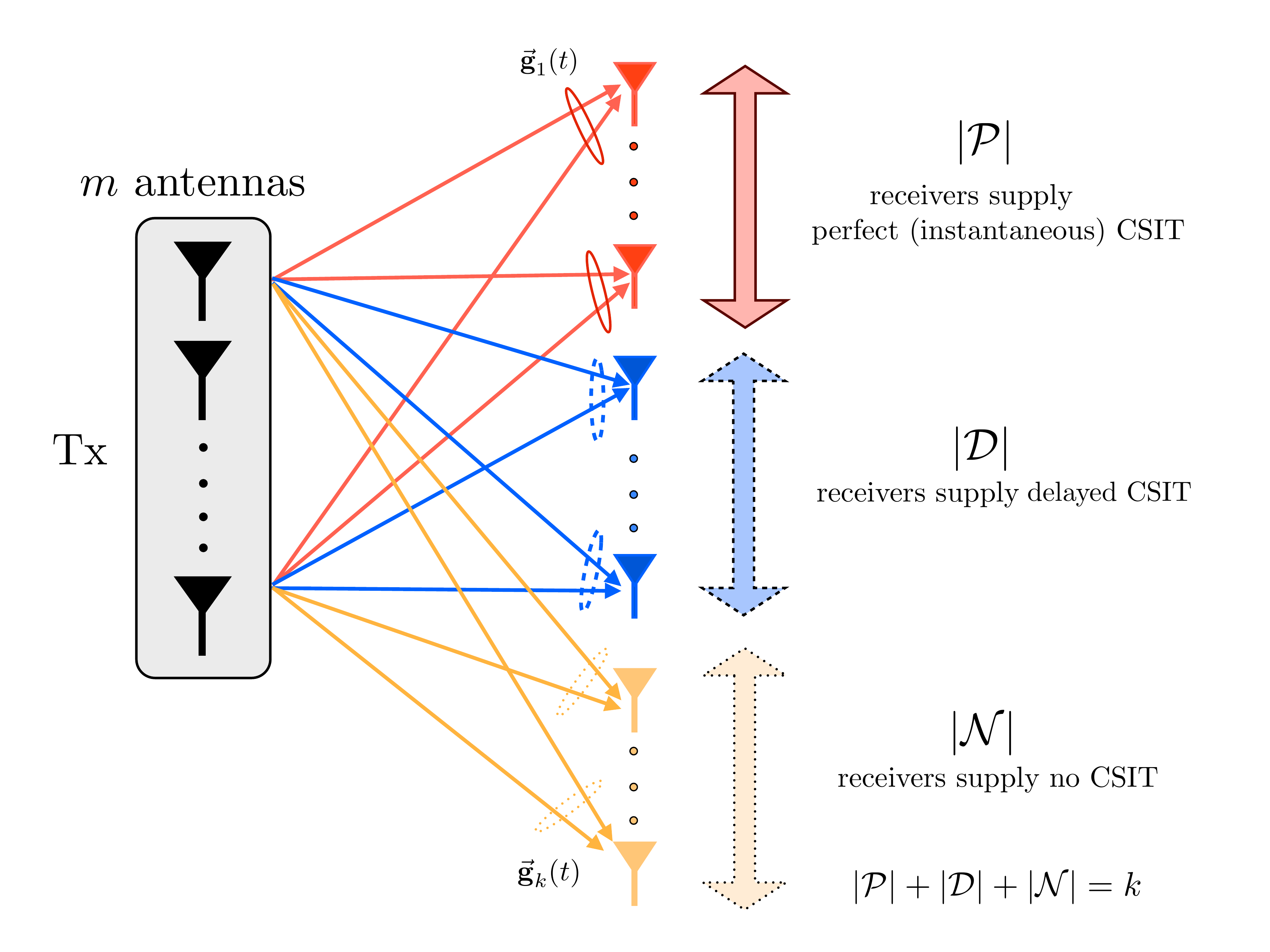}\\
\caption{Network configuration for $k$-user MISO BC.
}\label{xch}
\end{figure}

Consider communication over  $n$ time slots.
The received signal at $\text{Rx}_{j}$ ($j\in \{1,2,\ldots, k\}$) at time $t$ is given by
\begin{equation}
\bold{y}_{j}(t)=\vec{\bold{g}_{j}}(t) \vec{\bold{x}}(t)+\bold{z}_{j}(t),
\end{equation}
where
$\vec{\bold{x}}(t)  \in \mathbb C^{m}$ is the transmit signal vector at time $t$;
$\vec{\bold{g}_{j}}(t) \in \mathbb C^{1\times m}$ denotes the channel coefficients of the channel from 
$\text{Tx}$ to $\text{Rx}_{j}$;
and $\bold{z}_{j}(t)$ denotes the additive white noise which is distributed as $  \mathcal C\mathcal N(0,1)$.
The elements of the channel coefficients vector  $\vec{\bold{g}_{j}}(t)$ are i.i.d, drawn from a continuous distribution and also i.i.d across time and users. $\bm{ \mathcal {G}}(t)$ denotes the set of all $k$ channel vectors at time $t$.
In addition, we denote by $\bm{\mathcal {G}}^n$ the set of all channel coefficients from time 1 to $n$, i.e., 
\begin{equation} \label{chancoeff}
\bm{\mathcal {G}}^n=\{\vec{\bold{g}_{j}}(t): j =1,2,\ldots, k ,\quad t=1,\ldots,n\}.
\end{equation}
We denote the vector of transmit signals  in a block of length $n$ by $\vec {\bf x}^n$, where $\vec {\bf x}^n$ is the result of concatenation of transmit signal vectors $\vec{\bold{x}}(1), \ldots, \vec{\bold{x}}(n)$.
We assume $\text{Tx}$ obeys an average power constraint, $\frac{1}{n}E\{|| \vec {\bf x}^n||^2 \}\leq P_0$.

We focus on scenarios in which channel state information available at the transmitter (CSIT) with respect to different receivers can be instantaneous ($P$), delayed ($D$), or none ($N$). We refer to these scenarios as \emph{fixed hybrid scenarios}, or \emph{hybrid} in short.
In particular, CSIT with respect to
$\Text{Rx}_j$, $j=1,2,\ldots, k$, is denoted by $I_j\in \{ P,N,D\}$, as defined in \cite{synergistics}. 
In this notation,  $I_j=P$ indicates that $\Text{Tx}$ has access to instantaneous CSIT with respect to $\Text{Rx}_j$; i.e., at time $t$, $\Text{Tx}$ has access to $\{\vec{\bold{g}_{j}}(1), \ldots, \vec{\bold{g}_{j}}(t)\}$.
Similarly, $I_j=D$ indicates delayed CSIT with respect to $\Text{Rx}_j$; i.e., at time $t$, $\Text{Tx}$ has access to $\{\vec{\bold{g}_{j}}(1), \ldots, \vec{\bold{g}_{j}}(t-1)\}$.
Finally, $I_j=N$ indicates no CSIT, which means the channel to $\Text{Rx}_j$ is not known to the $\Text{Tx}$ at all.
We assume that the type of CSIT for each receiver is fixed and does not alternate over time (nevertheless, channels are time-varying). Therefore, there are $3^k$ different  fixed hybrid scenarios.
As an example, we use $PDD$ to denote the 3-user MISO BC where the first receiver provides instantaneous CSIT, while the other two provide delayed CSIT.

\begin{definition}\label{CSITsets}
We  denote the set of indices of users in states $P,D,N$ by $\mathcal P, \mathcal D, \mathcal N$, respectively.
In addition, 
for an ordered set  $\mathcal S$ we denote by $\pi_{\mathcal S}$ the ordered set obtained by a permutation of the elements of $\mathcal S$, where we denote the elements of the new ordered  set by $\pi_{\mathcal S}(1), \pi_{\mathcal S}(2), \ldots , \pi_{\mathcal S}(|\mathcal S|)$.
\end{definition}

Note that according to Definition \ref{CSITsets}, $\mathcal P \cup \mathcal D\cup \mathcal N = \{1,2,\ldots,k\}$ and $\mathcal P \cap \mathcal D = \mathcal D \cap \mathcal N =\mathcal P \cap \mathcal N = \emptyset $.
Based on the above description of channel state information, the channel outcomes available to $\Text{Tx}$ at time $t$ are denoted by the following set:
\begin{equation}
\bm{\mathcal {\tilde G}}^t = \{ \bold{G}_i^t; i \in \mathcal P\} \cup  \{ \bold{G}_j^{t-1};  j \in \mathcal D \} .
\end{equation}

We restrict ourselves to linear coding strategies as defined in ~\cite{Ours}, in which degrees-of-freedom (DoF) represents the dimension of the linear subspace of transmitted signals.
More specifically, consider a communication scheme with block length $n$, in which the $\text{Tx}$ wishes to deliver a vector $\vec {\bf x}_{j} \in \mathbb C^{m_{j}(n)}$ of $m_{j}(n) \in \mathbb{N}$ information symbols to $\text{Rx}_j$ ($j \in \{1,2.\ldots,k\}$).
Each information symbol is a random variable with variance $P_0$.
These information symbols are then modulated with precoding matrices ${\bf V}_{j}(t)\in \mathbb C^{m\times  m_{j}(n)}$ at times $t=1,2,\ldots,n$. 
Note that the precoding matrix ${\bf V}_{j}(t)$ depends only upon the outcome of $\bm{\mathcal {\tilde G}}^t$ due to the hybrid CSIT constraint:

\begin{equation}  \label{coding}
{\bf V}_{j}(t)=f_{j,t}^{(n)} \left( \bm{\mathcal {\tilde G}}^t \right).
\end{equation}

Based on this linear precoding, $\text{Tx}$ will then send $ \vec{{\bf x}}(t)= \sum_{j=1}^{k}{\bf V}_{j}(t) \vec {\bf x}_{j}$ at time $t$. We can rewrite $\vec{{\bf x}}(t)$ as following.
\begin{equation}
 \vec{{\bf x}}(t)= [{\bf V}_{1}(t) \ldots{\bf V}_{k}(t) ]   [\vec {\bf x}_{1};\ldots ; \vec {\bf x}_{k}],
\end{equation}
where $[A;B]$ denotes the vertical concatenation of matrices $A$ and $B$ (i.e., $\left[\begin{array}{c}{A} \\ { B}\end{array}\right]).$

We denote by $\bold{V}_{j}^n\in \mathbb C^{nm\times m_{j}(n)}$ the overall precoding matrix of $\text{Tx}$ for $\text{Rx}_j$, such that   the rows $1+(t-1)m,\ldots ,tm$ of $\bold{V}_{j}^n$ constitute ${\bf V}_{j}(t)$.
In addition, we denote the precoding function used by $\text{Tx}$ by $f^{(n)}=  \{ f_{j,t}^{(n)}   \}_{\substack{t=1,\ldots, n\\ j=1,\ldots, k}}$.

Based on the above setting, the received signal at $\text{Rx}_{j}$ ($j\in \{1,2,\ldots, k\}$) after the $n$ time steps of the communication will be
\begin{equation}
\vec {\bf y}_{j}^n=\bold{G}_{j}^n [{\bf V}_{1}^n \ldots{\bf V}_{k}^n ]   [\vec {\bf x}_{1};\ldots ; \vec {\bf x}_{k}] + \vec {\bf z}_{j}^n,
\end{equation}
 where $\bold{G}_j^n\in \mathbb C^{n\times nm}$ is the block diagonal channel coefficients matrix where the channel coefficients  of timeslot $t$ (i.e. $\vec{{\bf g}_j}(t)$) are in the row $t$, and in the columns $1+(t-1)m, \ldots, tm$ of $\bold{G}_j^n$, and the rest of the elements of $\bold{G}_j^n$ are zero.\footnote{For $j\in \{1,\ldots, k\}$, we define $\bold{G}_{j}^0 [{\bf V}_{1}^0 \ldots{\bf V}_{k}^0 ]  \triangleq \vec 0 $; therefore, for instance,  $\Text{rank}\left[ \bold{G}_{j}^0 [{\bf V}_{1}^0 \ldots{\bf V}_{k}^0 ] \right]=0$.}

Now, consider the decoding of $\vec {\bf x}_{j}$ at $\text{Rx}_j$ (i.e., decoding the $m_{j}(n)$ information symbols  for $\text{Rx}_j$). The corresponding interference subspace at $\text{Rx}_j$ will be
 \begin{align*}
   \bm{\mathcal I}_{j}=\text{colspan} \left( \bold{G}_{j}^n [\cup_{i\ne j}{\bf V}_{i}^n  ] \right),
\end{align*} 
where $[\cup_{i\neq j}{\bf V}_i^n  ]$ 
is the matrix formed by row concatenation of matrices ${\bf V}_{i}^n $ for $i\neq j$, and $\text{colspan}(.)$ of a matrix corresponds to the sub-space that is spanned by its columns. 
 Let $\bm{\mathcal I}_{j}^\perp \subseteq \mathbb{C}^n$ denote the  orthogonal subspace of $\bm{\mathcal I}_{j}$. Then, in the regime of asymptotically high transmit powers (i.e., ignoring the noise), the decodability of information symbols at $\text{Rx}_j$ corresponds to the constraint that  the image of $\text{colspan}(\bold{G}_{j}^n \bold{V}_{j}^n)$ on $\bm{\mathcal I}_{j}^\perp$ has dimension $m_{j}(n)$:
\begin{equation}
\Text{dim} \left( \text{Proj}_{\bm{\mathcal I}_{j}^\perp} \text{colspan} \left( \bold{G}_{j}^n \bold{V}_{j}^n \right) \right) =\Text{dim} \left( \text{colspan} \left( \bold{G}_{j}^n\bold{V}_{j}^n \right) \right) = m_{j}(n),
\end{equation}
which can be shown by simple linear algebra to be equivalent to the following:
\begin{align}
 \text{rank} [\bold{G}_{j}^n [\cup_{i=1}^k{\bf V}_{i}^n  ]] -  \text{rank} [\bold{G}_{j}^n [\cup_{i\ne j}{\bf V}_{i}^n  ]]  &=\text{rank} [\bold{G}_{j}^n {\bf V}_{j}^n  ] = m_j(n). \label{decode0}
\end{align}

Based on this setting, we now define the linear degrees-of-freedom of the $k$-user MISO broadcast channel with hybrid CSIT.
\begin{definition} \label{DoFdef}
$k$-tuple $(d_{1},d_{2},\ldots ,d_{k})$ degrees-of-freedom are linearly achievable if there exists a sequence
$\{  f^{(n)} \}_{n=1}^{\infty}$ such that for each $n$ and the corresponding choice of $(m_{1}(n),m_{2}(n),\ldots, m_{k}(n))$, $(\bold{V}_{1}^n,\bold{V}_{2}^n,\ldots, \bold{V}_{k}^n)$ satisfy the decodability condition of (\ref{decode0}) with probability 1; i.e., for all $ j \in \{1,\ldots, k\} $,
\begin{align}
 \Text{rank} [\bold{G}_{j}^n [\cup_{i=1}^k{\bf V}_{i}^n  ]] -  \Text{rank} [\bold{G}_{j}^n [\cup_{i\ne j}{\bf V}_{i}^n  ]]  &\stackrel{a.s.}{=} \Text{rank} [\bold{G}_{j}^n {\bf V}_{j}^n  ] \stackrel{a.s.}{=} m_j(n), \label{decode}
\end{align}
and
\begin{equation}
d_{j}=\lim_{n\to\infty} \frac{m_{j}(n)}{n}. \label{DoFcond}
\end{equation}

We also define the linear degrees-of-freedom region $\Text{LDoF}_{\Text{region}}$ as the closure of the set of all linearly achievable $k$-tuples $(d_{1},d_{2},\ldots ,d_{k})$.
Furthermore, the  linear sum-degrees-of-freedom ($\Text{LDoF}_{\Text{sum}}$) is defined as follows:
\begin{equation}
\Text{LDoF}_{\Text{sum}} \triangleq \max\quad \sum_{j=1}^{k}d_{j},\qquad \textrm{s.t. }\quad (d_{1},d_{2},\ldots ,d_{k})\in\Text{LDoF}_{\Text{region}}.
\end{equation}
\end{definition}

In what follows we first focus on the case of $k=3$,  and completely characterize the $\Text{LDoF}_{\Text{region}}$ for 3-user MISO BC with hybrid CSIT. We then extend our bounds and present new outer bounds on the $\Text{LDoF}_{\Text{region}}$ of the general $k$-user MISO BC with hybrid CSIT.

\section{3-user MISO Broadcast Channel with Hybrid CSIT} 
In this section we focus on 3-user MISO broadcast channel with hybrid CSIT.
In particular, we first state the complete characterization of $\Text{LDoF}_{\Text{region}}$ for all  hybrid CSIT configurations; and then, we present the proof  based on 3 key lemmas.

\begin{theorem} \label{mainthm3}
Given a hybrid CSIT configuration, i.e., a partition of $3$ users into disjoint sets $\mathcal P, \mathcal D,$ and $\mathcal N$ as defined in Definition \ref{CSITsets},
the 
$\Text{LDoF}_{\Text{region}}$  is characterized as follows:
\begin{align}
\Text{LDoF}_{\Text{region}}= \bigg \{ \quad (d_1,d_2,d_3) \quad  | \quad & 0\leq d_1,d_2,d_3\leq 1, \nonumber\\
& \forall i\in   \mathcal D ,  \forall \pi_{\mathcal P \cup \mathcal D \setminus i}, \quad    \sum_{j =1}^{|\mathcal P|+ |\mathcal D|-1} \frac{d_{\pi_{\mathcal P \cup \mathcal D \setminus i} (j)}}{2^j} + d_{i} +\sum_{j\in \mathcal N} d_j \leq 1,  \nonumber\\
&\forall \pi_{\mathcal D}, \quad \sum_{j\in \mathcal P}\frac{d_j}{3} + \sum_{j=1}^{|\mathcal D|}  \frac{d_{\pi_{\mathcal D}(j)}}{j} +\sum_{j\in \mathcal N} d_j \leq 1,   \nonumber\\
& \forall i\in  \mathcal P \cup \mathcal D , \quad     d_{i} +\sum_{j\in \mathcal N} d_j \leq 1 \quad
\bigg \}. \label{Dregion}
\end{align}
The $\Text{LDoF}_{\Text{region}}$ and the corresponding
 $\Text{LDoF}_{\Text{sum}}$ for different CSIT configurations are summarized in Table \ref{tab:template}.
\end{theorem}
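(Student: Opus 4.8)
The plan is to prove Theorem~\ref{mainthm3} by the two inclusions: a converse showing every linearly achievable triple obeys the inequalities in~(\ref{Dregion}), and an achievability argument realizing every point of the polytope on the right-hand side. Since that polytope is a finite intersection of halfspaces and $\Text{LDoF}_{\Text{region}}$ is closed and (by concatenating linear schemes over longer blocks) closed under time-sharing, achievability reduces to realizing the finitely many extreme points of the polytope; once the region identity is established, every entry of Table~\ref{tab:template}, and in particular $\Text{LDoF}_{\Text{sum}}$, follows by maximizing $\sum_j d_j$ over the polytope, i.e. by a small linear program. As both the channel and the inequality system in~(\ref{Dregion}) are invariant under permuting receivers carrying the same CSIT label, it suffices to treat one representative of each of the ten profiles $(|\mathcal P|,|\mathcal D|,|\mathcal N|)$ with $|\mathcal P|+|\mathcal D|+|\mathcal N|=3$.

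For the converse I would argue entirely through the rank identities~(\ref{decode}). For every receiver $j$, decodability gives $\Text{rank}[\bold{G}_j^n[\cup_{i\ne j}\bold{V}_i^n]]\le n-m_j(n)$, i.e. the interference dimension at $\text{Rx}_j$ is at most $n-m_j(n)$; each inequality in~(\ref{Dregion}) then results from choosing the receiver at which to invoke this and lower bounding the interference dimension there by a weighted sum of the $m_i(n)$'s, using the three lemmas of the paper as black boxes. The box constraints $d_i\le 1$ are just $m_i(n)\le n$. For $i\in\mathcal P\cup\mathcal D$, the constraint $d_i+\sum_{j\in\mathcal N}d_j\le 1$ (vacuous when $\mathcal N=\emptyset$) comes from decodability at some $\ell\in\mathcal N$: by the \emph{Least Alignment Lemma} the images at $\text{Rx}_\ell$ of message $i$ and of every message in $\mathcal N\setminus\{\ell\}$ are full-dimensional and in general position, so the interference dimension at $\text{Rx}_\ell$ is at least $m_i(n)+\sum_{j\in\mathcal N\setminus\{\ell\}}m_j(n)$, which rearranges to the claim. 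For $i\in\mathcal D$ and a permutation $\pi$ of $\mathcal P\cup\mathcal D\setminus\{i\}$, the geometric-weight constraint comes from decodability at $\text{Rx}_i$ together with a recursive lower bound on its interference dimension: the \emph{Interference Decomposition Bound} isolates the footprint of message $\pi(1)$ inside that interference, the \emph{MIMO Rank Ratio Inequality for Broadcast Channel} lower bounds that footprint by half of $\Text{rank}[\bold{G}_{\pi(1)}^n\bold{V}_{\pi(1)}^n]=m_{\pi(1)}(n)$, and peeling off $\pi(2),\pi(3),\dots$ in turn — each step a further application of the Rank Ratio inequality to a partial interference subspace — picks up an extra factor $\tfrac12$, yielding the weights $2^{-j}$; the $\mathcal N$-messages enter with weight $1$, enforced by Least Alignment, or, where an $\mathcal N$-message could a priori be aligned with others at the delayed decoder, by the averaging built into the Interference Decomposition Bound. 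Finally, the MAT-type constraint with weights $\tfrac13$ on $\mathcal P$ and $\tfrac1j$ on $\mathcal D$ is obtained by the same decompose-then-Rank-Ratio accounting, but now chained symmetrically through all receivers of $\mathcal P\cup\mathcal D$ rather than anchored at one decoder, which is what turns the per-step factor $\tfrac12$ into the harmonic coefficients $1/j$ and the coefficient $1/3$ for the perfect-CSIT users. Dividing each inequality by $n$ and letting $n\to\infty$ via~(\ref{DoFcond}) yields~(\ref{Dregion}).

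I expect the main obstacle to be the recursive interference bound behind the last two families of constraints: one must check that the Interference Decomposition Bound and the (generalized) MIMO Rank Ratio Inequality compose without discarding dimensions needed downstream, that the bookkeeping is exact for every ordering $\pi$, and that everything carries over uniformly across all $3^3$ configurations, including the mixed ones with $\mathcal P,\mathcal D,\mathcal N$ all nonempty. For achievability I would give, for each of the ten profiles, a linear scheme of the multi-phase retrospective-interference-alignment type: phase~1 injects fresh coded symbols, zero-forcing toward any receiver in $\mathcal P$ and leaving the rest unprecoded; later phases use the now-available delayed CSIT of the $\mathcal D$ receivers to re-deliver the overheard interference as higher-order symbols while still protecting the $\mathcal P$ receivers, with phase lengths tuned to the target extreme point. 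The profiles $PPP$ (zero-forcing), $DDD$ (the scheme of~\cite{MAT}) and $NNN$ (time division) are classical, and the remaining hybrid profiles are covered by, or are light modifications of, state-of-the-art hybrid-CSIT schemes; matching the achievable tuples against the vertices of~(\ref{Dregion}) and invoking time-sharing completes the proof and fills Table~\ref{tab:template}.
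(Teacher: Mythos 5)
Your outline is essentially the paper's own architecture: the converse is anchored at the decodability identities (\ref{decode}) and driven by the same three lemmas, and achievability is reduced to the corner points of the polytope plus time-sharing, with the LAL handling the weight-$1$ appearance of the $\mathcal N$ users exactly as in the paper. Three caveats on the details. First, in the geometric-weight family the per-step factor $\tfrac12$ is not produced by repeated applications of the Rank Ratio Inequality to partial interference subspaces: applying Lemma \ref{MIMORRI} to each constituent separately cannot simply be summed, since the footprints of different messages at the delayed receiver may align. The recursion is driven by repeated use of the Interference Decomposition Bound (as in Claim \ref{cl12}), with the Rank Ratio Inequality invoked only once on the innermost single-message term; for $k=3$ this is exactly one IDB step plus one RRI step, as in Section \ref{PDNconv}'s companion derivation for $PDD$ in Section \ref{PDDconv}. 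Second, for the MAT-type constraints with coefficients $\tfrac13$ and $\tfrac1j$, the paper's three-user proof does not rederive them from its lemmas at all but imports them from the degraded-BC/feedback arguments of \cite{MAT,Mohanty}; your plan to obtain them from the rank machinery is viable (it is what the paper does for general $k$), but it needs the joint-rank version of the Rank Ratio Inequality (Lemma \ref{MIMORRIk}) plus an averaging step such as Claim \ref{MIMOvariant} — the pairwise inequality ``chained symmetrically'' does not by itself yield the coefficient $\tfrac13$ for the $\mathcal P$ users. Third, on achievability, every corner point except $\left(\tfrac34,\tfrac12,\tfrac12\right)$ for $PDD$ is indeed available in the literature (Table \ref{tab:template2}); that point is new and requires the explicit four-slot construction of Section \ref{PDDachsection}, which your multi-phase template does capture in structure but should be written out rather than attributed to existing schemes.
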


Note that although there are $3^3$ different CSIT configurations for 3-user MISO BC, many of them are permutations of one another, e.g. $PPD,PDP,DPP$. As a result, there are only $10$ distinct CSIT configurations which are presented in Table  \ref{tab:template}.

\renewcommand{\arraystretch}{2}

\begin{table*}[t]
\centering
\begin{tabular}{|c|c|c|}
\hline
$\text{CSIT States}$ &$\text{Linear Degrees of Freedom Region ($\Text{LDoF}_{\Text{region}}$)}$ & $\text{LDoF}_{\text{sum}}$ \\
\hline

\hline
PPP & {$\!\begin{aligned}
\Text{LDoF}_{\Text{region}}= \bigg\{ (d_1,d_2,d_3) \quad  | \quad      & 0\leq d_1,d_2,d_3\leq 1 \bigg\} \nonumber \end{aligned}$} & 3 \\
\hline 
PPD  & {$\begin{aligned}
\Text{LDoF}_{\Text{region}}= \bigg\{ (d_1,d_2,d_3) \quad  | \quad       & 0\leq d_1,d_2,d_3\leq 1, \quad   \frac{d_1}{2} + \frac{d_2}{4} + d_3 \leq 1, \quad  \frac{d_1}{4} + \frac{d_2}{2} + d_3 \leq 1 \bigg\} \nonumber \end{aligned}$} & $\frac{9}{4}$ \\[1mm]
\hline
PPN  & {$\!\begin{aligned}
\Text{LDoF}_{\Text{region}}=\bigg \{ (d_1,d_2,d_3) \quad  | \quad       & 0\leq d_1,d_2,d_3\leq 1 ,\quad  & d_1 +  d_3 \leq 1, \quad d_2+ d_3 \leq 1\bigg \} \nonumber \end{aligned}$} & $2$ \\
\hline 
PDD & {$\!\begin{aligned}
\Text{LDoF}_{\Text{region}}= \bigg\{  (d_1,d_2,d_3) \quad  | \quad       & 0\leq d_1,d_2,d_3\leq 1, \nonumber\\  & \frac{d_1}{2} + \frac{d_2}{4} + d_3 \leq 1,\qquad  \frac{d_1}{2} + d_2 + \frac{d_3}{4} \leq 1,\nonumber\\
& \frac{d_1}{3} + \frac{d_2}{2} + d_3 \leq 1, \qquad \frac{d_1}{3} + d_2 + \frac{d_3}{2} \leq 1 \bigg\} \nonumber \end{aligned}$} & $\frac{9}{5}$ \\
\hline
PDN & {$\begin{aligned}
\Text{LDoF}_{\Text{region}}= \bigg\{ (d_1,d_2,d_3) \quad  | \quad       & 0\leq d_1,d_2,d_3\leq 1,\quad & \frac{d_1}{2} + d_2 + d_3 \leq 1 , \quad  d_1 + d_3\leq 1\nonumber\bigg\} \end{aligned}$} & $\frac{3}{2}$ \\
\hline
DDD & {$\!\begin{aligned}
\Text{LDoF}_{\Text{region}}= \bigg\{ (d_1,d_2,d_3) \quad  | \quad       & 0\leq d_1,d_2,d_3\leq 1 ,\nonumber\\
 &\frac{d_1}{3} + \frac{d_2}{2} + d_3 \leq 1,\quad  \frac{d_1}{3} + d_2 + \frac{d_3}{2} \leq 1, \quad  \frac{d_1}{2} + \frac{d_2}{3} + d_3 \leq 1, \nonumber\\
&\frac{d_1}{2} + d_2 + \frac{d_3}{3} \leq 1,\quad  {d_1} + \frac{d_2}{2} + \frac{d_3}{3} \leq 1, \quad  d_1 + \frac{d_2}{3} + \frac{d_3}{2} \leq 1 \bigg\} \nonumber \end{aligned}$} & $\frac{18}{11}$ \\
\hline
DDN & {$\!\begin{aligned}
\Text{LDoF}_{\Text{region}}= \bigg\{ (d_1,d_2,d_3) \quad  | \quad       & 0\leq d_1,d_2,d_3\leq 1,\quad
& \frac{d_1}{2} + d_2 + d_3 \leq 1, \quad d_1 + \frac{d_2}{2} + d_3 \leq 1 \bigg \} \nonumber \end{aligned}$} & $\frac{4}{3}$ \\
\hline
PNN, DNN, NNN & {$\!\begin{aligned}
\Text{LDoF}_{\Text{region}}=\bigg \{ (d_1,d_2,d_3) \quad  | \quad       & 0\leq d_1,d_2,d_3\leq 1, \quad  d_1 + d_2 + d_3 \leq 1 \nonumber \bigg\} \end{aligned}$} & $1$ \\

\hline

\end{tabular}
\vspace{3mm}
\caption{  $\Text{LDoF}_{\Text{region}}$ and $\Text{LDoF}_{\Text{sum}}$ for all  possible configurations of hybrid CSIT for 3-user MISO BC. 
}
\label{tab:template}
\vspace{-15pt}
\end{table*}

\begin{remark}
The bound in Theorem \ref{mainthm3} strictly improves the state-of-the-art bounds, and also  leads to complete characterization of $\Text{LDoF}_{\Text{region}}$  for $k=3$. 
For instance, for  $PDD$ (i.e. $\text{Rx}_1$ supplying instantaneous CSIT, while $\text{Rx}_2,\text{Rx}_3$ supply delayed CSIT) the prior results suggest that $\Text{LDoF}_{\Text{sum}} \leq \frac{17}{9}$  \cite{Mohanty, RassouliClerckx}, while by Theorem \ref{mainthm3},  $\Text{LDoF}_{\Text{sum}}$ is indeed equal to $\frac{9}{5}$.
Similarly, for the case of $PPD$, the prior results \cite{Mohanty, RassouliClerckx} imply that
 $\Text{LDoF}_{\Text{sum}} \leq \frac{7}{3}$, while by Theorem \ref{mainthm3},   $\Text{LDoF}_{\Text{sum}}=\frac{9}{4}$
\end{remark}
\begin{remark}
Theorem \ref{mainthm3}  implies that the state-of-the-art achievable schemes presented in \cite{3UserHybrid} for $PPD$ and $PDD$ are both optimal from the perspective of $\Text{LDoF}_{\Text{sum}}$.
\end{remark}

\begin{remark}
It is worth noting that in any CSIT configuration which involves receivers with state N, the inequalities that constitute the LDoF region have coefficient 1 for the degrees-of-freedom of receivers with state N.
In other words, receivers that supply no CSIT do not contribute to the $\Text{LDoF}_{\Text{sum}}$, and unless all receivers have state N, removing the no CSIT receivers from the network will not decrease the $\Text{LDoF}_{\Text{sum}}$.
\end{remark}

In the remainder of this section we prove Theorem \ref{mainthm3}.
To this aim, we first present the converse proof in Section \ref{3converse}, and then discuss the achievability in Section \ref{PDDachsection}.

\subsection{Proof of Converse for 3-User MISO Broadcast Channel with Hybrid CSIT}\label{3converse}
We first provide the three  main ingredients that are key  in proving the converse for  3-user MISO broadcast channel with hybrid CSIT.
We then show how those main ingredients are used to prove the converse for two representative CSIT configurations (i.e. $PDD$ and $PDN$). 
The proof of converse for other CSIT configurations  can be found in Appendix \ref{convmainthm3}.
The first two ingredients of the converse proof deal with lower bounding received signal dimension at a receiver which supplies delayed CSIT, while the third ingredient captures the impact of no CSIT.

The first key ingredient  is  Interference Decomposition Bound, which essentially provides a lower bound on the interference dimension at a receiver supplying delayed CSIT, based on the constituents of that interference, as well as the received signal dimension at other receivers.
It is stated below; and its proof
is provided in Appendix \ref{IDBproof}.

\begin{lemma}\label{MainIneq}
{\bf (Interference Decomposition Bound)} 
 Consider $k=3$, and a fixed  linear coding strategy $ f^{(n)}$, with corresponding precoding matrices $\bold{V}_{1}^{n}, \bold{V}_{2}^{n},\bold{V}_{3}^{n} $ as defined in (\ref{coding}). If $I_3= D$ (i.e., if $\text{Rx}_3$ supplies delayed CSIT),
\begin{align} 
 & \frac{ \Text{rank}[ \bold{G}_{1}^n [\bold{V}_{1}^{n} \quad \bold{V}_{2}^{n}] ]  - \Text{rank}[ \bold{G}_{1}^n  \bold{V}_{2}^{n} ]  + \Text{rank}[ \bold{G}_{3}^n \bold{V}_{2}^{n} ]   }{2}   \stackrel{a.s.}{\leq}     \Text{rank}[ \bold{G}_{3}^n [\bold{V}_{1}^{n} \quad \bold{V}_{2}^{n}]].\label{IDB}
\end{align}
\end{lemma}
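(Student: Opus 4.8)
The plan is to establish the equivalent inequality
\[
\text{rank}[\bold{G}_1^n[\bold{V}_1^n\ \bold{V}_2^n]]-\text{rank}[\bold{G}_1^n\bold{V}_2^n]+\text{rank}[\bold{G}_3^n\bold{V}_2^n]\ \le\ 2\,\text{rank}[\bold{G}_3^n[\bold{V}_1^n\ \bold{V}_2^n]]
\]
by a slot-by-slot (temporal) analysis. For $0\le t\le n$ let $\bold{G}_j^t,\bold{V}_j^t$ denote the restrictions of $\bold{G}_j^n,\bold{V}_j^n$ to the first $t$ time slots, and set
\[
a(t):=\text{rank}[\bold{G}_1^t[\bold{V}_1^t\ \bold{V}_2^t]]-\text{rank}[\bold{G}_1^t\bold{V}_2^t],\quad b(t):=\text{rank}[\bold{G}_3^t\bold{V}_2^t],\quad c(t):=\text{rank}[\bold{G}_3^t[\bold{V}_1^t\ \bold{V}_2^t]].
\]
The goal is $a(n)+b(n)\le 2c(n)$, i.e.\ $\Phi(t):=2c(t)-a(t)-b(t)\ge 0$ at $t=n$, with $\Phi(0)=0$. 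It is convenient to pass to row spans: writing $S_t:=\text{rowspan}[\bold{V}_1(t)\ \bold{V}_2(t)]\subseteq\mathbb{C}^{m_1(n)+m_2(n)}$, $U^{(j)}(t):=\sum_{s\le t}\text{span}\big(\bold{g}_j(s)[\bold{V}_1(s)\ \bold{V}_2(s)]\big)$, and $\pi$ for the coordinate projection onto the last $m_2(n)$ (message-$2$) coordinates, one checks $a(t)=\dim(U^{(1)}(t)\cap\ker\pi)$, $b(t)=\dim\pi(U^{(3)}(t))$, and $c(t)=\dim U^{(3)}(t)$. Each of $a,b,c$ is nondecreasing in $t$ with increments in $\{0,1\}$ (passing from $t-1$ to $t$ adjoins one receive equation).

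The one structural input I would draw from the hypothesis $I_3=D$ is that, for every $t$, the precoders $\bold{V}_1(t),\bold{V}_2(t),\bold{V}_3(t)$ are measurable with respect to $\bm{\mathcal{\tilde G}}^t$, which does \emph{not} contain $\bold{g}_3(t)$; hence $\bold{g}_3(t)$ is drawn from its continuous distribution independently of $[\bold{V}_1(t)\ \bold{V}_2(t)]$ and of $\bold{g}_3(1),\dots,\bold{g}_3(t-1)$. By the usual genericity argument (a continuously distributed vector falls in a fixed proper subspace with probability zero), this yields, almost surely and for all $t$,
\[
c(t)-c(t-1)=\mathbf{1}\big[S_t\not\subseteq U^{(3)}(t-1)\big],\qquad b(t)-b(t-1)=\mathbf{1}\big[\pi(S_t)\not\subseteq\pi(U^{(3)}(t-1))\big].
\]
Since $S_t\subseteq U^{(3)}(t-1)$ implies $\pi(S_t)\subseteq\pi(U^{(3)}(t-1))$, we get $\Delta b(t)\le\Delta c(t)$, and in particular $\Delta c(t)=0\Rightarrow\Delta b(t)=0$. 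No such genericity is available at $\text{Rx}_1$, since $\bold{V}_1(t),\bold{V}_2(t)$ may depend on $\bold{g}_1(t)$ when $I_1=P$.

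Combining the increment constraints, $\Delta\Phi(t)=2\Delta c(t)-\Delta a(t)-\Delta b(t)\ge 0$ whenever $\Delta c(t)=1$, while a slot with $\Delta c(t)=0$ forces $\Delta b(t)=0$ and can decrease $\Phi$ by at most one, precisely when $\Delta a(t)=1$. The proof thus reduces to a global \emph{charging} argument: every ``debit'' slot (one with $\Delta c=0$, $\Delta a=1$) must be matched, injectively, with an earlier ``credit'' slot (one with $\Delta c=1$ and $\Delta a+\Delta b\le 1$, where $\Phi$ rose by at least one). The mechanism I envisage: in a debit slot $t$ one has $S_t\subseteq U^{(3)}(t-1)$, so the newly received vector $\bold{g}_1(t)[\bold{V}_1(t)\ \bold{V}_2(t)]\in S_t$ lies in the subspace $U^{(3)}(t-1)$ that $\text{Rx}_3$ has \emph{already} built from the precoding row-spaces $S_1,\dots,S_{t-1}$; meanwhile this vector contributes a new element of $U^{(1)}(t)\cap\ker\pi$ whose $\pi$-image is not new at $\text{Rx}_1$. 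Maintaining bases of $U^{(1)}(t)\cap\ker\pi$ and of $U^{(3)}(t)$ simultaneously as $t$ grows, one shows each such new $\ker\pi$-direction at $\text{Rx}_1$ can be traced to a slot in which $\text{Rx}_3$'s contribution to $U^{(3)}$ was a ``pure message-$1$'' direction (a gain of $U^{(3)}\cap\ker\pi$) rather than a message-$2$ direction — exactly a slot with $\Delta c=1,\Delta b=0$, hence $\Delta\Phi=1$ — and that distinct debits are traced to distinct such slots. Summing $\Delta\Phi\ge 0$ over all unmatched slots and pairing off the matched debit/credit slots gives $\Phi(n)\ge 0$, which is the claim; and since every genericity statement held with probability one, the bound holds almost surely.

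I expect the crux — and the main obstacle — to be precisely this matching. Because $\text{Rx}_1$'s received vectors need not be generic in the per-slot spaces $S_t$ (in contrast to $\text{Rx}_3$'s, which are generic by the delayed-CSIT input), one cannot simply compare $\dim U^{(1)}(t)$ against $\dim U^{(3)}(t)$ slot by slot — indeed $\dim U^{(1)}$ can strictly exceed $\dim U^{(3)}=c$ — so the accounting has to be done at the level of the $\ker\pi$-components and carried out globally, keeping careful track of which slots of the $\text{Rx}_3$-history are ``spent'' by each new message-$1$ dimension that appears at $\text{Rx}_1$.
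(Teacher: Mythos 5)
Your temporal setup is sound and coincides with the paper's first ingredient: the quantities $a(t),b(t),c(t)$, the identification $a(t)=\dim(U^{(1)}(t)\cap\ker\pi)$, and the two genericity facts $\Delta c(t)=\mathbf 1[S_t\not\subseteq U^{(3)}(t-1)]$ and $\Delta b(t)\le \Delta c(t)$ (a.s., for all $t$) are exactly what the paper establishes via Lemma \ref{DCSIT} and its variant, using only $I_3=D$. But the proof stops precisely where the content of Lemma \ref{MainIneq} begins: the ``charging'' step --- matching every slot with $\Delta c=0,\Delta a=1$ injectively to an earlier slot with $\Delta c=1$ and $\Delta a+\Delta b\le 1$ --- is only described as a mechanism you ``envisage,'' and you yourself flag it as the main obstacle. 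Beyond being unproven, a slot-injective matching is also the wrong granularity: your increment analysis alone does not exclude, say, two debit slots against a single credit slot of surplus two; what excludes such configurations is a \emph{global} dimension count (the rows received by $\text{Rx}_1$ in debit slots are linearly independent, since each is new at $\text{Rx}_1$, and they all lie inside $\text{Rx}_3$'s final row span $U^{(3)}(n)$, so their number and their $\ker\pi$-excess are bounded by $c(n)$ and $c(n)-b(n)$). That aggregate argument, not a refinement of the per-slot bookkeeping, is the missing piece.

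For comparison, the paper closes the gap by pivoting on the single count $|\bm{\mathcal T_2}|$, the number of slots in which $\text{Rx}_1$'s received rank increases while the newly received row already lies in $\text{rowspan}[\bold{G}_3^{t-1}[\bold{V}_1^{t-1}\ \bold{V}_2^{t-1}]]$, and proves two aggregate inequalities whose sum is the lemma: Lemma \ref{m1m3bound}, $\text{rank}[\bold{G}_1^n[\bold{V}_1^n\ \bold{V}_2^n]]-|\bm{\mathcal T_2}|\le \text{rank}[\bold{G}_3^n[\bold{V}_1^n\ \bold{V}_2^n]]$, which is your genericity step applied to the slots where $c$ stagnates; and Lemma \ref{T4bound}, $|\bm{\mathcal T_2}|-\text{rank}[\bold{G}_1^n\bold{V}_2^n]\le \text{rank}[\bold{G}_3^n[\bold{V}_1^n\ \bold{V}_2^n]]-\text{rank}[\bold{G}_3^n\bold{V}_2^n]$, obtained by noting that the $|\bm{\mathcal T_2}|$ rows are independent and contained in $U^{(3)}(n)$ and then invoking the kernel-dimension identity (Claim \ref{dimrank}) to lower-bound the ``pure message-1'' dimensions inside $U^{(3)}(n)$. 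Your debit slots form a subset of $\bm{\mathcal T_2}$, so if you wish to complete your route, replace the injective slot matching by these two aggregate bounds (or prove statements equivalent to them); as written, the proposal has a genuine gap at its central step.
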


\begin{remark}
The R.H.S. of Interference Decomposition Bound represents the dimension of interference caused at $\text{Rx}_3$, which supplies delayed CSIT, by the messages intended for $\text{Rx}_{1}, \text{Rx}_{2}$. 
On the other hand, the third term on the L.H.S.  (i.e. $ \Text{rank}[ \bold{G}_{3}^n \bold{V}_{2}^{n} ] $) is the dimension of the remaining interference at $\text{Rx}_3$ after removing the contribution of the message of $\text{Rx}_{1}$;
and the first two terms (i.e. $ \Text{rank}[ \bold{G}_{1}^n [\bold{V}_{1}^{n} \quad \bold{V}_{2}^{n}] ]  - \Text{rank}[ \bold{G}_{1}^n  \bold{V}_{2}^{n} ] $) can be shown by (\ref{decode}) and sub-modularity of rank (stated in Lemma \ref{ranksubmod}) to equal $ \Text{rank}[ \bold{G}_{1}^n \bold{V}_{1}^{n}  ] $, which is the dimension of message of $\text{Rx}_{1}$.
Hence, Interference Decomposition Bound provides an inequality which connects  the  dimension of interference at a receiver  to the average dimension of its constituents.
Note that  statement of Lemma \ref{MainIneq} does not assume any specific CSIT with respect to any receiver except $\text{Rx}_{3}$.
\end{remark}

The second main ingredient, called MIMO Rank Ratio Inequality for BC,   provides a lower bound on the dimension of received signal at a receiver supplying delayed CSIT. It  is  stated below; and  its proof
is provided in Appendix \ref{AppMIMORRI}.

\begin{lemma} \label{MIMORRI}
{\bf (MIMO Rank Ratio Inequality for BC)}
Consider $k=3$, and a linear coding strategy $f^{(n)} $, with corresponding $\bold{V}_{1}^{n},\bold{V}_{2}^{n} , \bold{V}_{3}^{n} $ as defined in (\ref{coding}).
If $I_3= D$ (i.e., if $\text{Rx}_3$ supplies delayed CSIT), then, for  each beamforming matrix $\bold{V}_{i}^{n}$, where  $i\in \{1,2,3\}$, and each $\ell\in \{1,2,3\}$, we have
\begin{align}
& \frac{ \Text{rank}[ [\bold{G}_{\ell}^n;  \bold{G}_{3}^n]\bold{V}_{i}^{n} ]  }{2}  \stackrel{a.s.}{\leq}\Text{rank}[ \bold{G}_{3}^n \bold{V}_{i}^{n} ],
\end{align}
where $[\bold{G}_{\ell}^n;  \bold{G}_{3}^n]$ denotes the column concatenation of  matrices $\bold{G}_{\ell}^n$ and  $ \bold{G}_{3}^n$.
\end{lemma}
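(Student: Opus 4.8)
The plan is to reduce the statement to a purely algebraic fact about the interaction between a fixed (delayed-CSIT-measurable) precoder $\bold{V}_i^n$ and a pair of independently drawn block-diagonal channel matrices $\bold{G}_\ell^n$ and $\bold{G}_3^n$. The key point is that, conditioned on everything the transmitter knows when it forms $\bold{V}_i^n$, the channel realizations $\{\vec{\bold{g}}_3(t)\}_{t=1}^n$ are still i.i.d.\ draws from a continuous distribution, and in particular are statistically independent of $\bold{G}_\ell^n$ and of $\bold{V}_i^n$ itself (since $I_3 = D$ means $\bold{V}_i(t)$ depends only on $\bm{\mathcal{\tilde G}}^t$, which contains $\bold{G}_3^{t-1}$ but not $\vec{\bold{g}}_3(t)$). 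First I would set up a time-ordered / inductive analysis: let $r_\ell(t)$ and $r_3(t)$ denote the ranks accumulated up to time $t$ of $\bold{G}_\ell^t \bold{V}_i^t$ and $\bold{G}_3^t \bold{V}_i^t$ respectively, and track the increments $\Delta_\ell(t) = r_\ell(t) - r_\ell(t-1)$ and $\Delta_3(t) = r_3(t) - r_3(t-1)$, as well as the joint rank $r_{\ell,3}(t) = \Text{rank}[[\bold{G}_\ell^t; \bold{G}_3^t]\bold{V}_i^t]$ with increment $\Delta_{\ell,3}(t)$. Since each receiver has a single antenna, every increment lies in $\{0,1\}$ per time slot, and $\Delta_{\ell,3}(t) \le \Delta_\ell(t) + \Delta_3(t) \le 2$.

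The heart of the argument is the following claim, to be proved slot-by-slot: whenever $\Delta_{\ell,3}(t) \ge 1$ (i.e., at least one of $\text{Rx}_\ell,\text{Rx}_3$ gains a dimension at time $t$ from the columns of $\bold{V}_i$), then almost surely $\Delta_3(t) \ge 1$ as well. The reason is genericity: at time $t$, the transmit directions $\bold{V}_i(t)$ and the past are fixed, and the fresh channel vector $\vec{\bold{g}}_3(t)$ is drawn from a continuous distribution independently of them. If the new transmitted subspace at slot $t$ is not already contained in the span of what $\text{Rx}_3$ has received, then a generic $\vec{\bold{g}}_3(t)$ will, with probability $1$, project that new direction outside $\text{colspan}(\bold{G}_3^{t-1}\bold{V}_i^{t-1})$ — whereas the event "$\text{Rx}_\ell$ gains a dimension but $\text{Rx}_3$ does not" forces $\vec{\bold{g}}_3(t)$ to lie in a proper (measure-zero) affine subspace determined by the already-fixed data. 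Summing this slot-wise inequality over $t = 1,\dots,n$ gives $\sum_t \Delta_3(t) \ge \sum_t \Delta_{\ell,3}(t)$ in the "one-sided" direction; combined with the trivial $\Delta_{\ell,3}(t) \le 2\,\Delta_3(t)$ whenever $\Delta_3(t)=1$ and $\Delta_{\ell,3}(t) \le 2\cdot 0$ is impossible once we've shown $\Delta_3(t)\ge 1$ on the support of $\Delta_{\ell,3}(t)\ge 1$, we obtain $r_{\ell,3}(n) \le 2\, r_3(n)$, which is exactly the claimed inequality after dividing by $2$. (One has to be slightly careful to phrase the per-slot bound as $\Delta_{\ell,3}(t) \le 2\Delta_3(t)$ directly; the genericity claim is precisely what rules out the bad case $\Delta_{\ell,3}(t)=1,\ \Delta_3(t)=0$ and the bad case $\Delta_{\ell,3}(t)=2,\ \Delta_3(t)=0$.)

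I expect the main obstacle to be making the genericity step fully rigorous in the block-diagonal, time-extended setting: one must argue that the "bad" event at slot $t$ is contained in a finite union of zero-measure sets described by polynomial equations whose coefficients are measurable with respect to $\bm{\mathcal{\tilde G}}^t$ and the past channel realizations $\{\vec{\bold{g}}_\ell(s),\vec{\bold{g}}_3(s)\}_{s<t}$, and then integrate out using independence of $\vec{\bold{g}}_3(t)$ together with Fubini, and finally take a union bound over the finitely many slots $t=1,\dots,n$ to conclude "almost surely" for the whole block. A secondary subtlety is that the precoder $\bold{V}_\cdot^n$ at later slots may depend on $\bold{G}_3^{t-1}$, so one cannot simply fix the entire precoder and then draw all of $\bold{G}_3^n$ at once; the time-ordered induction (conditioning at each slot on the strict past) is what circumvents this, and the case distinctions on $\ell \in \{3\}$ versus $\ell \ne 3$ — the former being the trivial $\frac{\Text{rank}[\bold{G}_3^n\bold{V}_i^n]}{2} \le \Text{rank}[\bold{G}_3^n\bold{V}_i^n]$ — should be dispatched at the outset.
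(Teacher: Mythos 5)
Your proposal is correct and follows essentially the same route as the paper's proof: a slot-by-slot analysis of rank increments, where the genericity/independence of the fresh channel vector $\vec{\bold{g}}_3(t)$ under delayed CSIT (the paper's Lemma \ref{DCSIT}-type argument, applied with conditioning on the strict past at each slot) shows that if $\Text{rank}[\bold{G}_3^t\bold{V}_i^t]$ does not increase then the whole rowspan of $\bold{V}_i(t)$ already lies in $\text{Rx}_3$'s past span, forcing the joint rank increment to vanish as well, i.e.\ the per-slot bound $\Delta_{\ell,3}(t)\le 2\Delta_3(t)$, which telescopes to the claim. The paper packages this with a ``conditional rank'' notation and proves the general $k$-user version (Lemma \ref{MIMORRIk}) by averaging over the delayed receivers, but for the $j=1$ case your increments argument is the same proof.
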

\begin{remark}
Lemma \ref{MIMORRI} implies that for any transmit signal $\bold{V}_{i}^{n}$, the corresponding received signal dimension  at a receiver with delayed CSIT is at least half of the corresponding received signal dimension at any other receiver.
Note that  statement of Lemma \ref{MIMORRI} does not assume any specific CSIT with respect to any receiver except $\text{Rx}_{3}$.
\end{remark}

The third main ingredient of converse, Least Alignment Lemma,  demonstrates that when using linear schemes, once the transmitter has no CSIT with respect to a certain receiver,  the least amount of alignment will occur at that receiver, meaning that transmit signals will occupy the maximal signal dimensions at that receiver.
The lemma is stated below; and its proof
is provided in Appendix \ref{LALproof}.
\begin{lemma}\label{LAL}
{\bf (Least Alignment Lemma)}   
Consider $k=3$, and a linear coding strategy $f^{(n)} $, with corresponding $\bold{V}_{1}^{n},\bold{V}_{2}^{n} , \bold{V}_{3}^{n} $ as defined in (\ref{coding}).
For $\mathcal S\subseteq \{1,2,3\}$ let $\bold{V}^n \triangleq [\cup_{i\in \mathcal S} \bold{V}_i^n]$ denote the row concatenation of the precoding matrices $\bold{V}_i^n$, where $i\in \mathcal S$.
 If $I_3=N$ (i.e., if $\text{Rx}_3$ supplies no CSIT),
\begin{align*}
 \forall \ell\in \{1,2, 3\},\qquad    \Text{ rank} & \left[ \bold{G}_{\ell}^n \bold{V}^n  \right] \stackrel{a.s.}{\leq} \Text{ rank} \left[\bold{G}_{3}^n \bold{V}^n  \right].
\end{align*}
\end{lemma}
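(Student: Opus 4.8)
The plan is to use the one structural fact that distinguishes $\text{Rx}_3$ here: since $I_3=N$, we have $3\notin\mathcal P\cup\mathcal D$, so by the hybrid‑CSIT constraint (\ref{coding}) every block $\bold{V}_j(t)=f_{j,t}^{(n)}(\bm{\mathcal{\tilde G}}^t)$ — and hence the whole matrix $\bold{V}^n=[\cup_{i\in\mathcal S}\bold{V}_i^n]$ — is a deterministic function of $\bm{\mathcal{\tilde G}}^n$, which contains no entry of the channel $\bold{G}_3^n$. Thus $\bold{V}^n$ is independent of $\bold{G}_3^n$, and conditioning on a realization of $\bm{\mathcal{\tilde G}}^n$ fixes $\bold{V}^n$ while leaving $\bold{G}_3^n$ with its original continuous, i.i.d. law. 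The whole argument is uniform in $\mathcal S$, so I fix $\mathcal S$ and work with $\bold{V}^n$. I would next record the block structure: writing $\bold{V}^n=[W(1);\dots;W(n)]$ with $W(t)\in\mathbb C^{m\times M}$ the rows $1+(t-1)m,\dots,tm$ of $\bold{V}^n$, the block‑diagonal form of $\bold{G}_\ell^n$ gives that the $t$‑th row of $\bold{G}_\ell^n\bold{V}^n$ equals $\vec{\bold{g}}_\ell(t)W(t)$, which lies in $\text{rowspan}(W(t))$.

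With $\bm{\mathcal{\tilde G}}^n$ fixed (so all $W(t)$ are fixed), define the deterministic quantity
\[
M^\ast \triangleq \max\bigl\{\ \Text{rank}[\vec u_1;\dots;\vec u_n]\ :\ \vec u_t\in\text{rowspan}(W(t)),\ t=1,\dots,n\ \bigr\}.
\]
Because each row of $\bold{G}_\ell^n\bold{V}^n$ lies in the corresponding $\text{rowspan}(W(t))$, we get $\Text{rank}[\bold{G}_\ell^n\bold{V}^n]\le M^\ast$ for every $\ell\in\{1,2,3\}$ with no randomness involved. It then remains to show $\Text{rank}[\bold{G}_3^n\bold{V}^n]=M^\ast$ almost surely. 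As $\vec{\bold{g}}_3(t)$ ranges over $\mathbb C^{1\times m}$, the product $\vec{\bold{g}}_3(t)W(t)$ ranges over all of $\text{rowspan}(W(t))$, so there is a choice of $\vec{\bold{g}}_3(1),\dots,\vec{\bold{g}}_3(n)$ that attains $M^\ast$; consequently some $M^\ast\times M^\ast$ minor of $\bold{G}_3^n\bold{V}^n$, regarded as a polynomial in the entries of $(\vec{\bold{g}}_3(1),\dots,\vec{\bold{g}}_3(n))$ (with the fixed $W(t)$'s as coefficients), is not identically zero. Since the zero set of a nonzero polynomial in finitely many complex variables has Lebesgue measure zero, and the $\vec{\bold{g}}_3(t)$ have a continuous joint distribution independent of the conditioning, that minor is nonzero with probability one, giving $\Text{rank}[\bold{G}_3^n\bold{V}^n]\ge M^\ast$ a.s.; combined with the upper bound at $\ell=3$ this is an equality a.s. Hence, conditionally a.s., $\Text{rank}[\bold{G}_\ell^n\bold{V}^n]\le M^\ast=\Text{rank}[\bold{G}_3^n\bold{V}^n]$, and averaging over the conditioning on $\bm{\mathcal{\tilde G}}^n$ yields the lemma.

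The main obstacle is the last step: converting the intuition ``a linear image built by a channel that the precoder could not adapt to has maximal rank'' into a rigorous almost‑sure statement. Care is needed to (i) arrange the conditioning so that the independence of $\bold{V}^n$ and $\bold{G}_3^n$ is exact, and (ii) justify that achieving $M^\ast$ for one channel realization forces the relevant $M^\ast\times M^\ast$ minor to be a nonvanishing polynomial, so that it is nonzero with probability one under the continuous channel law. The rest — the block‑diagonal bookkeeping, the observation that each row of $\bold{G}_\ell^n\bold{V}^n$ lies in a $\text{rowspan}(W(t))$, and the minor characterization of rank — is routine linear algebra, and none of the earlier lemmas is needed.
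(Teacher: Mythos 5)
Your proof is correct, and it rests on the same two pillars as the paper's argument: the observation that $I_3=N$ makes $\bold{V}^n$ a function of $\bm{\mathcal {\tilde G}}^n$ alone and hence independent of $\bold{G}_3^n$, and the fact that a not-identically-zero multivariate polynomial (a minor of the received-signal matrix, viewed in the entries of the unknown channel) vanishes only on a measure-zero set (the paper's Lemma \ref{roots}). The packaging, however, is different. The paper's proof in Appendix \ref{LALproof} never identifies a maximal dimension: it covers the bad event $\{\Text{rank}[\bold{G}_{\ell}^n\bold{V}^n]>\Text{rank}[\bold{G}_{3}^n\bold{V}^n]\}$ by a finite union over index pairs $(I_1,I_2)$ of events in which the $(I_1,I_2)$ minor of $\bold{G}_{\ell}^n\bold{V}^n$ is nonzero while the \emph{same} minor of $\bold{G}_{3}^n\bold{V}^n$ vanishes, transferring non-vanishing from $\bold{G}_{\ell}^n$ to $\bold{G}_{3}^n$ through the generic diagonal matrix $X^n$. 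You instead condition on $\bm{\mathcal {\tilde G}}^n$, introduce the per-realization maximum $M^\ast$ of the rank over rows constrained to $\Text{rowspan}(W(t))$, note that every receiver's rank is at most $M^\ast$ deterministically, and show that $\bold{G}_{3}^n\bold{V}^n$ attains $M^\ast$ almost surely via one non-vanishing minor. Your route makes explicit the conditioning/independence step that the paper leaves implicit (it is exactly where $I_3=N$ enters), and it yields the slightly stronger conclusion that the no-CSIT receiver occupies the maximal possible signal dimension $M^\ast$, matching the informal statement in the paper's prose; the paper's route avoids defining $M^\ast$ and carries over verbatim to the $k$-user version (Lemma \ref{LALk}) and to the multi-antenna extension of Remark \ref{LALExt}, which your argument also supports with only notational changes.
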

\begin{remark}
Note that the statement of Lemma \ref{LAL} does not assume any specific CSIT with respect to any receiver except $\text{Rx}_{3}$.
\end{remark}

\begin{remark}
Lemma \ref{LAL} can be seen as a variation of the corresponding result in the context of secrecy problems in \cite{OursISIT2014, OursGlobecom2014}. 
Moreover, as shown in \cite{GholamiJafar}, Least Alignment Lemma also holds for non-linear schemes; and  for this extension the reader is referred to \cite{GholamiJafar}.
\end{remark}

We now prove the converse for two representative CSIT configurations  $PDD$ and $PDN$, highlighting the applications of the above three lemmas.
Converse proofs for other CSIT configurations can be found in Appendix \ref{convmainthm3}.

\subsubsection{Proof of Converse for $PDD$} \label{PDDconv}

According to Table \ref{tab:template}, it is sufficient to show that  
$\frac{d_1}{2} + \frac{d_2}{4} + d_3 \leq 1$ and $\frac{d_1}{3} + \frac{d_2}{2} + d_3 \leq 1$; since the other two inequalities (i.e. $\frac{d_1}{2} + d_2 + \frac{d_3}{4} \leq 1$, and $\frac{d_1}{3} + d_2 + \frac{d_3}{2} \leq 1 $) can be proven similarly using symmetry.
Moreover, the bound $\frac{d_1}{3} + \frac{d_2}{2} + d_3 \leq 1$ follows directly from the existing state-of-the-art arguments used in \cite{ Mohanty, MAT}.
Henceforth, we focus on proving $\frac{d_1}{2} + \frac{d_2}{4} + d_3 \leq 1$.

Suppose $(d_1, d_2, d_3)$ degrees-of-freedom are linearly achievable. 
Hence, by Definition \ref{DoFdef}   there exists a sequence $\{  f^{(n)} \}_{n=1}^{\infty}$ such that for each $n$ and the corresponding choice of $(m_{1}(n),m_{2}(n), m_{3}(n))$, $(\bold{V}_{1}^n,\bold{V}_{2}^n, \bold{V}_{3}^n)$ satisfy the  conditions in (\ref{decode}) and (\ref{DoFcond}). 
Therefore, in order to prove $\frac{d_1}{2} + \frac{d_2}{4} + d_3 \leq 1$, it is sufficient to show that
\begin{equation} \label{PDDobj}
\frac{m_1(n)}{2} + \frac{m_2(n)}{4} + m_3(n) \stackrel{a.s.}{\leq} n.
\end{equation}

Note that since in the $PDD$ configuration receiver 3 supplies delayed CSIT, we can invoke Lemma \ref{MainIneq}, which states that:
\begin{eqnarray}
 2\times \Text{rank}[ \bold{G}_{3}^n [\bold{V}_1^n \hspace{6pt} \bold{V}_2^n]]
 & \stackrel{a.s.}{\geq} & \Text{rank}[ \bold{G}_{1}^n [\bold{V}_{1}^{n} \quad \bold{V}_{2}^{n}] ]  - \Text{rank}[ \bold{G}_{1}^n  \bold{V}_{2}^{n} ]  + \Text{rank}[ \bold{G}_{3}^n \bold{V}_{2}^{n} ]  \nonumber\\
& \substack{(\ref{decode}) \\ a.s.\\=}& \Text{rank}[ \bold{G}_{1}^n \bold{V}_1^n] + \Text{rank}[ \bold{G}_{3}^n \bold{V}_2^n].\label{conv1}
\end{eqnarray}
We now further bound each side of the above inequality.
We first upper bound the left-hand-side of the above inequality:
\begin{align}
 \Text{rank}[ \bold{G}_{3}^n [\bold{V}_1^n \quad \bold{V}_2^n]]\hspace{5pt}  \substack{\text{(\ref{decode})} \\ a.s. \\=}\hspace{5pt} & \Text{rank}[ \bold{G}_{3}^n [\bold{V}_1^n \quad \bold{V}_2^n \quad \bold{V}_3^n]] - m_3(n)   \stackrel{}{\leq}   n - m_3(n) . \label{conv2}
\end{align}
On the other hand, for the right-hand-side of (\ref{conv1}) we have
\begin{eqnarray}
  \Text{rank}[ \bold{G}_{1}^n \bold{V}_1^n] + \Text{rank}[ \bold{G}_{3}^n \bold{V}_2^n] \hspace{8pt} &\substack{\text{(\ref{decode})} \\ a.s. \\ =}&  m_1(n)+ \Text{rank}[ \bold{G}_{3}^n \bold{V}_2^n] \substack{\text{(Lemma \ref{MIMORRI})} \\ a.s. \\ \geq} \hspace{8pt} m_1(n)+ \frac{1}{2} \Text{rank}[[ \bold{G}_{2}^n; \bold{G}_{3}^n] \bold{V}_2^n ] \nonumber\\
&\substack{ \geq} &  m_1(n)+ \frac{1}{2} \Text{rank}[ \bold{G}_{2}^n \bold{V}_2^n] 
 \hspace{8pt} \substack{\text{(\ref{decode})} \\ a.s. \\ =} \hspace{8pt} m_1(n)+ \frac{1}{2} m_2(n).\label{conv3}
\end{eqnarray}
Hence, by considering (\ref{conv1})-(\ref{conv3}), we obtain
\begin{equation}\label{eq1}
 m_1(n) + \frac{1}{2} m_2(n) + 2m_3(n) \stackrel{a.s.}{\leq} 2n, 
\end{equation}
which proves (\ref{PDDobj}), and
therefore, completes the converse proof for $PDD$.

\begin{remark} \label{PDDtoPPD}
Note that in order to prove $\frac{d_1}{2} + \frac{d_2}{4} + d_3 \leq 1$ for $PDD$, we did not rely on  any specific CSIT assumption with respect to $\text{Rx}_2$. Therefore, the bound $\frac{d_1}{2} + \frac{d_2}{4} + d_3 \leq 1$ also holds for the case of $PPD$. 
Moreover, note that by symmetry one can conclude that $\frac{d_1}{4} + \frac{d_2}{2} + d_3 \leq 1$ also holds for  $PPD$.
Hence, since  according to Table \ref{tab:template}, $\frac{d_1}{2} + \frac{d_2}{4} + d_3 \leq 1$ and $\frac{d_1}{4} + \frac{d_2}{2} + d_3 \leq 1$ constitute the LDoF region for $PPD$, 
the above derivations suffice in proving the converse for the CSIT configuration $PPD$ as well.
\end{remark}

\subsubsection{Proof of Converse for $PDN$} \label{PDNconv}
According to Table \ref{tab:template}, it is sufficient to show that 
$\frac{d_1}{2} + d_2 + d_3 \leq 1$ and $d_1+ d_3 \leq 1$.
Suppose $(d_1, d_2, d_3)$ degrees-of-freedom are linearly achievable. 
Hence, by Definition \ref{DoFdef}   there exists a sequence $\{  f^{(n)} \}_{n=1}^{\infty}$ such that for each $n$ and the corresponding choice of $(m_{1}(n),m_{2}(n), m_{3}(n))$, $(\bold{V}_{1}^n,\bold{V}_{2}^n, \bold{V}_{3}^n)$ satisfy the conditions in (\ref{decode}) and (\ref{DoFcond}).
Therefore, in order to prove $\frac{d_1}{2} + d_2 + d_3 \leq 1$ and $d_1+ d_3 \leq 1$, it is sufficient to show that
\begin{equation} \label{PDNobj}
\frac{m_1(n)}{2} + m_2(n) + m_3(n) \stackrel{a.s.}{\leq} n,
\end{equation}
and
\begin{equation} \label{PDNobj2}
 m_1(n) + m_3(n) \stackrel{a.s.}{\leq} n.
\end{equation}

We have,
\begin{eqnarray}
 \frac{m_1(n)}{2} + m_2(n) + m_3(n)   & \substack{\text{(\ref{decode})} \\ a.s. \\ =}& \frac{\Text{rank}[ \bold{G}_{1}^n \bold{V}_1^n]}{2}  + m_2(n) + m_3(n) \nonumber\\
& \substack{\text{(\ref{decode})} \\ a.s. \\ =} &  \frac{\Text{rank}[ \bold{G}_{1}^n \bold{V}_1^n]}{2} + \Text{rank}[ \bold{G}_{2}^n [\bold{V}_1^n \quad \bold{V}_2^n  \quad \bold{V}_3^n]] -  \Text{rank}[ \bold{G}_{2}^n [\bold{V}_1^n \quad \bold{V}_3^n] ] + m_3(n) \nonumber\\
 & \substack{\text{(a)}  \\ \leq} &   \frac{\Text{rank}[ \bold{G}_{1}^n \bold{V}_1^n]}{2} + \Text{rank}[ \bold{G}_{2}^n [\bold{V}_1^n \quad \bold{V}_2^n]]  -  \Text{rank}[ \bold{G}_{2}^n \bold{V}_1^n ] + m_3(n) \nonumber\\
& \substack{\leq} & \frac{\Text{rank}[ [\bold{G}_{1}^n; \bold{G}_{2}^n  ]\bold{V}_1^n]}{2} + \Text{rank}[ \bold{G}_{2}^n [\bold{V}_1^n \quad \bold{V}_2^n]]  -  \Text{rank}[ \bold{G}_{2}^n \bold{V}_1^n ] + m_3(n) \nonumber\\
 & \substack{ (b) \\ a.s.  \\ \leq}  &   \Text{rank}[ \bold{G}_{2}^n [\bold{V}_1^n \quad \bold{V}_2^n]]    + m_3(n) \nonumber\\
& \substack{\text{(\ref{decode})} \\ a.s. \\ =} &   \Text{rank}[ \bold{G}_{2}^n [\bold{V}_1^n \quad \bold{V}_2^n]] + \Text{rank}[ \bold{G}_{3}^n [\bold{V}_1^n \quad \bold{V}_2^n  \quad \bold{V}_3^n]]   -  \Text{rank}[ \bold{G}_{3}^n [\bold{V}_1^n \quad \bold{V}_2^n] ]  \nonumber\\
& \substack{ (\text{Lemma \ref{LAL}}) \\ a.s. \\ \leq} & \Text{rank}[ \bold{G}_{3}^n [\bold{V}_1^n \quad \bold{V}_2^n  \quad \bold{V}_3^n]]  \leq n, \nonumber
\end{eqnarray}
where (a) follows from the sub-modularity of rank of matrices (see Lemma \ref{ranksubmod} stated below);
and (b) follows from Lemma \ref{MIMORRI} applied to  $\text{Rx}_2$ as the receiver which supplies delayed CSIT.
Therefore, the proof of (\ref{PDNobj}) is complete.
We now prove (\ref{PDNobj2}).
\begin{eqnarray}
 m_1(n) + m_3(n)   & \substack{ (\ref{decode})\\a.s. \\ =} &  
\Text{rank}[ \bold{G}_{1}^n \bold{V}_1^n]  + m_3(n) \nonumber\\
&  \substack{\text{(\ref{decode})} \\ a.s. \\ =} &   \Text{rank}[ \bold{G}_{1}^n \bold{V}_1^n] + \Text{rank}[ \bold{G}_{3}^n [\bold{V}_1^n \quad \bold{V}_2^n  \quad \bold{V}_3^n]]  -  \Text{rank}[ \bold{G}_{3}^n [\bold{V}_1^n \quad \bold{V}_2^n] ] \nonumber\\
 &  \substack{\text{(Lemma \ref{ranksubmod})}  \\ \leq} &  \Text{rank}[ \bold{G}_{1}^n \bold{V}_1^n]+ \Text{rank}[ \bold{G}_{3}^n [\bold{V}_1^n  \bold{V}_3^n]]  -  \Text{rank}[ \bold{G}_{3}^n \bold{V}_1^n ]\nonumber\\
&    \substack{ (\text{Lemma \ref{LAL})} \\ a.s.  \\ \leq} &  \Text{rank}[ \bold{G}_{3}^n [\bold{V}_1^n \quad \bold{V}_3^n]]    \leq n, \nonumber
\end{eqnarray}
which completes the proof of (\ref{PDNobj2}).
We now state the sub-modularity of rank of matrices (see ~\cite{Lovasz} for more details).

\begin{lemma} \label{ranksubmod}
{\bf (Sub-modularity of rank)} Consider a matrix $A^{m\times n} \in \mathbb C^{m\times n}$.
Let $A_{\mathcal S}$, ${\mathcal S} \subseteq \{  1,2 , \ldots , n\}$ denote the sub-matrix of $A$ created by those columns in $A$ which have their indices in ${\mathcal S}$.
Then, for any ${\mathcal S}_1,{\mathcal S}_2 \subseteq \{  1,2 , \ldots , n\}$,
$\Text{rank}[A_{{\mathcal S}_1}] + \Text{rank}[A_{{\mathcal S}_2}] \geq \Text{rank}[A_{{\mathcal S}_1\cap {\mathcal S}_2}] + \Text{rank}[A_{{\mathcal S}_1\cup {\mathcal S}_2}].$
\end{lemma}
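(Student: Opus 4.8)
The plan is to translate the claim about ranks of column submatrices into the classical modular (inclusion--exclusion) identity for dimensions of finite-dimensional subspaces, and then to use a single one-sided inclusion to turn that identity into the stated inequality.

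First I would recall that for any index set $\mathcal S\subseteq\{1,\ldots,n\}$ we have $\Text{rank}[A_{\mathcal S}]=\dim\mathrm{colspan}(A_{\mathcal S})$, so the statement is purely about the two subspaces $U\triangleq\mathrm{colspan}(A_{\mathcal S_1})$ and $V\triangleq\mathrm{colspan}(A_{\mathcal S_2})$ of $\mathbb C^m$. Two elementary facts about column spans are then what the proof rests on: (i) $\mathrm{colspan}(A_{\mathcal S_1\cup\mathcal S_2})=U+V$, since the columns indexed by $\mathcal S_1\cup\mathcal S_2$ are exactly the union of the two generating sets and the span of a union of sets equals the sum of their spans; and (ii) $\mathrm{colspan}(A_{\mathcal S_1\cap\mathcal S_2})\subseteq U\cap V$, since every column indexed by $\mathcal S_1\cap\mathcal S_2$ is simultaneously a column of $A_{\mathcal S_1}$ (hence lies in $U$) and a column of $A_{\mathcal S_2}$ (hence lies in $V$), so the span of those columns is contained in $U\cap V$.

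Next I would invoke the modular law $\dim U+\dim V=\dim(U+V)+\dim(U\cap V)$ (itself a consequence of the rank--nullity theorem applied to the addition map $U\times V\to U+V$, whose kernel is isomorphic to $U\cap V$; see \cite{Lovasz}), and combine it with (i) and (ii):
\begin{align*}
\Text{rank}[A_{\mathcal S_1}]+\Text{rank}[A_{\mathcal S_2}] &= \dim U+\dim V = \dim(U+V)+\dim(U\cap V)\\
&\geq \dim\mathrm{colspan}(A_{\mathcal S_1\cup\mathcal S_2})+\dim\mathrm{colspan}(A_{\mathcal S_1\cap\mathcal S_2})\\
&= \Text{rank}[A_{\mathcal S_1\cup\mathcal S_2}]+\Text{rank}[A_{\mathcal S_1\cap\mathcal S_2}],
\end{align*}
which is the desired inequality.

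There is essentially no obstacle; the only point worth flagging is that the inclusion in (ii) may be strict — linear combinations of columns lying outside $\mathcal S_1\cap\mathcal S_2$ can still land in $U\cap V$ — and this strictness is precisely why the statement is an inequality rather than an equality, in contrast with the equality in (i).
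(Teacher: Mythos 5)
Your proof is correct: identifying the four ranks with dimensions of column spans, using $\mathrm{colspan}(A_{\mathcal S_1\cup\mathcal S_2})=U+V$ and $\mathrm{colspan}(A_{\mathcal S_1\cap\mathcal S_2})\subseteq U\cap V$, and then applying the modular law $\dim U+\dim V=\dim(U+V)+\dim(U\cap V)$ gives exactly the stated inequality, and your remark about where strictness can enter is accurate. The paper itself supplies no proof of this lemma, simply deferring to the reference \cite{Lovasz}, so there is nothing to contrast with; your argument is the standard self-contained justification of the fact the paper takes as known.
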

Note that a similar statement  is true  for sub-modularity of rank with respect to the rows of a matrix, instead of the columns as stated in Lemma \ref{ranksubmod}.

\begin{figure}[t]
\centering
\includegraphics[scale=.5, trim= 10mm 80mm 10mm 75mm]{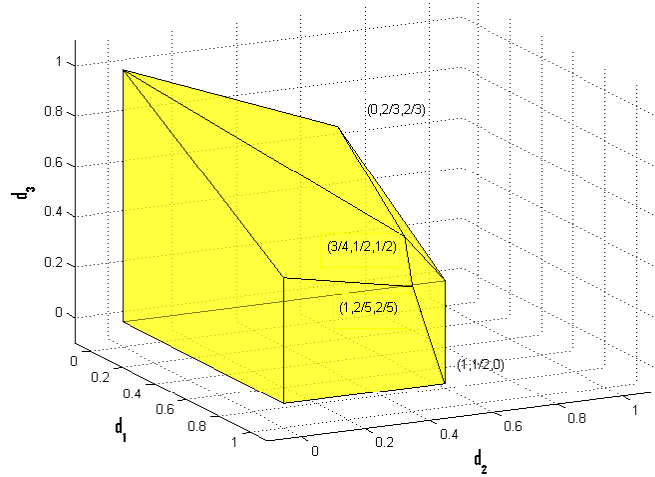}\\
\caption{LDoF Region for $PDD$.
}\label{PDDregion}
\end{figure}

\subsection{Proof of Achievability for Theorem \ref{mainthm3}} \label{PDDachsection}
The regions described in Theorem \ref{mainthm3} result in polytopes in $\mathbb R^3$; and therefore, the LDoF regions can be completely described via their extreme points. Many of such extreme points can be trivially achieved (e.g. the point $(1,1,0)$ for $PPD$); 
therefore, we only focus on the non-trivial extreme points and provide reference for each of them in Table \ref{tab:template2}.

\renewcommand{\arraystretch}{2}
\begin{table*}
\centering
\begin{tabular}{|c|c|}
\hline
$\text{CSIT States}$ & Non-trivial extreme points of the LDoF region and reference to the achievable scheme\\
\hline
PPD  & {$\begin{aligned}
& \left(1,0,\frac{1}{2}\right) ,\left(0,1,\frac{1}{2}\right) \text{ achieved in Section III-A of \cite{RaviMaddah}} \nonumber\\
&\left(1,1,\frac{1}{4}\right) \text{ achieved in Section IV-D of \cite{3UserHybrid}}
\end{aligned}$}  \\[1mm]
\hline
PDD & {$\!\begin{aligned}
&\left(1,0,\frac{1}{2}\right) ,\left(0,1,\frac{1}{2}\right) \text{ achieved in  Section III-A of \cite{RaviMaddah}} \nonumber\\
 &\left(1,\frac{2}{5},\frac{2}{5}\right) \text{ achieved  in Section IV-C of \cite{3UserHybrid}} \nonumber\\
&\left(\frac{3}{4},\frac{1}{2},\frac{1}{2}\right)  \text{ achieved  in Section \ref{PDDachsection} of this paper} \nonumber\\
&\left(0,\frac{2}{3},\frac{2}{3}\right) \text{ achieved in Section III-A of \cite{MAT}}
 \end{aligned}$} \\
\hline
PDN & {$\begin{aligned}
\left(1,\frac{1}{2},0\right) \text{ achieved in  Section III-A of \cite{RaviMaddah}} \end{aligned}$}  \\
\hline
DDD & {$\!\begin{aligned}
&\left(\frac{2}{3},\frac{2}{3},0\right), \left(\frac{2}{3},0,\frac{2}{3}\right), \left(0,\frac{2}{3},\frac{2}{3}\right) \text{ achieved in Section III-A of \cite{MAT}}  \nonumber \\
&\left(\frac{6}{11}, \frac{6}{11},\frac{6}{11}\right)  \text{ achieved in Section III-B of \cite{MAT}}  \nonumber
\end{aligned}$} \\
\hline
DDN & {$\!\begin{aligned}
\left(\frac{2}{3},\frac{2}{3},0\right) \text{ achieved in Section III-A of \cite{MAT}} \nonumber \end{aligned}$}  \\
\hline
\end{tabular}
\vspace{3mm}
\caption{Achievability results for extreme points of different configurations of hybrid CSIT for 3-user MISO BC}
\label{tab:template2}
\vspace{-15pt}
\end{table*}

The only non-trivial extreme point that has not yet been achieved in the literature according to Table \ref{tab:template2} belongs to $PDD$, and is $(\frac{3}{4},\frac{1}{2},\frac{1}{2})$.
The LDoF region suggested by Theorem \ref{mainthm3} for $PDD$ is shown in Fig. \ref{PDDregion}.
Therefore, we only  prove the achievability of $(\frac{3}{4},\frac{1}{2},\frac{1}{2})$ for $PDD$. 
The scheme is illustrated in Fig. \ref{PDDach}.
We will show how to deliver 3 symbols $(a_1,a_2,a_3)$ to $\text{Rx}_1$, 2 symbols $(b_1,b_2)$ to $\text{Rx}_2$, and 2 symbols $(c_1,c_2)$ to $\text{Rx}_3$ over 4 time slots in order to achieve $(d_1, d_2, d_3)= (\frac{3}{4},\frac{1}{2},\frac{1}{2})$.

\begin{figure}[t]
\centering
\includegraphics[scale=.34, trim= 10mm 10mm 10mm 10mm]{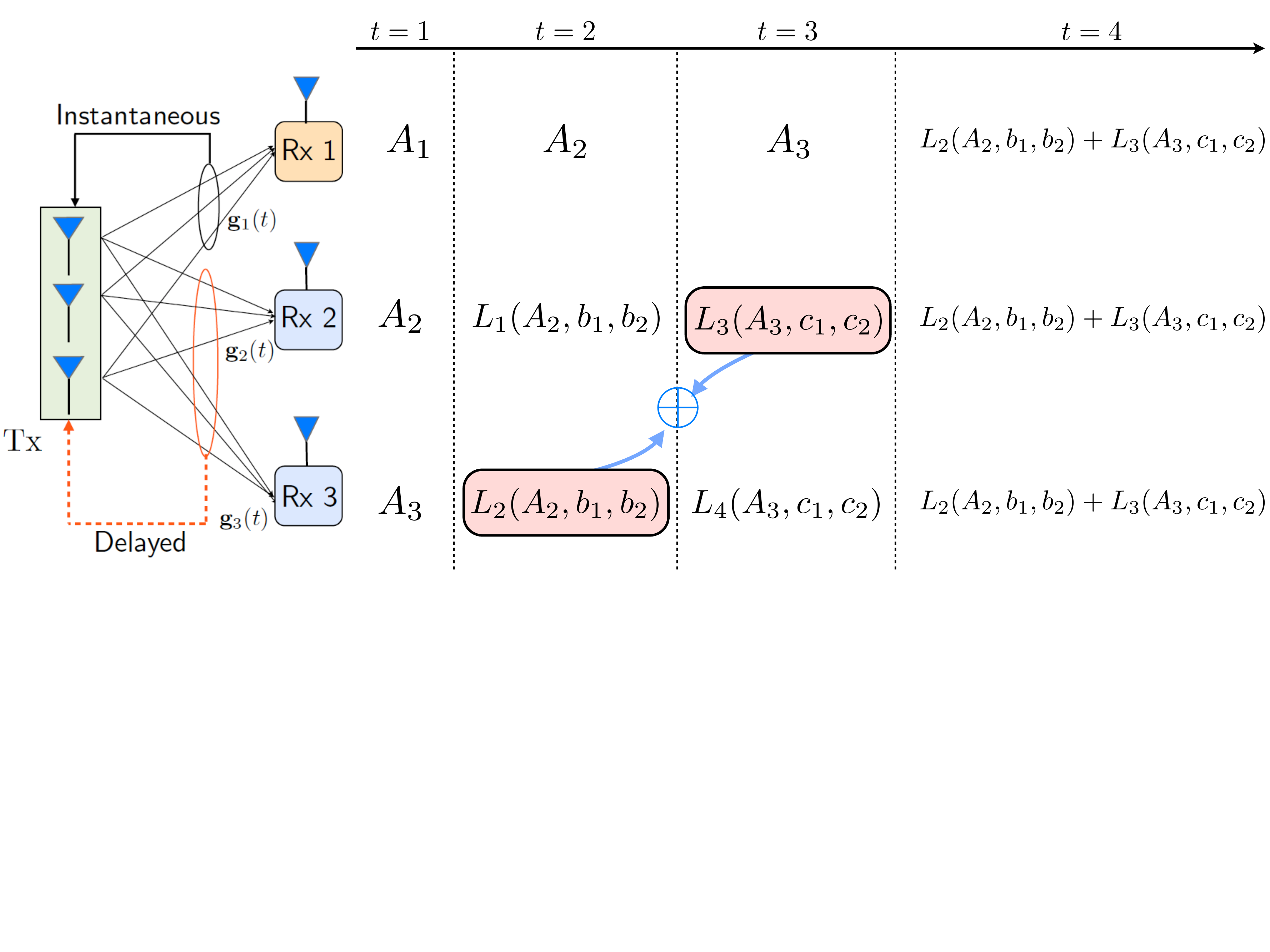}\\
\vspace{-65pt}
\caption{Achieving $(d_{1}, d_{2}, d_{3})=\left(\frac{3}{4},\frac{1}{2},\frac{1}{2}\right)$ for $ PDD$.
}\label{PDDach}
\vspace{-15pt}
\end{figure}

At $t=1$, we simply send the uncoded 3 symbols $(a_1,a_2,a_3)$, which are desired by  $\text{Rx}_1$.
Therefore, the transmit and received signals at the receivers are as follows (for the sake of DoF analysis, we ignore the additive noise):
\small
\begin{equation}
   \left[\begin{array}{c}{a_1}\vspace{-5pt} \\ { a_2} \vspace{-5pt}\\ {a_3} \end{array}\right], \quad 
\bm{y_j}(1) = \bm{\vec g_j}(1) \left[\begin{array}{c}{a_1}\vspace{-5pt} \\ { a_2}\vspace{-5pt} \\ {a_3}\end{array}\right], \quad j=1,2,3.
\end{equation}
\normalsize

Denote the linear combinations received by $\text{Rx}_1, \text{Rx}_2, \text{Rx}_3$ at $t=1$ by $ A_1, A_2,  A_3$.
Notice   that $\text{Rx}_1$ requires  $ A_2,A_3$ to be able to (almost surely) decode $(a_1,a_2,a_3)$.
Using delayed CSIT from $\text{Rx}_2,\text{Rx}_3$, transmitter can reconstruct $A_2,A_3$.

At $t=2$, the transmitter sends the symbols $A_2,b_1,b_2$ as
\small
\begin{equation}
\bm{\vec{x}}(2) = \left[\begin{array}{c}{1}\vspace{-5pt} \\ { 0}\vspace{-5pt} \\ {0}\end{array}\right] A_2 +\left[\begin{array}{c}{\bm{\vec g_1}(2)^\perp} \end{array}\right]^\top \left[\begin{array}{c}{b_1}\vspace{-5pt} \\  {b_2}\end{array}\right],
\end{equation}
where $[\bm{\vec g_1}(2)^\perp]$ is a $2\times 3$ matrix, where  $\bm{\vec g_1}(2)[\bm{\vec g_1}(2)^\perp]^\top =[0 \quad 0]$.
\normalsize
Therefore, the received signals at the $\text{Rx}_j$ is ($j=1,2,3$):
\small
\begin{equation}
\bm{y_j}(2) = \bm{\vec g_j}(2) \left (\left[\begin{array}{c}{1}\vspace{-5pt} \\ { 0}\vspace{-5pt} \\ {0} \end{array}\right] A_2 +\left[\begin{array}{c}{\bm{\vec g_1}(2)^\perp} \end{array}\right]^\top    \left[\begin{array}{c}{b_1}\vspace{-5pt} \\  {b_2}\end{array}\right]\right ).
\end{equation}
\normalsize
Note that by the end of time slot 2 $\text{Rx}_1$ is able to decode $A_2$.
We denote the linear combinations received by $ \text{Rx}_2, \text{Rx}_3$ at $t=2$ by $ L_1( A_2,  b_1,b_2),L_2( A_2,  b_1,b_2)$, respectively.

At $t=3$, the transmitted and received signals are:
\small
\begin{align*}
\bm{\vec{x}}(3) &= \left[\begin{array}{c}{1} \vspace{-5pt} \\ { 0}\vspace{-5pt} \\ {0}\end{array}\right] A_3 +\left[\begin{array}{c}{\bm{\vec g_1}(3)^\perp} \end{array}\right]^\top \left[\begin{array}{c}\vspace{-5pt}{c_1} \\  {c_2}\end{array}\right]
,\qquad \bm{y_j}(3) = \bm{\vec g_j}(3) \left (\left[\begin{array}{c}{1}\vspace{-5pt} \\ { 0}\vspace{-5pt} \\ {0}\end{array}\right] A_3 +\left[\begin{array}{c}{\bm{\vec g_1}(3)^\perp} \end{array}\right]^\top \left[\begin{array}{c}{c_1} \vspace{-5pt}\\  {c_2}\end{array}\right]\right ),
\end{align*}
\normalsize
which suggests that $\text{Rx}_1$ would be able to decode $A_3$.
We denote the linear combinations received by $ \text{Rx}_2, \text{Rx}_3$ at $t=3$ by $ L_3( A_3,  c_1,c_2),L_4( A_3,  c_1,c_2)$, respectively.

Note that if $\text{Rx}_2$ is given  $L_2( A_2,  b_1,b_2)$, it can use its past received signals (i.e., $A_2$ and $L_1( A_2,  b_1,b_2)$) together with $L_2( A_2,  b_1,b_2)$ to decode both $b_1,b_2$
Therefore, $\text{Rx}_2$ needs $L_2( A_2,  b_1,b_2)$. 
On the other hand, $\text{Rx}_2$ has access to $L_3( A_3,  c_1,c_2)$.
Similarly, $\text{Rx}_3$ needs $L_3( A_3,  c_1,c_2)$ to be able to decode both $c_1,c_2$, and it has access to $L_2( A_2,  b_1,b_2)$.
Therefore,  at $t=4$, the transmitter sends $L_2( A_2,  b_1,b_2)+L_3( A_3,  c_1,c_2)$, which is of interest to both $\text{Rx}_2,\text{Rx}_3$; this is because $\text{Rx}_3$ can then cancel $L_2( A_2,  b_1,b_2)$ from its received signal at $t=4$ to obtain $L_3( A_3,  c_1,c_2)$ which it needs. Similarly, $\text{Rx}_2$ can cancel $L_3( A_3,  c_1,c_2)$ from its received signal at $t=4$ to obtain $L_2( A_2,  b_1,b_2)$ which it needs.
Consequently, all receivers will be able to decode their desired symbols by the end of the fourth time slot; hence, the DoF tuple $(\frac{3}{4},\frac{1}{2},\frac{1}{2})$ is  achievable. 
See Fig. \ref{PDDach} for an illustration of the achievable scheme.

\section{$k$-User MISO BC with Hybrid CSIT}
In this section we focus on the general $k$-user MISO BC with hybrid CSIT.
In particular, we first present an outer bound on the LDoF region of the general $k$-user MISO BC  for any arbitrary hybrid CSIT configuration.
Then, we show that the bound provides an approximate characterization of $\Text{LDoF}_{\Text{sum}}$ for the case of $|\mathcal P| \geq |\mathcal D|$, and exact characterization of $\Text{LDoF}_{\Text{sum}}$ for  $|\mathcal D|  =  1$.
We then present the key tools needed for proving  the general outer bound; and finally, we prove the outer bound on the LDoF region.

\begin{theorem} \label{mainthmk}
Given a hybrid CSIT configuration, i.e., a partition of $k$ users into disjoint sets $\mathcal P, \mathcal D,$ and $\mathcal N$ as defined in Definition \ref{CSITsets},
 the $\Text{LDoF}_{\Text{region}}$ is contained in the following region:
\begin{align}
\Text{LDoF}_{\Text{region}} \subseteq \bigg \{ \quad (d_1,\ldots,d_k) \quad   | \quad  & 0\leq d_1,\ldots,d_k\leq 1, \nonumber\\
& \forall i\in   \mathcal D ,  \forall \pi_{\mathcal P \cup \mathcal D \setminus i}, \quad    \sum_{j =1}^{|\mathcal P|+ |\mathcal D|-1} \frac{d_{\pi_{\mathcal P \cup \mathcal D \setminus i} (j)}}{2^j} + d_{i} +\sum_{j\in \mathcal N} d_j \leq 1,  \label{IDBthm}\\
&\forall \pi_{\mathcal D}, \quad \sum_{j\in \mathcal P}\frac{d_j}{k} + \sum_{j=1}^{|\mathcal D|}  \frac{d_{\pi_{\mathcal D}(j)}}{j} +\sum_{j\in \mathcal N} d_j \leq 1,  \label{MATthm} \\
& \forall i\in  \mathcal P \cup \mathcal D , \quad     d_{i} +\sum_{j\in \mathcal N} d_j \leq 1 \quad
 \bigg \}. \label{LALbound}
\end{align}
\end{theorem}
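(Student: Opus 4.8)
The plan is to reduce Theorem~\ref{mainthmk} to three separate families of finite-blocklength rank inequalities and to prove each using the $k$-user generalizations of Lemmas~\ref{MainIneq}, \ref{MIMORRI} and \ref{LAL}. As in Section~\ref{3converse}, I fix a linearly achievable tuple $(d_1,\ldots,d_k)$ and the associated sequence $\{f^{(n)}\}$; it then suffices to establish, for each $n$ and almost surely, the inequalities obtained from (\ref{IDBthm})--(\ref{LALbound}) by substituting $d_j\mapsto m_j(n)$ and $1\mapsto n$. The two workhorses throughout are: (i) sub-modularity of rank (Lemma~\ref{ranksubmod}), used in the form that the marginal increment $\Text{rank}[\mathbf{G}^n[\mathcal B\cup\mathcal C]]-\Text{rank}[\mathbf{G}^n\mathcal B]$ is non-increasing in $\mathcal B$; and (ii) decodability (\ref{decode}), which together with (i) shows that for any receiver $j$ and any column set $\mathcal B$ with $\mathbf{V}_j^n\not\subseteq\mathcal B$ the increment $\Text{rank}[\mathbf{G}_j^n[\mathcal B\cup\mathbf{V}_j^n]]-\Text{rank}[\mathbf{G}_j^n\mathcal B]$ equals $m_j(n)$. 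The preliminary task is to lift the three lemmas to $k$ users: a \emph{$k$-user Interference Decomposition Bound}, stating that for $i\in\mathcal D$ and any ordered list $\mathbf{V}_{j_1}^n,\ldots,\mathbf{V}_{j_r}^n$ of distinct precoders with all $j_t\neq i$ one has $2\,\Text{rank}[\mathbf{G}_i^n[\mathbf{V}_{j_1}^n\cdots\mathbf{V}_{j_r}^n]]\stackrel{a.s.}{\geq} m_{j_1}(n)+\Text{rank}[\mathbf{G}_i^n[\mathbf{V}_{j_2}^n\cdots\mathbf{V}_{j_r}^n]]$; a \emph{$k$-user MIMO Rank Ratio Inequality for BC}, giving $2\,\Text{rank}[\mathbf{G}_i^n\mathbf{V}^n]\stackrel{a.s.}{\geq}\Text{rank}[[\mathbf{G}_\ell^n;\mathbf{G}_i^n]\mathbf{V}^n]$ for $i\in\mathcal D$, together with its joint form lower-bounding the rank of the signal stacked over any subset of $\mathcal D$; and a \emph{$k$-user Least Alignment Lemma}, giving $\Text{rank}[\mathbf{G}_\ell^n\mathbf{V}^n]\stackrel{a.s.}{\leq}\Text{rank}[\mathbf{G}_i^n\mathbf{V}^n]$ for every $\ell$ whenever $i\in\mathcal N$. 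These are proved by the same arguments as the $3$-user versions (the decomposition bound again by temporal analysis of the increments of received-signal dimensions under the delayed-CSIT constraint).

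For the family (\ref{LALbound}), fix $i\in\mathcal P\cup\mathcal D$, order $\mathcal N=\{n_1,\ldots,n_t\}$, and set $\mathbf{U}_0=\mathbf{V}_i^n$ and $\mathbf{U}_s=[\mathbf{U}_{s-1}\cup\mathbf{V}_{n_s}^n]$. Starting from $m_i(n)=\Text{rank}[\mathbf{G}_i^n\mathbf{U}_0]$, I iterate, for $s=1,\ldots,t$: apply the Least Alignment Lemma at $\text{Rx}_{n_s}$ to get $\Text{rank}[\mathbf{G}_{n_{s-1}}^n\mathbf{U}_{s-1}]\leq\Text{rank}[\mathbf{G}_{n_s}^n\mathbf{U}_{s-1}]$ (with $n_0:=i$), then decodability at $\text{Rx}_{n_s}$ together with (i) to get $\Text{rank}[\mathbf{G}_{n_s}^n\mathbf{U}_{s-1}]+m_{n_s}(n)\leq\Text{rank}[\mathbf{G}_{n_s}^n\mathbf{U}_s]$. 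After $t$ steps, $m_i(n)+\sum_{j\in\mathcal N}m_j(n)\leq\Text{rank}[\mathbf{G}_{n_t}^n\mathbf{U}_t]\leq n$, which is (\ref{LALbound}); the bounds $d_j\leq 1$ for $j\in\mathcal N$ follow from $m_j(n)\leq n$. The family (\ref{MATthm}) follows from the standard retrospective/MAT converse of \cite{MAT}: ordering the delayed receivers by $\pi_{\mathcal D}$ and applying the joint $k$-user MIMO Rank Ratio Inequality to the leading prefixes of $\pi_{\mathcal D}$ discounts the $j$-th delayed message by a factor $1/j$; the $\mathcal P$-receivers enter with coefficient $1/k$ (in the worst case a $\mathcal P$-message occupies all $k$ receivers' signal spaces, i.e.\ $m_j(n)\leq n$); and the $\mathcal N$-receivers are absorbed exactly as in the previous paragraph, with coefficient $1$. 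These coincide with the state-of-the-art bounds of \cite{MAT,Mohanty}.

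The family (\ref{IDBthm}) is the main new ingredient. Fix $i\in\mathcal D$, put $q=|\mathcal P|+|\mathcal D|-1$, let $\pi=\pi_{\mathcal P\cup\mathcal D\setminus i}$, and set $\mathbf{W}_s=[\mathbf{V}_{\pi(s)}^n\cup\cdots\cup\mathbf{V}_{\pi(q)}^n]$ for $s=1,\ldots,q$, so that $\mathbf{W}_1=[\cup_{l\in\mathcal P\cup\mathcal D\setminus i}\mathbf{V}_l^n]=:\mathbf{A}$. Applying the $k$-user Interference Decomposition Bound successively for $s=1,\ldots,q-1$ and finishing with the MIMO Rank Ratio Inequality on $\mathbf{W}_q=\mathbf{V}_{\pi(q)}^n$ yields the geometric estimate $\Text{rank}[\mathbf{G}_i^n\mathbf{A}]\geq\sum_{j=1}^{q}m_{\pi(j)}(n)/2^{j}$. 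Next, sub-modularity applied to the column sets $[\mathbf{A}\cup\mathbf{V}_i^n]$ and $[\cup_{l\neq i}\mathbf{V}_l^n]$ — whose intersection is $\mathbf{A}$ and whose union is the full precoder set — together with decodability at $\text{Rx}_i$ (which equates the rank of the full set minus $\Text{rank}[\mathbf{G}_i^n[\cup_{l\neq i}\mathbf{V}_l^n]]$ with $m_i(n)$) gives $\Text{rank}[\mathbf{G}_i^n\mathbf{A}]+m_i(n)\leq\Text{rank}[\mathbf{G}_i^n[\mathbf{A}\cup\mathbf{V}_i^n]]$. Since $[\mathbf{A}\cup\mathbf{V}_i^n]=[\cup_{l\in\mathcal P\cup\mathcal D}\mathbf{V}_l^n]$ contains no $\mathcal N$-precoder, running the $\mathcal N$-absorption of the previous paragraph from $\mathbf{M}_0=[\mathbf{A}\cup\mathbf{V}_i^n]$ produces $\Text{rank}[\mathbf{G}_i^n[\mathbf{A}\cup\mathbf{V}_i^n]]+\sum_{j\in\mathcal N}m_j(n)\leq n$. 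Chaining the three displays proves $\sum_{j=1}^q m_{\pi(j)}(n)/2^j+m_i(n)+\sum_{j\in\mathcal N}m_j(n)\leq n$, i.e.\ (\ref{IDBthm}).

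The hard part is the preliminary step, above all the $k$-user Interference Decomposition Bound and the joint-subset form of the MIMO Rank Ratio Inequality for BC. These require a delicate temporal induction that tracks, slot by slot, the increments of the received-signal dimensions at the relevant receivers and exploits the delayed-CSIT constraint, and — unlike the $3$-user case — must handle interference built from arbitrarily many message components and joint observations over arbitrary subsets of $\mathcal D$. Once those lemmas are in place, the arguments above for the three inequality families are essentially bookkeeping with sub-modularity and decodability, together with the geometric summations that produce the $2^{-j}$ and $j^{-1}$ weights.
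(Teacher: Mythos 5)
Your handling of the families (\ref{IDBthm}) and (\ref{LALbound}) is essentially the paper's own argument: the geometric estimate $\Text{rank}[\bold{G}_{i}^n \bold{A}] \stackrel{a.s.}{\geq} \sum_{j=1}^{q} m_{\pi(j)}(n)/2^{j}$, obtained by iterating the $k$-user Interference Decomposition Bound and closing with the MIMO Rank Ratio Inequality, is exactly Claim \ref{cl12}; the step $\Text{rank}[\bold{G}_{i}^n \bold{A}] + m_i(n) \leq \Text{rank}[\bold{G}_{i}^n[\bold{A}\cup\bold{V}_i^n]]$ is the paper's (\ref{x113}); and your ``$\mathcal N$-absorption'' is the paper's telescoping over the no-CSIT receivers via decodability, sub-modularity and Least Alignment. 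So those two families are fine and follow the same route.

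The gap is in the family (\ref{MATthm}). Your justification of the coefficient $1/k$ for the $\mathcal P$-receivers (``in the worst case a $\mathcal P$-message occupies all $k$ receivers' signal spaces, i.e.\ $m_j(n)\leq n$'') does not yield the inequality: the $\mathcal P$-terms and $\mathcal D$-terms cannot be decoupled, since bounding $\sum_{j\in\mathcal P} m_j(n)/k$ by $|\mathcal P|\,n/k$ and the delayed/no-CSIT part by $n$ only gives $n(1+|\mathcal P|/k)$, not $n$. In the paper, the $\mathcal P$-terms are bounded by $\tfrac{1}{k}\Text{rank}[[\bold{G}_1^n;\ldots;\bold{G}_k^n][\bold{V}_1^n\ldots\bold{V}_{|\mathcal P|}^n]]$ (inequality (\ref{instbound})), while the weighted telescoping over $\mathcal D$ via the joint MIMO Rank Ratio Inequality leaves behind the negative residual $-\tfrac{1}{|\mathcal D|}\Text{rank}[[\bold{G}_{|\mathcal P|+1}^n;\ldots;\bold{G}_{|\mathcal P|+|\mathcal D|}^n][\bold{V}_1^n\ldots\bold{V}_{|\mathcal P|}^n]]$ (inequality (\ref{delayedbound})); the argument closes only because the first quantity is dominated by the second, which is Claim \ref{MIMOvariant}. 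That claim is not bookkeeping with sub-modularity: it is an additional delayed-CSIT statement about the $\mathcal P$-precoders, proved by the same per-slot increment analysis as the rank-ratio lemma (steps (\ref{a1})--(\ref{a5})). Your proposal neither states this comparison nor supplies a substitute, and the appeal to the ``standard retrospective/MAT converse'' does not cover it, since that converse treats pure delayed CSIT and does not produce the $d_j/k$ weighting for $\mathcal P$-receivers in the hybrid setting. As written, (\ref{MATthm}) is therefore not established.
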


Theorem \ref{mainthmk} enables us to approximately characterize $\text{LDoF}_{\text{sum}}$ to within $\frac{|\mathcal P|}{2^{|\mathcal P|}}$ for a broad range of CSIT configurations ($|\mathcal P| \geq |\mathcal D|$).
This  gap (i.e. $\frac{|\mathcal P|}{2^{|\mathcal P|}}$) is less than or equal to 0.5, and decays exponentially to zero as $|\mathcal P|$ increases.
Moreover, Theorem \ref{mainthmk} allows us to exactly
characterize $\text{LDoF}_{\text{sum}}$ for the  case of $|\mathcal D|=1$.
These results are stated more precisely in the following two Propositions.

\begin{proposition} \label{propapprox}
For general $k$-user MISO BC with $|\mathcal P| \geq |\mathcal D|$, 
\begin{align}
|\mathcal P| \leq   \Text{LDoF}_{\Text{sum}} \leq  |\mathcal P|+ \frac{|\mathcal P|}{2^{|\mathcal P|}} \leq |\mathcal P| + \frac{1}{2}.\nonumber
\end{align}
\end{proposition}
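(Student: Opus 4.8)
The plan is to establish the two bounds separately. The lower bound $|\mathcal P| \le \Text{LDoF}_{\Text{sum}}$ is immediate from achievability: the transmitter can simply ignore the receivers in $\mathcal D \cup \mathcal N$ and use the instantaneous CSIT with respect to the $|\mathcal P|$ receivers in $\mathcal P$ to zero-force, so that each such receiver attains $1$ degree of freedom; hence the tuple assigning $d_i = 1$ for $i \in \mathcal P$ and $d_j = 0$ otherwise lies in $\Text{LDoF}_{\Text{region}}$ (one checks it satisfies all inequalities of Theorem \ref{mainthm3}/Theorem \ref{mainthmk}, since it automatically satisfies \eqref{IDBthm}, \eqref{MATthm}, \eqref{LALbound} as the left-hand sides are at most $\sum_{j\in\mathcal P} d_j/k \le 1$, etc.), giving $\Text{LDoF}_{\Text{sum}} \ge |\mathcal P|$. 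The final inequality $|\mathcal P| + \frac{|\mathcal P|}{2^{|\mathcal P|}} \le |\mathcal P| + \frac12$ is the elementary fact that $x \le 2^{x-1}$ for all positive integers $x$ (equivalently $\frac{x}{2^x}\le \frac12$), proved by induction or by noting $2^{x-1}$ grows faster.

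The substantive part is the upper bound $\Text{LDoF}_{\Text{sum}} \le |\mathcal P| + \frac{|\mathcal P|}{2^{|\mathcal P|}}$, which I would derive by taking a suitable nonnegative combination of the inequalities \eqref{IDBthm} from Theorem \ref{mainthmk}. The idea: fix an arbitrary $i \in \mathcal D$ (using $|\mathcal D| \le |\mathcal P|$ to guarantee $\mathcal D \neq \emptyset$; if $\mathcal D = \emptyset$ the bound is trivial since then \eqref{LALbound}-type or direct arguments give $\Text{LDoF}_{\Text{sum}} \le |\mathcal P| + \sum_{\mathcal N} d_j$ and one handles $\mathcal N$ separately — actually with $\mathcal D=\emptyset$ each $d_i\le 1$ for $i\in\mathcal P$ and $d_j$ for $j\in\mathcal N$ is killed by $d_i+\sum_{\mathcal N}d_j\le 1$, so $\Text{LDoF}_{\Text{sum}}\le|\mathcal P|$). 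For $\mathcal D \neq \emptyset$, I average the inequality \eqref{IDBthm} for this fixed $i$ over all $(|\mathcal P|+|\mathcal D|-1)!$ permutations $\pi_{\mathcal P \cup \mathcal D \setminus i}$. By symmetry of the averaging, each $d_\ell$ with $\ell \in \mathcal P \cup \mathcal D \setminus \{i\}$ receives the same coefficient, namely $c := \frac{1}{|\mathcal P|+|\mathcal D|-1}\sum_{j=1}^{|\mathcal P|+|\mathcal D|-1} 2^{-j} = \frac{1}{|\mathcal P|+|\mathcal D|-1}\bigl(1 - 2^{-(|\mathcal P|+|\mathcal D|-1)}\bigr)$. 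Thus the averaged inequality reads
\begin{equation}
c \sum_{\ell \in \mathcal P \cup \mathcal D \setminus i} d_\ell + d_i + \sum_{j \in \mathcal N} d_j \le 1. \nonumber
\end{equation}

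Now I would average this further over all choices of $i \in \mathcal D$ (there are $|\mathcal D|$ of them), which by the same symmetry produces a single inequality in which every $d_\ell$ with $\ell \in \mathcal D$ gets coefficient $\frac{1}{|\mathcal D|}\bigl((|\mathcal D|-1)c + 1\bigr)$, every $d_\ell$ with $\ell \in \mathcal P$ gets coefficient $c$, and every $d_j$ with $j \in \mathcal N$ gets coefficient $1$; the right side stays $\le 1$. Combining with the per-receiver bounds $d_\ell \le 1$ for $\ell \in \mathcal P$, I would write $\Text{LDoF}_{\Text{sum}} = \sum_{\mathcal P} d_\ell + \sum_{\mathcal D} d_\ell + \sum_{\mathcal N} d_j$ and bound: the $\mathcal P$-part of the sum costs at most $|\mathcal P|$ from the trivial bounds, while the "excess" $\sum_{\mathcal D} d_\ell + \sum_{\mathcal N} d_j$ plus the under-counted portion of the $\mathcal P$-terms is controlled by the averaged inequality — concretely, since every $\mathcal D\cup\mathcal N$ coefficient in the averaged inequality is at least $c$ and in fact the $\mathcal N$-coefficient is $1$, one gets $\sum_{\mathcal D}d_\ell+\sum_{\mathcal N}d_j \le \frac{1}{c}\bigl(1 - c\sum_{\mathcal P}d_\ell\bigr)$ after bounding the $\mathcal D$-coefficient below by $c$; substituting $\sum_{\mathcal P}d_\ell = |\mathcal P|$ (the worst case) and simplifying with $|\mathcal P| \ge |\mathcal D|$ so that $|\mathcal P|+|\mathcal D|-1 \le 2|\mathcal P|-1$, hence $\frac1c \le \frac{|\mathcal P|+|\mathcal D|-1}{1 - 2^{-(|\mathcal P|+|\mathcal D|-1)}}$, yields $\Text{LDoF}_{\Text{sum}} \le |\mathcal P| + \frac{|\mathcal P|}{2^{|\mathcal P|}}$. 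I expect the main obstacle to be getting the bookkeeping of coefficients exactly right so the resulting constant collapses to the clean $\frac{|\mathcal P|}{2^{|\mathcal P|}}$; this likely requires choosing the combination of inequalities more carefully than a naive uniform average (e.g. weighting \eqref{IDBthm} only over permutations that place the $\mathcal D\setminus\{i\}$ elements in the last positions, so their coefficients $2^{-j}$ are as small as possible), and verifying the $\mathcal N$-terms drop out cleanly because their coefficient is pinned at $1$ in every such inequality. The $|\mathcal P| \ge |\mathcal D|$ hypothesis enters precisely to ensure the geometric tail $2^{-(|\mathcal P|+|\mathcal D|-1)}$ is at least $2^{-(2|\mathcal P|-1)}$, which is what makes the gap $\le \frac{|\mathcal P|}{2^{|\mathcal P|}}$ rather than something larger.
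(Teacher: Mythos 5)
Your lower bound, the $\mathcal D=\emptyset$ case, and the final elementary step $\frac{|\mathcal P|}{2^{|\mathcal P|}}\le\frac12$ are fine, but the core upper-bound derivation as you carry it out does not give the claimed constant. The problem is the first averaging step: by averaging (\ref{IDBthm}) uniformly over \emph{all} $(|\mathcal P|+|\mathcal D|-1)!$ permutations of $\mathcal P\cup\mathcal D\setminus\{i\}$, you hand every user in $\mathcal P$ the diluted coefficient $c=\frac{1-2^{-(|\mathcal P|+|\mathcal D|-1)}}{|\mathcal P|+|\mathcal D|-1}$, i.e.\ you let the large weights $2^{-1},\ldots,2^{-|\mathcal P|}$ be shared with the $\mathcal D$-users. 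Since at the LP optimum it is the $\mathcal P$-users that sit at $d_\ell=1$, a smaller $\mathcal P$-coefficient consumes less of the budget and leaves more room for the $\mathcal D$-users, so the resulting bound is strictly weaker than needed. Concretely, take $|\mathcal P|=|\mathcal D|=2$, $\mathcal N=\emptyset$: your double-averaged inequality is $\frac{7}{24}(d_1+d_2)+\frac{31}{48}(d_3+d_4)\le 1$, whose LP value (with $d_\ell\le 1$) is $2+\frac{20}{31}\approx 2.645$, and your further weakening of the $\mathcal D$-coefficient down to $c$ gives only $\Text{LDoF}_{\Text{sum}}\le \frac1c=\frac{24}{7}\approx 3.43$; both exceed the target $|\mathcal P|+\frac{|\mathcal P|}{2^{|\mathcal P|}}=2.5$. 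So the step ``simplifying with $|\mathcal P|\ge|\mathcal D|$ \ldots yields $\Text{LDoF}_{\Text{sum}}\le|\mathcal P|+\frac{|\mathcal P|}{2^{|\mathcal P|}}$'' is not justified; also note that ``substituting $\sum_{\mathcal P}d_\ell=|\mathcal P|$'' makes your bound collapse to $\frac1c$ independently of the $\mathcal P$-terms, which cannot produce a bound of the form $|\mathcal P|+o(1)$.

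The fix is exactly the refinement you mention in passing at the end, and it is what the paper does: do \emph{not} mix the two groups. Keep the $\mathcal P$-users in the first $|\mathcal P|$ positions and the $\mathcal D\setminus\{i\}$-users in the last positions, and average (\ref{IDBthm}) over the $|\mathcal P|!\times|\mathcal D|!$ permutations within each group (together with the choice of $i\in\mathcal D$). This yields the single inequality $\frac{1-2^{-|\mathcal P|}}{|\mathcal P|}\sum_{\ell\in\mathcal P}d_\ell+\frac{1+2^{-|\mathcal P|}-2^{-(|\mathcal P|+|\mathcal D|-1)}}{|\mathcal D|}\sum_{\ell\in\mathcal D}d_\ell+\sum_{j\in\mathcal N}d_j\le 1$, and the LP with $d_\ell\le 1$ then gives (for $|\mathcal P|\ge|\mathcal D|>1$) $\Text{LDoF}_{\Text{sum}}\le|\mathcal P|+\frac{|\mathcal D|}{2^{|\mathcal P|}+1-2^{-(|\mathcal D|-1)}}\le|\mathcal P|+\frac{|\mathcal D|}{2^{|\mathcal P|}}\le|\mathcal P|+\frac{|\mathcal P|}{2^{|\mathcal P|}}$, where the hypothesis $|\mathcal P|\ge|\mathcal D|$ enters only in the last inequality (not through the geometric tail as you suggest). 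The cases $|\mathcal D|=0$ and $|\mathcal D|=1$ are handled separately (the latter via Proposition \ref{D1} or directly from (\ref{IDBthm})). In short: your plan needs the grouped averaging to be the main argument, not a fallback remark; with the uniform average the bookkeeping genuinely fails rather than merely being inconvenient.
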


\begin{proposition} \label{D1}
For general $k$-user MISO BC with $|\mathcal D|  =  1$, 
\begin{align}
\Text{LDoF}_{\Text{sum}} = |\mathcal P| + \frac{1}{2^{|\mathcal P|}}.
\end{align}
\end{proposition}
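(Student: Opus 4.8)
I would prove the two inequalities $\Text{LDoF}_{\Text{sum}} \le |\mathcal P| + 2^{-|\mathcal P|}$ and $\Text{LDoF}_{\Text{sum}} \ge |\mathcal P| + 2^{-|\mathcal P|}$ separately. The first is a one-line consequence of Theorem \ref{mainthmk}; the second requires a new linear scheme, and that is where essentially all the work lies.

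\textbf{Converse.} Write $p = |\mathcal P|$ and let $i_0$ be the unique index in $\mathcal D$. Since $\mathcal D = \{i_0\}$, the set $\mathcal P \cup \mathcal D \setminus i_0$ is just $\mathcal P$, which has $p$ elements, so for any fixed ordering $\pi$ of $\mathcal P$ the bound (\ref{IDBthm}) in Theorem \ref{mainthmk} specializes to $\sum_{j=1}^{p} 2^{-j} d_{\pi(j)} + d_{i_0} + \sum_{j\in\mathcal N} d_j \le 1$. I would then restore the full weight of each perfect-CSIT term using the trivial constraints $d_{\pi(j)} \le 1$:
\begin{align*}
\sum_{j=1}^{k} d_j
&= \sum_{j=1}^{p} d_{\pi(j)} + d_{i_0} + \sum_{j\in\mathcal N} d_j \\
&= \sum_{j=1}^{p} d_{\pi(j)}\Big(1 - \frac{1}{2^{j}}\Big) + \Big(\sum_{j=1}^{p} \frac{d_{\pi(j)}}{2^{j}} + d_{i_0} + \sum_{j\in\mathcal N} d_j\Big) \\
&\le \sum_{j=1}^{p}\Big(1 - \frac{1}{2^{j}}\Big) + 1 = p - \Big(1 - \frac{1}{2^{p}}\Big) + 1 = |\mathcal P| + \frac{1}{2^{|\mathcal P|}} .
\end{align*}
Maximizing the left side over $\Text{LDoF}_{\Text{region}}$ gives the outer bound. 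Note that for this direction neither the MAT-type bound (\ref{MATthm}) nor the bounds (\ref{LALbound}) are needed.

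\textbf{Achievability.} It suffices to show that the single corner point $d_j = 1$ for $j \in \mathcal P$, $\; d_{i_0} = 2^{-|\mathcal P|}$, and $d_j = 0$ for $j \in \mathcal N$ lies in $\Text{LDoF}_{\Text{region}}$, since its coordinates already sum to $|\mathcal P| + 2^{-|\mathcal P|}$. Discarding the no-CSIT receivers (which carry zero DoF), this reduces to a $(p{+}1)$-user MISO BC with $p$ receivers in state $P$ and one, say $\Text{Rx}_0$, in state $D$, where one must deliver $2^{p}$ fresh symbols to each perfect receiver and a single symbol $c$ to $\Text{Rx}_0$ over $n = 2^{p}$ slots. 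I would construct a multi-phase linear scheme whose phase structure mirrors the MAT scheme ($\Theta(p)$ phases of geometrically decreasing length, with one extra ``retrofit'' slot), but with zero-forcing layered on top: in the forward phases the transmitter exploits the instantaneous CSIT of the $P$-receivers to ship them fresh symbols by zero-forcing, so these receivers see only clean, interference-free observations, while $\Text{Rx}_0$ passively accumulates overheard linear combinations of the transmitted symbols; these overheard combinations are recycled across phases (their dimension, hence $\Text{Rx}_0$'s effective rate, halving with each added perfect receiver, which is exactly what produces the $2^{-|\mathcal P|}$ factor), and in the final phase the transmitter uses the now-available delayed CSIT of $\Text{Rx}_0$ to retransmit a compressed bundle of these combinations, precoded into the joint null space of the current channels of the $P$-receivers, so that each $P$-receiver picks up precisely its residual missing equations while $\Text{Rx}_0$ is left with a clean copy of $c$. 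Correctness would be verified by checking that the decodability condition (\ref{decode}) holds almost surely, i.e., that the relevant channel-dependent matrices are full rank with probability one, which follows from the genericity of the channel coefficients.

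\textbf{Main obstacle.} The converse is routine; the crux is the achievability scheme, specifically engineering the phase lengths, the number of overheard combinations recycled at each stage, and the zero-forcing directions so that a \emph{single} shared retransmitted bundle is simultaneously (i) enough to complete all $p$ perfect receivers and (ii) enough for $\Text{Rx}_0$ to peel off its lone symbol. A naive two-phase recursion fails because the delayed receiver's forward-phase observations become ``polluted'' by the newly added perfect receiver's symbols and the perfect receiver's forward-phase observations risk being wasted on other users' data; resolving this requires the careful multi-phase bookkeeping sketched above, and the known constructions only handle $|\mathcal P| \le 2$, so the new content is carrying this out for all $|\mathcal P|$.
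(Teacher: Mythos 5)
Your converse is correct and is essentially the paper's argument: it invokes only bound (\ref{IDBthm}) of Theorem \ref{mainthmk} (with $\mathcal P\cup\mathcal D\setminus i_0=\mathcal P$) together with $d_j\le 1$, which is exactly the linear program the paper solves in Appendix \ref{kAchiev}; writing out the dual weights explicitly, as you do, is fine. The problem is the achievability half, which you yourself identify as "where essentially all the work lies" but never actually construct. What you give is a plan, and the few specifics you commit to would not work as stated. Most concretely, in your final phase you precode the shared retransmitted bundle "into the joint null space of the current channels of the $P$-receivers"; a signal lying in the joint null space of all perfect receivers' channels is received by none of them, so they cannot "pick up precisely their residual missing equations" from it. The working construction (the paper's) instead sends, in the last slot, $|\mathcal P|$ separate overheard combinations $LC'_1,\ldots,LC'_{|\mathcal P|}$, each zero-forced only at the \emph{other} perfect receivers, so that receiver $i$ hears $LC'_i$ cleanly while the delayed receiver hears a generic combination of all of them and then solves a $(|\mathcal P|+1)\times(|\mathcal P|+1)$ system for its symbol — it is not "left with a clean copy of $c$."

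The deeper gap is the intermediate bookkeeping you wave at. The scheme that achieves $\left(1,\ldots,1,2^{-|\mathcal P|}\right)$ has $|\mathcal P|+1$ phases with phase $i$ lasting $\binom{|\mathcal P|}{i}$ slots (binomial, not "geometrically decreasing"; they sum to $2^{|\mathcal P|}$). In each slot of phase $i$ the perfect receivers are partitioned into a repetition set of size $i$ — fed, via zero-forcing, with symbols the transmitter reconstructs from the delayed receiver's phase-$(i-1)$ overheard equations using its delayed CSIT — and a fresh set of size $|\mathcal P|-i$ fed with new information symbols; the delayed receiver then cancels the $i$ repetition symbols against its stored phase-$(i-1)$ equations, producing, for every slot, one equation whose interference order has dropped by one. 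Verifying that exactly $\binom{i}{1}=i$ of the stored equations contain the right interference pattern for each slot, that every perfect receiver ends with $2^{|\mathcal P|}$ independent clean equations, and that the delayed receiver's final system is almost surely invertible is precisely the "careful multi-phase bookkeeping" you defer, and it is the substance of the proposition (the cases $|\mathcal P|\le 2$ being known). As written, your proposal establishes only the outer bound.
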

Proofs of Propositions \ref{propapprox}, \ref{D1} are provided in Appendix \ref{constgap} and Appendix  \ref{kAchiev}, respectively.
We will now prove Theorem \ref{mainthmk}. In particular, 
we first present the key ingredients of the proof, which are the generalizations of Lemmas \ref{MainIneq}-\ref{LAL}.
We then prove (\ref{IDBthm})-(\ref{LALbound}).

\subsection{Key Ingredients for Proof of Theorem \ref{mainthmk}}
Similar to the proof for the case of 3-user MISO BC with hybrid CSIT, we need to extend the Lemmas  \ref{MainIneq}-\ref{LAL}.
We present the generalizations here, and then prove Theorem \ref{mainthmk}.
We first present the  generalized version of   Interference Decomposition Bound in Lemma \ref{MainIneq}. The  proof is provided in Appendix \ref{IDBproof}.

\begin{lemma} \label{MainIneqk}
{\bf (Interference Decomposition Bound)} 
 Consider a fixed  linear coding strategy $ f^{(n)}$, with corresponding precoding matrices $\bold{V}_{1}^{n}, \bold{V}_{2}^{n},\ldots, \bold{V}_{k}^{n} $ as defined in (\ref{coding}).
 For any $\mathcal S\subseteq \{1,2,\ldots, k\}$, any $\ell \in \mathcal S,$   
and any $j\notin \mathcal S$ for which  $I_j=D$,
\begin{align} 
 & \frac{ \Text{rank}[ \bold{G}_{\ell}^n [\cup_{i\in \mathcal S}\bold{V}_{i}^{n}] ]  - \Text{rank}[ \bold{G}_{\ell}^n [\cup_{\substack{i\in \mathcal S\\ i\ne \ell }} \bold{V}_{i}^{n}  ] ]  + \Text{rank}[ \bold{G}_{j}^n [\cup_{\substack{i\in \mathcal S\\ i\ne \ell}} \bold{V}_{i}^{n}  ]]   }{2}   \stackrel{a.s.}{\leq}     \Text{rank}[ \bold{G}_{j}^n [\cup_{i\in \mathcal S}\bold{V}_{i}^{n}]],\label{IDB}
\end{align}
where $[\cup_{i\in \mathcal S} \bold{V}_i^n]$  denotes the row concatenation of the corresponding precoding matrices $\bold{V}_i^n$, where $i\in \mathcal S$.
\end{lemma}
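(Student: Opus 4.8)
The plan is to reduce the general statement to the three-user case (Lemma \ref{MainIneq}) by a careful grouping of the precoding matrices into three ``super-users.'' Concretely, given $\mathcal S$, $\ell \in \mathcal S$, and $j \notin \mathcal S$ with $I_j = D$, I would set $\widetilde{\bold{V}}_1^n \triangleq \bold{V}_\ell^n$, $\widetilde{\bold{V}}_2^n \triangleq [\cup_{i\in\mathcal S,\, i\ne\ell}\bold{V}_i^n]$, and think of $\text{Rx}_j$ as playing the role of the delayed-CSIT receiver ``$\text{Rx}_3$'' in Lemma \ref{MainIneq}; the third super-user $\widetilde{\bold{V}}_3^n$ (the signal intended for $\text{Rx}_j$ itself) is not needed in the statement and can be taken to be empty or absorbed. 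The point is that the \emph{proof technique} behind Lemma \ref{MainIneq} — temporal analysis of the increments of $\Text{rank}[\bold{G}_j^t \cdot]$ and $\Text{rank}[\bold{G}_\ell^t \cdot]$, exploiting that at time $t$ the transmitter's precoders depend only on $\bm{\mathcal{\tilde G}}^{t}$ (which, since $I_j=D$, includes $\bold{G}_j^{t-1}$ but not $\bold{G}_j(t)$) — never uses the number of users; it only uses the delayed-CSIT constraint at $\text{Rx}_j$ and treats the remaining transmit signal as a single aggregated subspace. So rather than literally invoking the $k=3$ lemma as a black box, I would replay its argument with $\bold{V}_1^n,\bold{V}_2^n$ replaced by $\widetilde{\bold{V}}_1^n,\widetilde{\bold{V}}_2^n$ throughout.

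The key steps, in order: (1) Fix the linear strategy $f^{(n)}$ and the realization of channels; work with the block-diagonal matrices $\bold{G}_\ell^n,\bold{G}_j^n$. Define, for each $t\le n$, the two quantities $r_\ell(t) = \Text{rank}[\bold{G}_\ell^t[\widetilde{\bold{V}}_1^t\ \widetilde{\bold{V}}_2^t]] - \Text{rank}[\bold{G}_\ell^t\widetilde{\bold{V}}_2^t]$ and $r_j(t) = \Text{rank}[\bold{G}_j^t\widetilde{\bold{V}}_2^t]$, plus the target $R_j(t) = \Text{rank}[\bold{G}_j^t[\widetilde{\bold{V}}_1^t\ \widetilde{\bold{V}}_2^t]]$. (2) Establish the per-slot increment inequality: using that the time-$t$ precoder columns lie in a subspace determined before $\bold{G}_j(t)$ is revealed, and that $\bold{G}_j(t)$ is drawn from a continuous distribution independent of everything so far, show almost surely that each new dimension added to the ``$\widetilde{\bold{V}}_1$-part as seen by $\text{Rx}_\ell$'' contributes at least $1/2$ on average to the increment of $R_j$ relative to $r_j$ — this is exactly the rank-ratio style accounting that drives the factor $2$. (3) Sum the increments from $t=0$ (where all ranks are $0$ by the footnote convention $\bold{G}_j^0[\cdots]\triangleq \vec 0$) to $t=n$, yielding $\tfrac12(r_\ell(n) + r_j(n)) \le R_j(n)$, which upon unfolding the definitions of $\widetilde{\bold{V}}_1^n,\widetilde{\bold{V}}_2^n$ is precisely \eqref{IDB}. (4) Note that $\ell$ and $j$ and the rest of $\mathcal S$ were arbitrary subject to the stated constraints, and that no assumption on $I_i$ for $i \ne j$ was used, matching the remark following Lemma \ref{MainIneq}.

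The main obstacle I expect is step (2): making the ``on average at least $1/2$'' increment argument rigorous when $\widetilde{\bold{V}}_2^n$ is now an aggregate of many users' precoders rather than a single matrix. One has to be careful that the submodularity bookkeeping (Lemma \ref{ranksubmod}) and the genericity-of-channels argument still go through column-block by column-block; in particular, the interference term $\Text{rank}[\bold{G}_j^n[\cup_{i\in\mathcal S,i\ne\ell}\bold{V}_i^n]]$ must be handled as one object throughout, and one must verify that no cancellation across the grouped blocks can violate the per-slot inequality. Since the $k=3$ proof in Appendix \ref{IDBproof} is presumably already written in a way that only references ``$\bold{V}_2^n$'' as an opaque subspace and ``$\text{Rx}_3$'' as the delayed receiver, I anticipate the generalization amounts to a notational substitution plus a remark that the argument is agnostic to how that subspace is assembled; but the honest verification of that claim — that every inequality used (submodularity, the continuous-distribution rank-genericity, the decodability relation \eqref{decode} where invoked) is insensitive to the grouping — is where the real work lies.
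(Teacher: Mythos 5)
Your high-level framing is sound and matches how the paper actually organizes things, though in reverse: the paper never proves a standalone $3$-user lemma and then generalizes; Appendix \ref{IDBproof} proves Lemma \ref{MainIneqk} directly, treating $[\cup_{i\in \mathcal S}\bold{V}_i^n]$ and $[\cup_{i\in\mathcal S, i\ne \ell}\bold{V}_i^n]$ as opaque aggregated blocks exactly as you anticipate, and Lemma \ref{MainIneq} is obtained as the special case $\mathcal S=\{1,2\}$, $\ell=1$, $j=3$. Your worry about whether the grouping breaks the genericity/submodularity bookkeeping is resolved in the paper by stating the key probabilistic tool (Lemma \ref{DCSIT}) for an arbitrary subset $\mathcal S$, and the only CSIT hypothesis used anywhere is indeed $I_j=D$. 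You are also right that one cannot invoke the $3$-user lemma as a black box (the aggregated precoders are not of the form (\ref{coding}) for a $3$-user system), and that one must replay the argument.

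The genuine gap is that you never reconstruct that argument, and the mechanism you sketch in steps (2)--(3) is not valid as stated. There is no almost-sure per-slot inequality of the form $\Delta R_j(t) \geq \tfrac{1}{2}\bigl(\Delta r_\ell(t)+\Delta r_j(t)\bigr)$: precisely in the slots the paper collects into $\bm{\mathcal T_2}$ (slots where $\text{Rx}_\ell$ gains a new dimension but its received row already lies in $\text{rowspan}[\bold{G}_j^{t-1}[\cup_{i\in\mathcal S}\bold{V}_i^{t-1}]]$), one has $\Delta r_\ell(t)=1$ while $\Delta R_j(t)=0$, so summing slot-wise bounds cannot produce the factor $2$. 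The paper's proof is a global, two-part accounting through the quantity $|\bm{\mathcal T_2}|$: Lemma \ref{m1m3bound} shows $\Text{rank}[\bold{G}_\ell^n[\cup_{i\in\mathcal S}\bold{V}_i^n]] - |\bm{\mathcal T_2}| \leq \Text{rank}[\bold{G}_j^n[\cup_{i\in\mathcal S}\bold{V}_i^n]]$ a.s., where the delayed-CSIT constraint enters only through Lemma \ref{DCSIT} (if $\text{Rx}_j$'s new output is reconstructible from its past, then a.s. the entire time-$t$ transmit rowspan is); and Lemma \ref{T4bound} shows $|\bm{\mathcal T_2}| - \Text{rank}[\bold{G}_\ell^n[\cup_{i\ne\ell}\bold{V}_i^n]] \leq \Text{rank}[\bold{G}_j^n[\cup_{i\in\mathcal S}\bold{V}_i^n]] - \Text{rank}[\bold{G}_j^n[\cup_{i\ne\ell}\bold{V}_i^n]]$ by a purely algebraic count (Claim \ref{dimrank}) of rowspan vectors supported only on the $\bold{V}_\ell$-columns, using that the $|\bm{\mathcal T_2}|$ rows $\vec{\bold{g}}_\ell(t)[\cup_{i\in\mathcal S}\bold{V}_i(t)]$ are linearly independent and lie in $\text{rowspan}[\bold{G}_j^n[\cup_{i\in\mathcal S}\bold{V}_i^n]]$. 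Adding the two inequalities makes $|\bm{\mathcal T_2}|$ cancel and yields (\ref{IDB}). The introduction of $\bm{\mathcal T_2}$ and the second, zero-pattern dimension-counting bound are the essential ideas your proposal is missing; without them the "rank-ratio style" increment accounting you describe does not close.
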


\begin{remark}
Lemma \ref{MainIneq} is a special case of Lemma \ref{MainIneqk} where $\mathcal S=\{1,2\}, j=3, \ell =1$.
\end{remark}
We now present the generalized version of Lemma \ref{MIMORRI}, which is the second main ingredient of the proof, and is proved  in Appendix \ref{AppMIMORRI}.

\begin{lemma} \label{MIMORRIk}
{\bf (MIMO Rank Ratio Inequality for BC)}
Consider a linear coding strategy $f^{(n)} $, with corresponding $\bold{V}_{1}^{n},\ldots, \bold{V}_{k}^{n} $ as defined in (\ref{coding}).
Let $\bold{Y}_j^n \triangleq  \bold{G}_j^n [\cup_{i\in \mathcal S}\bold{V}_i^n]$, where $\mathcal S \subseteq \{1,2,\ldots , k\}$.
Also, consider  distinct receivers $\text{Rx}_{i_1},\ldots, \text{Rx}_{i_{j+1}}$, where $j=1,2,\ldots, k-1$ and $i_1,\ldots, i_{j+1}\in \{1,\ldots,k\} $. 
If $\text{Rx}_{i_1},\ldots, \text{Rx}_{i_{j}}$ supply delayed CSIT, then,
\begin{align}
& \frac{ \Text{rank} [\bold{Y}_{i_1}^n; \ldots ;\bold{Y}_{i_{j+1}}^n]  }{j+1}  \stackrel{a.s.}{\leq}  \frac{ \Text{rank} [\bold{Y}_{i_1}^n; \ldots ;\bold{Y}_{i_j}^n]  }{j}.
\end{align}
\end{lemma}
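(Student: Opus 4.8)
The plan is to reduce the general $(j+1)$-receiver statement to repeated application of a "two-step" rank increment argument, exactly in the spirit of the $k=3$ case (Lemma \ref{MIMORRI}) but carried out at the level of \emph{joint} received subspaces. Set $\bold{W}^n \triangleq [\cup_{i\in\mathcal S}\bold{V}_i^n]$, so that $\bold{Y}_i^n = \bold{G}_i^n \bold{W}^n$. The inequality to prove, after cross-multiplying, is
\begin{align}
j \cdot \Text{rank}[\bold{Y}_{i_1}^n;\ldots;\bold{Y}_{i_{j+1}}^n] \stackrel{a.s.}{\leq} (j+1)\cdot \Text{rank}[\bold{Y}_{i_1}^n;\ldots;\bold{Y}_{i_j}^n],
\end{align}
i.e. the marginal gain in joint rank from appending $\bold{Y}_{i_{j+1}}^n$ to the stack of the $j$ delayed-CSIT receivers is at most $\frac{1}{j}$ of the rank already accumulated. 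First I would argue that it suffices to control, column-block by column-block over the $n$ time slots, the increment $\Delta(t) \triangleq \Text{rank}[\text{signals at }\{i_1,\ldots,i_{j+1}\}\text{ up to }t] - \Text{rank}[\text{signals at }\{i_1,\ldots,i_j\}\text{ up to }t]$, showing it is nondecreasing in $t$ and that each single-slot jump in $\Delta$ is "paid for" by at least $j$ units of growth in the base quantity $\Text{rank}[\bold{Y}_{i_1}^{\le t};\ldots;\bold{Y}_{i_j}^{\le t}]$.

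The core of the argument is the \emph{delayed-CSIT genericity} step. The precoding columns active at time $t$ depend only on $\bm{\mathcal{\tilde G}}^t$, hence — since $\Text{Rx}_{i_1},\ldots,\Text{Rx}_{i_j}$ supply delayed CSIT — they are chosen \emph{before} $\vec{\bold g}_{i_1}(t),\ldots,\vec{\bold g}_{i_j}(t)$ are realized. Therefore, conditioned on everything up to the precoders at time $t$, the new rows $\vec{\bold g}_{i_1}(t),\ldots,\vec{\bold g}_{i_j}(t)$ act on the time-$t$ transmit subspace as i.i.d. generic linear functionals. I would make precise that if the time-$t$ transmit space (projected modulo what is already resolved) has dimension $r_t$, then appending the $j$ delayed receivers' new observations increases $\Text{rank}[\bold{Y}_{i_1}^{\le t};\ldots;\bold{Y}_{i_j}^{\le t}]$ by exactly $\min(j, r_t)$ almost surely, whereas appending \emph{all} $j+1$ receivers' new observations increases the big stack by at most $\min(j+1, r_t)$. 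Since $\min(j+1,r_t) - \min(j,r_t) \le 1$ while $\min(j,r_t) \ge 1$ whenever there is any new contribution at all, the per-slot bookkeeping gives $\Delta(t) - \Delta(t-1) \le \frac{1}{j}\big(\Text{rank}[\bold{Y}_{i_1}^{\le t};\ldots;\bold{Y}_{i_j}^{\le t}] - \Text{rank}[\bold{Y}_{i_1}^{\le t-1};\ldots;\bold{Y}_{i_j}^{\le t-1}]\big)$, and summing over $t=1,\ldots,n$ (using the convention $\Text{rank}[\cdots^0]=0$ from the footnote) yields the claim. Sub-modularity of rank (Lemma \ref{ranksubmod}, row version) is the tool that lets me relate the incremental quantities across the nested column sets cleanly.

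The main obstacle I anticipate is the genericity bookkeeping when $r_t < j$, i.e. when the time-$t$ transmit subspace is lower-dimensional than the number of delayed receivers being stacked: there one must argue that the $j$ new functionals, restricted to a subspace of dimension $r_t$ and applied on top of an already-resolved part, still deliver exactly $r_t$ fresh dimensions to the $j$-receiver stack (no more is possible) while the $(j+1)$-receiver stack also gains only $r_t$ — so $\Delta$ does not increase at that slot, which is consistent with the bound. The delicate point is isolating the "genuinely new" component of the time-$t$ transmit space (quotienting by the span already seen at receivers $i_1,\ldots,i_j$ up to $t-1$, which itself depends on earlier channel realizations that are \emph{not} independent of the current precoders only through $\bm{\mathcal{\tilde G}}^t$) and verifying that the standard "generic vectors are in general position" argument applies conditionally. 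I expect this to be handled by conditioning on $\bm{\mathcal{\tilde G}}^t$ and invoking that a polynomial in the fresh channel entries that is not identically zero vanishes with probability zero — the same mechanism already used to establish Lemma \ref{MIMORRI} — so the extension is technical rather than conceptual; the proof in Appendix \ref{AppMIMORRI} presumably proceeds by exactly this induction on $j$.
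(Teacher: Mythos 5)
Your proposal is correct and takes essentially the same route as the paper's proof in Appendix \ref{AppMIMORRI}: a per-time-slot rank-increment analysis in which the delayed-CSIT genericity step (a fresh channel row of a delayed receiver adds a new dimension almost surely unless the entire time-$t$ transmit rowspan is already contained in the resolved history, established via the Lemma \ref{DCSIT}-type polynomial-vanishing argument) is combined with sub-modularity of rank and then summed over $t$. The only differences are cosmetic bookkeeping: the paper sums per-receiver conditional-rank (indicator) inequalities over the $j$ delayed receivers and rearranges, rather than your $\min(j,r_t)$-versus-$\min(j+1,r_t)$ counting, and it is a direct argument rather than an induction on $j$.
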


\begin{remark}
Lemma \ref{MIMORRI} is a special case of Lemma \ref{MIMORRIk} where $j=1$, $i_1=3 , i_2 = \ell$, and $\mathcal S=\{i\}$.
\end{remark}

Finally, we present the general version of Lemma \ref{LAL}, which is the third main ingredient for the proof of Theorem \ref{mainthmk}.
\begin{lemma}\label{LALk}
{\bf (Least Alignment Lemma)}    For any linear coding strategy $ f^{(n)}$, with corresponding $\bold{V}_{1}^{n}, \ldots,\bold{V}_{k}^{n} $ as defined in (\ref{coding}), and any $\mathcal S\subseteq \{1,2,\ldots, k\}$, if $I_j=N$ for some $j\in \{1,2,\ldots, k\}$,
\begin{align*}
 \forall \ell\in \{1,2,\ldots, k\},\qquad    \Text{ rank} & \left[ \bold{G}_{\ell}^n[\cup_{i\in \mathcal S} \bold{V}_i^n]  \right] \stackrel{a.s.}{\leq} \Text{ rank} \left[\bold{G}_{j}^n[\cup_{i\in \mathcal S} \bold{V}_i^n]  \right],
\end{align*}
where $[\cup_{i\in \mathcal S} \bold{V}_i^n]$  denotes the row concatenation of the precoding matrices $\bold{V}_i^n$, where $i\in \mathcal S$.
\end{lemma}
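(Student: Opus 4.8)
The plan is to isolate the one property that makes receiver $j$ special and then invoke a genericity argument. Since $I_j=N$ we have $j\notin\mathcal{P}\cup\mathcal{D}$, so by the description of the CSIT set $\bm{\mathcal{\tilde G}}^t$ in Section~\ref{model} none of the vectors $\vec{\bold{g}}_j(1),\dots,\vec{\bold{g}}_j(n)$ is ever part of the information available to $\text{Tx}$. Hence, by (\ref{coding}), each $\bold{V}_i^n$ --- and therefore $\bold{V}^n\triangleq[\cup_{i\in\mathcal{S}}\bold{V}_i^n]\in\mathbb{C}^{nm\times M}$ with $M=\sum_{i\in\mathcal{S}}m_i(n)$ --- is a deterministic function of $\{\vec{\bold{g}}_i(t):i\ne j,\ 1\le t\le n\}$ only, and is in particular statistically independent of $(\vec{\bold{g}}_j(t))_{t=1}^n$.

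Next I would set up a ``generic rank dominates'' inequality. For a fixed channel realization let $V(t)\in\mathbb{C}^{m\times M}$ be the row block of $\bold{V}^n$ with row indices $1+(t-1)m,\dots,tm$, and for arbitrary row vectors $\vec{h}(1),\dots,\vec{h}(n)\in\mathbb{C}^{1\times m}$ let $H^n$ be the block-diagonal matrix built from them in exactly the pattern of the channel matrices $\bold{G}_\ell^n$. The $(t,c)$ entry of $H^n\bold{V}^n$ equals $\vec{h}(t)\,V(t)_{\cdot,c}$, linear in the entries of $\vec{h}(t)$, so every minor of $H^n\bold{V}^n$ is a polynomial in the entries of $(\vec{h}(1),\dots,\vec{h}(n))$ with coefficients depending only on $V(1),\dots,V(n)$. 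Put $r\triangleq\max\text{rank}[H^n\bold{V}^n]$ over all such $H^n$; note that $r$ depends on the realized $V(t)$'s but not on $\vec{\bold{g}}_j$. Since for every realization and every $\ell$ the matrix $\bold{G}_\ell^n$ is one admissible choice of $H^n$, we obtain the \emph{pointwise} bound $\text{rank}[\bold{G}_\ell^n\bold{V}^n]\le r$; this stays valid when $\ell\in\mathcal{P}$ (so that $\vec{\bold{g}}_\ell$ also appears inside $\bold{V}^n$), because it holds for any fixed template $V(t)$ whatsoever.

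Finally I would prove $\text{rank}[\bold{G}_j^n\bold{V}^n]=r$ almost surely, which together with the previous step gives the lemma. Condition on the $\sigma$-algebra generated by $\{\vec{\bold{g}}_i(t):i\ne j\}$; then $\bold{V}^n$ and $r$ are measurable while $(\vec{\bold{g}}_j(t))_{t=1}^n$ remains i.i.d.\ with an absolutely continuous law. By the definition of $r$, some $r\times r$ minor of $H^n\bold{V}^n$ is a not-identically-zero polynomial in the entries of $(\vec{h}(1),\dots,\vec{h}(n))$, so its zero set has Lebesgue measure zero and hence probability zero under the law of $(\vec{\bold{g}}_j(t))_t$. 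Off this null set the same minor of $\bold{G}_j^n\bold{V}^n$ (obtained by setting $\vec{h}(t)=\vec{\bold{g}}_j(t)$) is nonzero, so $\text{rank}[\bold{G}_j^n\bold{V}^n]\ge r$; with the trivial reverse inequality this gives equality a.s.\ conditionally, hence a.s. As the argument works verbatim for general $\mathcal{S}$, no induction on $k$ or on the block length $n$ is required.

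The step I expect to be the main obstacle is the probabilistic bookkeeping rather than any computation: when $\ell\in\mathcal{P}$ the matrices $\bold{G}_\ell^n$ and $\bold{V}^n$ are correlated, so the bound $\text{rank}[\bold{G}_\ell^n\bold{V}^n]\le r$ must be obtained pointwise in the realization, not in distribution, whereas the matching lower bound at $\text{Rx}_j$ genuinely needs the independence of $\vec{\bold{g}}_j$ from $\bold{V}^n$ together with the standard fact that a non-identically-zero polynomial vanishes only on a set of measure zero. One could alternatively organize the proof slot by slot --- showing that each channel use adds at least as many new dimensions at $\text{Rx}_j$ as at $\text{Rx}_\ell$ --- but the global minor argument above is shorter.
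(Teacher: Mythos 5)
Your proposal is correct and takes essentially the same route as the paper's proof in Appendix E: both reduce the statement to the observation that minors of the channel-times-precoder matrix are polynomials in the block-diagonal channel entries, and since $I_j=N$ makes $\bold{G}_j^n$ independent of the precoders, any not-identically-zero minor is almost surely nonzero when evaluated at $\bold{G}_j^n$ (the paper's Lemma~\ref{roots}), while no realization at any other receiver can exceed the generic rank. The only difference is packaging—you introduce the generic maximum rank $r$ and show $\text{Rx}_j$ attains it, whereas the paper compares the minors at $\bold{G}_{\ell}^n$ and $\bold{G}_{j}^n$ directly—and your write-up is somewhat more explicit about the conditioning and the pointwise nature of the upper bound when $\ell\in\mathcal P$.
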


Using these three ingredients we now proceed to the proof of Theorem \ref{mainthmk}, and in particular proving the bounds (\ref{IDBthm})-(\ref{LALbound}).

\subsection{Proof of Bound (\ref{IDBthm}) in Theorem \ref{mainthmk}} \label{bound1}
Without loss of generality, suppose $\mathcal P = \{ 1, \ldots , |\mathcal P| \} $, and $\mathcal D = \{ |\mathcal P|+1, \ldots , |\mathcal P|+|\mathcal D| \}$, and $\mathcal N = \{ |\mathcal P|+|\mathcal D|+1, \ldots , k \}$.
In addition, let $i=|\mathcal P|+ |\mathcal D|$, and $\pi_{\mathcal P \cup \mathcal D \setminus i}$ be the identity permutation.
Consequently, we can rewrite (\ref{IDBthm}), and our goal is to show
\begin{equation} \label{unpermuted}
 \sum_{j =1}^{|\mathcal P|+ |\mathcal D|-1} \frac{d_j}{2^j} + d_{|\mathcal P|+ |\mathcal D|} +\sum_{j = |\mathcal P| + |\mathcal D| + 1}^k d_j \leq 1.
\end{equation}

If the $k$-tuple $(d_{1},d_{2},\ldots ,d_{k})$ degrees-of-freedom are linearly achievable, then by Definition \ref{DoFdef}   there exists a sequence $\{  f^{(n)} \}_{n=1}^{\infty}$ such that for each $n$ and the corresponding choice of $(m_{1}(n),m_{2}(n),\ldots, m_{k}(n))$, $(\bold{V}_{1}^n,\bold{V}_{2}^n,\ldots, \bold{V}_{k}^n)$ satisfy the  conditions in (\ref{decode}) and (\ref{DoFcond}).
Therefore, it is sufficient to show
\begin{align}
& \sum_{j =1}^{|\mathcal P|+ |\mathcal D|-1} \frac{m_j(n)}{2^j} +  m_{|\mathcal P|+ |\mathcal D|}(n) + \sum_{j = |\mathcal P| + |\mathcal D| + 1}^k m_j(n)  \stackrel{a.s.}{\leq}  n.\label{proofgoal}
\end{align}
We upper bound each of the three terms on the L.H.S. of (\ref{proofgoal}) separately.
By induction and  application of Lemma \ref{MainIneqk} and (\ref{decode}), one can prove the following claim, which provides an upper bound for the first term on the L.H.S. of (\ref{proofgoal}), and is proved in Appendix \ref{cl12proof}.

\begin{claim} \label{cl12}
\begin{equation} \label{cl12eq}
 \sum_{j=1}^{|\mathcal P|+ |\mathcal D|-1} \frac{m_j(n)}{2^{j}}  \stackrel{a.s.}{\leq } \Text{rank}[ \bold{G}_{|\mathcal P|+ |\mathcal D|}^n [\bold{V}_{1}^{n}\ldots \bold{V}_{|\mathcal P|+ |\mathcal D|-1}^{n}]].
\end{equation}
\end{claim}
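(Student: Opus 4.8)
\textbf{Proof proposal for Claim \ref{cl12}.}

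The plan is to prove (\ref{cl12eq}) by induction on the size of the prefix $\mathcal{S}_t = \{1,\ldots,t\}$, establishing the stronger family of statements
\begin{align}
\sum_{j=1}^{t} \frac{m_j(n)}{2^{j}} \;\stackrel{a.s.}{\leq}\; \Text{rank}\!\left[\bold{G}_{|\mathcal P|+|\mathcal D|}^n [\bold{V}_{1}^{n}\ldots \bold{V}_{t}^{n}]\right] \quad \text{for } t = 1,\ldots,|\mathcal P|+|\mathcal D|-1,\nonumber
\end{align}
and then reading off the case $t = |\mathcal P|+|\mathcal D|-1$. Write $j^\star := |\mathcal P|+|\mathcal D|$; note $I_{j^\star}=D$, so Lemma \ref{MainIneqk} is applicable with $j = j^\star$ in its statement for any choice of $\mathcal S \subseteq \{1,\ldots,k\}$ with $j^\star \notin \mathcal S$ and any $\ell \in \mathcal S$.

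For the \emph{base case} $t=1$: apply Lemma \ref{MainIneqk} with $\mathcal S = \{1\}$, $\ell = 1$, $j = j^\star$. The numerator on the L.H.S. of (\ref{IDB}) becomes $\Text{rank}[\bold{G}_1^n \bold{V}_1^n] - \Text{rank}[\bold{G}_1^n \cdot \mathbf 0] + \Text{rank}[\bold{G}_{j^\star}^n \cdot \mathbf 0]$, which by the convention that an empty concatenation has rank $0$ and by (\ref{decode}) equals $m_1(n)$. Hence $\frac{m_1(n)}{2} \stackrel{a.s.}{\leq} \Text{rank}[\bold{G}_{j^\star}^n \bold{V}_1^n]$, which is exactly the $t=1$ statement.

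For the \emph{inductive step}: assume the claim holds for $t-1$, i.e. $\sum_{j=1}^{t-1}\frac{m_j(n)}{2^j}\stackrel{a.s.}{\leq}\Text{rank}[\bold{G}_{j^\star}^n[\bold{V}_1^n\ldots\bold{V}_{t-1}^n]]$. Apply Lemma \ref{MainIneqk} with $\mathcal S = \{1,\ldots,t\}$, $\ell = t$, $j = j^\star$. Its conclusion reads
\begin{align}
\frac{\Text{rank}[\bold{G}_t^n[\bold{V}_1^n\ldots\bold{V}_t^n]] - \Text{rank}[\bold{G}_t^n[\bold{V}_1^n\ldots\bold{V}_{t-1}^n]] + \Text{rank}[\bold{G}_{j^\star}^n[\bold{V}_1^n\ldots\bold{V}_{t-1}^n]]}{2} \stackrel{a.s.}{\leq} \Text{rank}[\bold{G}_{j^\star}^n[\bold{V}_1^n\ldots\bold{V}_t^n]].\nonumber
\end{align}
By the decodability condition (\ref{decode}) applied at $\text{Rx}_t$ together with sub-modularity of rank (Lemma \ref{ranksubmod}) — exactly the manipulation spelled out in the remark following Lemma \ref{MainIneq} — the first two rank terms in the numerator satisfy $\Text{rank}[\bold{G}_t^n[\bold{V}_1^n\ldots\bold{V}_t^n]] - \Text{rank}[\bold{G}_t^n[\bold{V}_1^n\ldots\bold{V}_{t-1}^n]] \stackrel{a.s.}{\geq} \Text{rank}[\bold{G}_t^n\bold{V}_t^n] = m_t(n)$ (the inequality from sub-modularity, the final equality from (\ref{decode})). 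Substituting this lower bound and then invoking the induction hypothesis on the remaining term $\Text{rank}[\bold{G}_{j^\star}^n[\bold{V}_1^n\ldots\bold{V}_{t-1}^n]]$ gives
\begin{align}
\frac{m_t(n)}{2} + \frac12\sum_{j=1}^{t-1}\frac{m_j(n)}{2^j} \;=\; \sum_{j=1}^{t}\frac{m_j(n)}{2^j} \;\stackrel{a.s.}{\leq}\; \Text{rank}[\bold{G}_{j^\star}^n[\bold{V}_1^n\ldots\bold{V}_t^n]],\nonumber
\end{align}
completing the induction. The main point requiring care — more bookkeeping than genuine obstacle — is the first step of the inductive substitution: one must verify that the combination of (\ref{decode}) and sub-modularity indeed yields $\Text{rank}[\bold{G}_t^n[\bold{V}_1^n\ldots\bold{V}_t^n]] - \Text{rank}[\bold{G}_t^n[\bold{V}_1^n\ldots\bold{V}_{t-1}^n]] \geq m_t(n)$ in the direction needed (it is the "hard" direction of the submodular inequality, not the decodability identity which would give equality only when $\mathcal S$ excludes all of $\{t+1,\ldots,k\}$), and that all identities hold almost surely simultaneously, which is fine since we intersect finitely many probability-one events.
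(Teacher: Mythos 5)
Your overall strategy (iterating the Interference Decomposition Bound and converting the fresh rank increment into $m_\ell(n)$ via sub-modularity plus decodability) is exactly the engine of the paper's proof, and the individual ingredients you invoke are valid, including the two delicate ones: the lower bound $\Text{rank}[\bold{G}_t^n[\bold{V}_1^n\ldots\bold{V}_t^n]]-\Text{rank}[\bold{G}_t^n[\bold{V}_1^n\ldots\bold{V}_{t-1}^n]]\stackrel{a.s.}{\geq}m_t(n)$ does follow from Lemma \ref{ranksubmod} applied to the column sets of $\{\bold{V}_1^n,\ldots,\bold{V}_t^n\}$ and $\{\bold{V}_i^n\}_{i\neq t}$ combined with (\ref{decode}) (note that versus $\Text{rank}[\bold{G}_t^n\bold{V}_t^n]$ alone, plain sub-modularity would give the opposite direction; it is only the identity $\Text{rank}[\bold{G}_t^n\bold{V}_t^n]\stackrel{a.s.}{=}m_t(n)$ from (\ref{decode}) that makes your chain read correctly), and the singleton-$\mathcal S$ use of Lemma \ref{MainIneqk} in your base case is legitimate, degenerating to the rank-ratio statement of Lemma \ref{MIMORRIk}, which is how the paper itself closes the bottom of its recursion.

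However, there is a genuine error in the inductive step: the displayed identity $\frac{m_t(n)}{2}+\frac{1}{2}\sum_{j=1}^{t-1}\frac{m_j(n)}{2^j}=\sum_{j=1}^{t}\frac{m_j(n)}{2^j}$ is false; for $t=2$ the left side is $\frac{m_2(n)}{2}+\frac{m_1(n)}{4}$ while the right side is $\frac{m_1(n)}{2}+\frac{m_2(n)}{4}$. Because you always peel the \emph{largest} index $\ell=t$ of the growing prefix, your recursion $R_t\geq\frac{m_t(n)+R_{t-1}}{2}$ (with $R_t\triangleq\Text{rank}[\bold{G}_{|\mathcal P|+|\mathcal D|}^n[\bold{V}_1^n\ldots\bold{V}_t^n]]$) unrolls to $R_t\geq\sum_{j=1}^{t}\frac{m_j(n)}{2^{t+1-j}}$, i.e.\ the weights come out reversed: user $|\mathcal P|+|\mathcal D|-1$ ends up with weight $\frac12$ and user $1$ with weight $2^{-(|\mathcal P|+|\mathcal D|-1)}$, which is not the inequality of Claim \ref{cl12}. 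The paper avoids this by peeling user $1$ at the outermost level: it proves by induction the statement $\sum_{j=1}^{i-1}\frac{m_j(n)}{2^j}+2^{-(i-1)}\,\Text{rank}[\bold{G}_{|\mathcal P|+|\mathcal D|}^n[\bold{V}_i^n\ldots\bold{V}_{|\mathcal P|+|\mathcal D|-1}^n]]\stackrel{a.s.}{\leq}\Text{rank}[\bold{G}_{|\mathcal P|+|\mathcal D|}^n[\bold{V}_1^n\ldots\bold{V}_{|\mathcal P|+|\mathcal D|-1}^n]]$ (applying Lemma \ref{MainIneqk} with $\ell$ equal to the smallest index of the remaining suffix) and then disposes of the final residual $\Text{rank}[\bold{G}_{|\mathcal P|+|\mathcal D|}^n\bold{V}_{|\mathcal P|+|\mathcal D|-1}^n]$ with Lemma \ref{MIMORRIk} and (\ref{decode}). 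Your argument could be salvaged, since it never uses the CSIT states of users $1,\ldots,|\mathcal P|+|\mathcal D|-1$, by running the prefix induction with those users incorporated in reverse order (user $|\mathcal P|+|\mathcal D|-1$ first, user $1$ last); but as written the induction establishes a permuted-weight inequality, not the claim as stated.
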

We now upper bound $m_{|\mathcal P|+ |\mathcal D|}(n)$, which is the second term on the L.H.S. of (\ref{proofgoal}).
By (\ref{decode}) we obtain
\begin{eqnarray}
m_{|\mathcal P|+ |\mathcal D|}(n) & \stackrel{a.s.}{ =} & \text{rank}[\bold{G}_{|\mathcal P|+ |\mathcal D|}^n [ \cup_{j=1}^k\bold{V}_j^n ]]  -  \text{rank}[\bold{G}_{|\mathcal P|+ |\mathcal D|}^n [ \cup_{j\ne {|\mathcal P|+ |\mathcal D|}}\bold{V}_j^n ]]  \nonumber\\
&\stackrel{\text{(Lemma \ref{ranksubmod})}}{\leq} & \text{rank}[\bold{G}_{|\mathcal P|+ |\mathcal D|}^n [ \bold{V}_1^n  \ldots \bold{V}_{|\mathcal P|+ |\mathcal D|}^n]]   -   \text{rank}[\bold{G}_{|\mathcal P|+ |\mathcal D|}^n [ \bold{V}_1^n  \ldots \bold{V}_{|\mathcal P|+ |\mathcal D|-1}^n]]\nonumber\\
 &\substack{(a)\\ a.s.\\ \leq} & \text{rank}[\bold{G}_{|\mathcal P|+ |\mathcal D|+1}^n [ \bold{V}_1^n  \ldots \bold{V}_{|\mathcal P|+ |\mathcal D|}^n]] -   \text{rank}[\bold{G}_{|\mathcal P|+ |\mathcal D|}^n [ \bold{V}_1^n  \ldots \bold{V}_{|\mathcal P|+ |\mathcal D|-1}^n]], \label{x113}
\end{eqnarray}
where (a) follows by Least Alignment Lemma (Lemma \ref{LALk}) since receiver $|\mathcal P|+ |\mathcal D|+1$ supplies no CSIT.

We now upper bound $\sum_{j = |\mathcal P| + |\mathcal D| + 1}^k m_j(n)$, which is the third term on the L.H.S. of (\ref{proofgoal}).
By  (\ref{decode}), for all $ i\in \{|\mathcal P|+ |\mathcal D|+1, \ldots, k\},$
\begin{align} 
&   m_i(n) \stackrel{a.s.}{=}  \text{rank}[\bold{G}_{i}^n [ \bold{V}_1^n  \ldots \bold{V}_{k}^n]] - \text{rank}[\bold{G}_{i}^n [ \cup_{j\ne i}\bold{V}_j^n ]] \stackrel{\text{(Lemma \ref{ranksubmod})}}{\leq }   \text{rank}[\bold{G}_{i}^n [ \bold{V}_1^n  \ldots \bold{V}_{i}^n]] - \text{rank}[\bold{G}_{i}^n[ \bold{V}_1^n  \ldots \bold{V}_{i-1}^n]] . \nonumber
\end{align}
Hence, by summing over all the inequalities for  $ i\in \{|\mathcal P|+ |\mathcal D|+1, \ldots, k\}$, we obtain
\small
\begin{align}
\sum_{j = |\mathcal P| + |\mathcal D| + 1}^k  m_j(n) \stackrel{a.s.}{\leq} & \text{rank}[\bold{G}_{k}^n [ \bold{V}_1^n  \ldots \bold{V}_{k}^n]]  - \text{rank}[\bold{G}_{|\mathcal P|+ |\mathcal D|+1}^n[ \bold{V}_1^n  \ldots \bold{V}_{|\mathcal P|+ |\mathcal D| }^n]] \nonumber\\
& +  \sum_{i=|\mathcal P|+ |\mathcal D|+1}^{k-1} (\text{rank}[\bold{G}_{i}^n [ \bold{V}_1^n  \ldots \bold{V}_{i}^n]]- \text{rank}[\bold{G}_{i+1}^n[ \bold{V}_1^n  \ldots \bold{V}_{i}^n]]). \label{x115}
\end{align}
\normalsize
Note that since receivers with index in $  \{|\mathcal P|+ |\mathcal D|+1, \ldots, k\}$ supply no CSIT, and due to their channel symmetry, for each $ i\in \{|\mathcal P|+ |\mathcal D|+1, \ldots, k-1\}$ we have
\begin{equation} \label{x116}
\text{rank}[\bold{G}_{i}^n [ \bold{V}_1^n  \ldots \bold{V}_{i}^n]] \stackrel{a.s.}{=} \text{rank}[\bold{G}_{i+1}^n[ \bold{V}_1^n  \ldots \bold{V}_{i}^n]].
\end{equation}

Therefore, by (\ref{x115}), (\ref{x116}) we obtain
\begin{align}
\sum_{j = |\mathcal P| + |\mathcal D| + 1}^k   m_j(n) \stackrel{a.s.}{\leq}& \text{rank}[\bold{G}_{k}^n [ \bold{V}_1^n  \ldots \bold{V}_{k}^n]] - \text{rank}[\bold{G}_{|\mathcal P|+ |\mathcal D|+1}^n [ \bold{V}_1^n  \ldots \bold{V}_{|\mathcal P|+ |\mathcal D|}^n]]. \label{x127}
\end{align}

Hence, by summing the inequalities  in (\ref{cl12eq}), (\ref{x113}), and (\ref{x127}) we obtain 
\begin{align}
& \sum_{j =1}^{|\mathcal P|+ |\mathcal D|-1} \frac{m_j(n)}{2^j} +  m_{|\mathcal P|+ |\mathcal D|}(n) + \sum_{j = |\mathcal P| + |\mathcal D| + 1}^k m_j(n) \stackrel{a.s.}{\leq}\text{rank}[\bold{G}_{k}^n [ \bold{V}_1^n  \ldots \bold{V}_{k}^n]] \leq n  ,\nonumber
\end{align}
 which proves (\ref{proofgoal}), thus, completing the  proof of bound (\ref{IDBthm}) in Theorem \ref{mainthmk}.

\subsection{Proof of Bound (\ref{MATthm}) in Theorem \ref{mainthmk}} \label{bound2}

 Without loss of generality, suppose $\mathcal P = \{ 1, \ldots , |\mathcal P| \} $, and $\mathcal D = \{ |\mathcal P|+1, \ldots , |\mathcal P|+|\mathcal D| \}$, and $\mathcal N = \{ |\mathcal P|+|\mathcal D|+1, \ldots , k \}$.
In addition, let  $\pi_{\mathcal D }$ be the reverse of the identity permutation.
Consequently, our goal becomes to show
\begin{equation} \label{unpermuted2}
\sum_{j=1}^{|\mathcal P|}\frac{d_j}{k} + \sum_{j=|\mathcal P|+1}^{|\mathcal P|+|\mathcal D|}  \frac{d_j}{ |\mathcal P|+|\mathcal D| +1 - j} +\sum_{j= |\mathcal P|+|\mathcal D|+1}^{k} d_j \leq 1 .
\end{equation}

Suppose $(d_1,\ldots , d_k)$ are linearly achievable as defined in Definition \ref{DoFdef}.
Then, by (\ref{DoFcond}), it is sufficient to show
\begin{equation} \label{proofgoal2}
\sum_{j=1}^{|\mathcal P|}\frac{m_j(n)}{k} + \sum_{j=|\mathcal P|+1}^{|\mathcal P|+|\mathcal D|}  \frac{m_j(n)}{ |\mathcal P|+|\mathcal D| +1 - j} +\sum_{j= |\mathcal P|+|\mathcal D|+1}^{k} m_j(n) \stackrel{a.s.}{\leq}  n.
\end{equation}
We upper bound each of the three terms on the L.H.S. of (\ref{proofgoal2}) separately.
We first upper bound the first term.
By (\ref{decode}), for all  $j=1,\ldots, |\mathcal P|, $
\begin{eqnarray}
    m_j(n) & \stackrel{a.s.}{=}  & \text{rank}[\bold{G}_j^n [\bold{V}_1^n \ldots  \bold{V}_k^n ]]   -  \text{rank}[\bold{G}_j^n [\cup_{i\ne j}\bold{V}_i^n ]]  \nonumber\\
&\substack{\text{(Lemma \ref{ranksubmod})} \\a.s.\\ \leq} & \text{rank}[\bold{G}_j^n [\bold{V}_1^n \ldots  \bold{V}_j^n ]]  -  \text{rank}[\bold{G}_j^n [\bold{V}_{1}^n \ldots  \bold{V}_{j-1}^n ] ]  \nonumber\\
&\substack{\text{(a)} \\ \leq} & \text{rank}[[\bold{G}_1^n; \ldots ; \bold{G}_k^n ] [\bold{V}_1^n \ldots  \bold{V}_j^n ]]   -  \text{rank}[[\bold{G}_1^n; \ldots ; \bold{G}_k^n ] [\bold{V}_{1}^n \ldots  \bold{V}_{j-1}^n ] ],  \nonumber
\end{eqnarray}
where (a) follows from the fact that for four matrices $A,B,C,D$,
 $\Text{ rank}[A \quad B] -  \Text{ rank}[ B] \leq  \Text{ rank}[A \quad B; C \quad D] -  \Text{ rank}[ B;D]$, and it can be proven using straightforward linear algebra.

By summing the above inequalities for $j=1,\ldots, |\mathcal P|$, and dividing both sides of the resulting inequality by $k$ we obtain
\begin{align}
\sum_{j=1}^{|\mathcal P|} \frac{m_j(n)}{k}  \stackrel{a.s.}{\leq}   \frac{ \text{rank}[[\bold{G}_1^n; \ldots ; \bold{G}_k^n ] [\bold{V}_1^n \ldots  \bold{V}_{|\mathcal P|}^n ]]  }{k}.\label{instbound}
\end{align}

We now upper bound the second term on the L.H.S of (\ref{proofgoal2}).
For the receivers supplying delayed CSIT, i.e. $\text{Rx}_j$, where $j=|\mathcal P|+ 1,\ldots, |\mathcal P|+|\mathcal D|,$
 by (\ref{decode}) we have:
\begin{eqnarray}
   m_j(n) &\stackrel{a.s.}{=}&  \text{rank}[\bold{G}_j^n [\bold{V}_1^n \ldots  \bold{V}_k^n ]]   -  \text{rank}[\bold{G}_j^n [\cup_{i\ne j}\bold{V}_i^n ]]  \nonumber\\
&\substack{\text{(Lemma \ref{ranksubmod})} \\a.s.\\ \leq} & \text{rank}[\bold{G}_j^n [\bold{V}_1^n \ldots  \bold{V}_j^n ]] -  \text{rank}[\bold{G}_j^n [\bold{V}_{1}^n \ldots  \bold{V}_{j-1}^n ] ]  \nonumber\\
&\substack{\text{(b)} \\ \leq} & \text{rank}[[\bold{G}_j^n; \ldots ; \bold{G}_{|\mathcal P|+|\mathcal D| }^n ] [\bold{V}_1^n \ldots  \bold{V}_j^n ]]  -  \text{rank}[[\bold{G}_j^n; \ldots ; \bold{G}_{|\mathcal P|+|\mathcal D|}^n ] [\bold{V}_{1}^n \ldots  \bold{V}_{j-1}^n ] ],  \nonumber 
\end{eqnarray}
where (b) follows from the fact that for four matrices $A,B,C,D$,
 $\Text{ rank}[A \quad B] -  \Text{ rank}[ B] \leq  \Text{ rank}[A \quad B; C \quad D] -  \Text{ rank}[ B;D]$.
Hence, if we divide both sides of the above inequality by  $|\mathcal P|+|\mathcal D|+1 - j$, and sum over all inequalities for $ j=|\mathcal P|+ 1,\ldots, |\mathcal P|+|\mathcal D|$, we obtain
\begin{align}
&\sum_{j=|\mathcal P|+1}^{|\mathcal P|+|\mathcal D|}  \frac{m_j(n)}{ |\mathcal P|+|\mathcal D|+1  - j} \stackrel{a.s.}{\leq } \text{rank}  [\bold{G}_{|\mathcal P|+|\mathcal D|}^n   [\bold{V}_{1}^n \ldots   \bold{V}_{|\mathcal P|+|\mathcal D|}^n]  -   \frac{\text{rank}[[\bold{G}_{|\mathcal P| + 1}^n; \ldots ; \bold{G}_{|\mathcal P|+|\mathcal D|}^n ] [\bold{V}_{1}^n \ldots  \bold{V}_{|\mathcal P| }^n ] ]}{|\mathcal D|} \nonumber\\
&\qquad + \sum_{j= |\mathcal P| +1 }^{|\mathcal P| +|\mathcal D|-1}   (\frac{\text{rank}[[\bold{G}_j^n; \ldots ; \bold{G}_{|\mathcal P|+|\mathcal D| }^n ] [\bold{V}_1^n \ldots  \bold{V}_j^n ]]}{|\mathcal P|+|\mathcal D|+1 - j}  -  \frac{\text{rank}[[\bold{G}_{j+1}^n; \ldots ; \bold{G}_{|\mathcal P|+|\mathcal D|}^n ] [\bold{V}_{1}^n \ldots  \bold{V}_{j}^n ] ]}{|\mathcal P|+|\mathcal D| - j} )  \nonumber\\
&\substack{\text{(Lemma \ref{MIMORRIk})} \\a.s.\\ \leq}  \text{rank}  [\bold{G}_{|\mathcal P|+|\mathcal D|}^n   [\bold{V}_{1}^n \ldots   \bold{V}_{|\mathcal P|+|\mathcal D|}^n] -   \frac{\text{rank}[[\bold{G}_{|\mathcal P| + 1}^n; \ldots ; \bold{G}_{|\mathcal P|+|\mathcal D|}^n ] [\bold{V}_{1}^n \ldots  \bold{V}_{|\mathcal P| }^n ] ]}{|\mathcal D|} \nonumber\\
&\substack{\text{(Lemma \ref{LALk})} \\a.s.\\ \leq}  \text{rank}  [\bold{G}_{|\mathcal P|+|\mathcal D|+1}^n   [\bold{V}_{1}^n \ldots   \bold{V}_{|\mathcal P|+|\mathcal D|}^n] -   \frac{\text{rank}[[\bold{G}_{|\mathcal P| + 1}^n; \ldots ; \bold{G}_{|\mathcal P|+|\mathcal D|}^n ] [\bold{V}_{1}^n \ldots  \bold{V}_{|\mathcal P| }^n ] ]}{|\mathcal D|}. \label{delayedbound}
\end{align}

We now upper bound the third term on the L.H.S of (\ref{proofgoal2}) exactly the same way as we upper bounded the third term on the L.H.S. of (\ref{proofgoal}). To avoid redundancy, we only restate the resulting bound which was stated in (\ref{x127}).
\begin{align}
\sum_{j=|\mathcal P|+ |\mathcal D|+1}^{k} m_j(n) \stackrel{a.s.}{\leq}& \text{rank}[\bold{G}_{k}^n [ \bold{V}_1^n  \ldots \bold{V}_{k}^n]] - \text{rank}[\bold{G}_{|\mathcal P|+ |\mathcal D|+1}^n [ \bold{V}_1^n  \ldots \bold{V}_{|\mathcal P|+ |\mathcal D|}^n]]. \label{nocsitbound}
\end{align}

We now merge the upper bounds on individual terms on the L.H.S. of (\ref{proofgoal2}).
By summing (\ref{instbound}), (\ref{delayedbound}), and (\ref{nocsitbound}), we obtain
\begin{eqnarray} 
&&\sum_{j=1}^{|\mathcal P|}\frac{m_j(n)}{k} + \sum_{j=|\mathcal P|+1}^{|\mathcal P|+|\mathcal D|}  \frac{m_j(n)}{ |\mathcal P|+|\mathcal D| +1 - j} +\sum_{j= |\mathcal P|+|\mathcal D|+1}^{k} m_j(n) \nonumber\\
&\stackrel{a.s.}{\leq}& \text{rank}  [\bold{G}_{k}^n   [\bold{V}_{1}^n \ldots   \bold{V}_{k}^n]] +   \frac{ \text{rank}[[\bold{G}_1^n; \ldots ; \bold{G}_k^n ] [\bold{V}_1^n \ldots  \bold{V}_{|\mathcal P|}^n ]]  }{k}-  \frac{\text{rank}[[\bold{G}_{|\mathcal P| + 1}^n; \ldots ; \bold{G}_{|\mathcal P|+|\mathcal D|}^n ] [\bold{V}_{1}^n \ldots  \bold{V}_{|\mathcal P| }^n ] ]}{|\mathcal D|}  \nonumber\\
&\substack{(c) \\a.s.\\ \leq}& \text{rank}  [\bold{G}_{k}^n   [\bold{V}_{1}^n \ldots   \bold{V}_{k}^n]] \leq n, \label{finaleq}
\end{eqnarray}
where (c) follows from Claim \ref{MIMOvariant}, which is stated below and proved in Appendix \ref{MIMOvariantApp}.

\begin{claim} \label{MIMOvariant}
\begin{equation}
   \frac{ \Text{rank}[[\bold{G}_1^n; \ldots ; \bold{G}_k^n ] [\bold{V}_1^n \ldots  \bold{V}_{|\mathcal P|}^n ]]  }{k}\stackrel{a.s.}{\leq} \frac{\Text{rank}[[\bold{G}_{|\mathcal P| + 1}^n; \ldots ; \bold{G}_{|\mathcal P|+|\mathcal D|}^n ] [\bold{V}_{1}^n \ldots  \bold{V}_{|\mathcal P| }^n ] ]}{|\mathcal D|}.
\end{equation}
\end{claim}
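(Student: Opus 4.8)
The plan is to rewrite the left-hand numerator $\Text{rank}\big[[\bold{G}_1^n;\ldots;\bold{G}_k^n]\,[\bold{V}_1^n \ldots \bold{V}_{|\mathcal P|}^n]\big]$ as a telescoping sum of rank increments, one for each receiver lying outside $\mathcal D$, and to bound every increment by $\tfrac{1}{|\mathcal D|}$ times the right-hand numerator using a single invocation of the MIMO Rank Ratio Inequality (Lemma~\ref{MIMORRIk}) against the block of delayed-CSIT receivers, combined with sub-modularity of rank (Lemma~\ref{ranksubmod}). Write $\bold{Y}_i^n \triangleq \bold{G}_i^n[\bold{V}_1^n \ldots \bold{V}_{|\mathcal P|}^n]$ for $i=1,\ldots,k$ and $r \triangleq \Text{rank}[\bold{Y}_{|\mathcal P|+1}^n;\ldots;\bold{Y}_{|\mathcal P|+|\mathcal D|}^n]$, so that $[\bold{G}_1^n;\ldots;\bold{G}_k^n]\,[\bold{V}_1^n \ldots \bold{V}_{|\mathcal P|}^n]=[\bold{Y}_1^n;\ldots;\bold{Y}_k^n]$ and the claim is exactly $\Text{rank}[\bold{Y}_1^n;\ldots;\bold{Y}_k^n]\stackrel{a.s.}{\le}\tfrac{k}{|\mathcal D|}r$. (If $|\mathcal P|=0$ the concatenation $[\bold{V}_1^n \ldots \bold{V}_{|\mathcal P|}^n]$ is empty and both sides vanish; if $|\mathcal D|=k$ then $\mathcal P=\mathcal N=\emptyset$ and the two numerators coincide; so we may assume $1\le|\mathcal D|\le k-1$.)

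First I would relabel the $k-|\mathcal D|$ receivers in $\mathcal P\cup\mathcal N$ as $r_1,\ldots,r_{k-|\mathcal D|}$, and for $\ell=0,1,\ldots,k-|\mathcal D|$ let $\bold{T}_\ell$ be the vertical concatenation of $\bold{Y}_{|\mathcal P|+1}^n,\ldots,\bold{Y}_{|\mathcal P|+|\mathcal D|}^n,\bold{Y}_{r_1}^n,\ldots,\bold{Y}_{r_\ell}^n$, so that $\bold{T}_0$ is the delayed block (with $\Text{rank}[\bold{T}_0]=r$) and $\bold{T}_{k-|\mathcal D|}$ differs from $[\bold{Y}_1^n;\ldots;\bold{Y}_k^n]$ only by a reordering of rows, hence has the same rank. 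Telescoping,
\begin{equation}
\Text{rank}[\bold{Y}_1^n;\ldots;\bold{Y}_k^n] = r + \sum_{\ell=1}^{k-|\mathcal D|}\big(\Text{rank}[\bold{T}_\ell]-\Text{rank}[\bold{T}_{\ell-1}]\big).
\end{equation}
For each $\ell$ I would apply the row-version of sub-modularity of rank (Lemma~\ref{ranksubmod}) to the ambient matrix $[\bold{Y}_1^n;\ldots;\bold{Y}_k^n]$, taking the two row-sets to be the rows of $\bold{T}_{\ell-1}$ and the rows of $[\bold{T}_0;\bold{Y}_{r_\ell}^n]$; since these intersect exactly in the rows of $\bold{T}_0$ (the receivers $r_1,\ldots,r_\ell$ being distinct) and their union is the row-set of $\bold{T}_\ell$, sub-modularity gives $\Text{rank}[\bold{T}_\ell]-\Text{rank}[\bold{T}_{\ell-1}]\le\Text{rank}[\bold{T}_0;\bold{Y}_{r_\ell}^n]-r$; that is, adjoining $\bold{Y}_{r_\ell}^n$ to the whole current prefix cannot raise the rank by more than adjoining it to the delayed block alone.

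Finally I would bound $\Text{rank}[\bold{T}_0;\bold{Y}_{r_\ell}^n]$ by Lemma~\ref{MIMORRIk} with $\mathcal S=\{1,\ldots,|\mathcal P|\}$, $j=|\mathcal D|$, the first $j$ receivers taken to be the delayed-CSIT receivers $\text{Rx}_{|\mathcal P|+1},\ldots,\text{Rx}_{|\mathcal P|+|\mathcal D|}$, and $\text{Rx}_{i_{j+1}}=\text{Rx}_{r_\ell}$ (the lemma places no hypothesis on this last receiver): this yields $\Text{rank}[\bold{T}_0;\bold{Y}_{r_\ell}^n]\stackrel{a.s.}{\le}\tfrac{|\mathcal D|+1}{|\mathcal D|}\,r$, so each increment is $\stackrel{a.s.}{\le}\tfrac{r}{|\mathcal D|}$. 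Summing the $k-|\mathcal D|$ increments gives $\Text{rank}[\bold{Y}_1^n;\ldots;\bold{Y}_k^n]\stackrel{a.s.}{\le}r+(k-|\mathcal D|)\tfrac{r}{|\mathcal D|}=\tfrac{k}{|\mathcal D|}\,r$, which is the claim after dividing by $k$. The one thing that must be gotten right — and essentially the only way to go wrong — is that Lemma~\ref{MIMORRIk} cannot be chained directly through all $k$ receivers one at a time, because as soon as a receiver outside $\mathcal D$ has been appended the hypothesis that the first $j$ receivers supply delayed CSIT fails; routing every increment back against the fixed delayed block and letting sub-modularity absorb the already-appended receivers is exactly what circumvents this.
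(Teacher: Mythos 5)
Your proof is correct, but it takes a structurally different route from the paper. The paper proves Claim~\ref{MIMOvariant} by a per-time-slot argument in the conditional-rank notation $\Text{rank}[A|B]=\Text{rank}[A;B]-\Text{rank}[B]$: for each $t$ it shows $(|\mathcal P|+|\mathcal N|)\,\Text{rank}[\bold{Y}_{\mathcal D}(t)\,|\,\bold{Y}_{\mathcal D}^{t-1}]\stackrel{a.s.}{\geq}|\mathcal D|\,\Text{rank}[\bold{Y}_{\mathcal P\cup\mathcal N}(t)\,|\,[\bold{Y}_{\mathcal D}^{n};\bold{Y}_{\mathcal P\cup\mathcal N}^{t-1}]]$, re-using inside the claim's proof the same delayed-CSIT innovation argument (the steps (\ref{a1})--(\ref{a5}) / Lemma~\ref{DCSIT}) that underlies Lemma~\ref{MIMORRIk}, and then sums over $t$; it never invokes Lemma~\ref{MIMORRIk} as a black box. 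You instead work entirely at the block level: telescope over the receivers in $\mathcal P\cup\mathcal N$, control each increment by row-submodularity against the fixed delayed block, and bound $\Text{rank}[\bold{T}_0;\bold{Y}_{r_\ell}^n]$ by a single application of Lemma~\ref{MIMORRIk} with $j=|\mathcal D|$ and $i_{j+1}=r_\ell$ (correctly noting the lemma imposes no CSIT hypothesis on the last receiver, and correctly refusing to chain the lemma through non-delayed receivers). Your edge cases ($|\mathcal P|=0$, $|\mathcal D|=k$, so that $j\le k-1$ in the lemma) and the submodularity bookkeeping (intersection of the two row sets is exactly $\bold{T}_0$ since the appended receivers are distinct) check out, and the arithmetic $r+(k-|\mathcal D|)\tfrac{r}{|\mathcal D|}=\tfrac{k}{|\mathcal D|}r$ gives exactly the claim. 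What each approach buys: yours is shorter and more modular, since the delayed-CSIT content is absorbed entirely into the already-proved Lemma~\ref{MIMORRIk}; the paper's is self-contained in the conditional-rank calculus and makes the temporal mechanism explicit, at the cost of essentially re-deriving the lemma's key step inside the claim. Both ultimately rest on the same fact that rank innovations at delayed-CSIT receivers almost surely dominate those at any other receiver.
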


Hence, from (\ref{finaleq}), the proof of (\ref{proofgoal2}) is complete, which concludes the proof of (\ref{MATthm}) in Theorem \ref{mainthmk}.

\subsection{Proof of Bound (\ref{LALbound}) in Theorem \ref{mainthmk}} \label{bound3}
The proof of (\ref{LALbound}) is similar to proof of (\ref{IDBthm}); however, the proof is presented here for completeness.
 Without loss of generality, suppose  $ i = |\mathcal P|+|\mathcal D| $, and $\mathcal N = \{ |\mathcal P|+|\mathcal D|+1, \ldots , k \}$.
Consequently, our goal is to show
\begin{equation} \label{unpermuted3}
 d_{ |\mathcal P|+|\mathcal D| } +\sum_{j= |\mathcal P|+|\mathcal D|+1}^{k} d_j \leq 1 .
\end{equation}

Suppose $(d_1,\ldots , d_k)$ are linearly achievable as defined in Definition \ref{DoFdef}.
Then, by (\ref{DoFcond}), it is sufficient to show
\begin{equation} \label{proofgoal3}
 m_{ |\mathcal P|+|\mathcal D| }(n) +\sum_{j= |\mathcal P|+|\mathcal D|+1}^{k} m_j(n) \stackrel{a.s.}{\leq}  n.
\end{equation}
We upper bound each of the two terms on the L.H.S. of (\ref{proofgoal3}) separately. 
For the first term on the L.H.S of (\ref{proofgoal3}), by (\ref{decode}), we obtain
\begin{eqnarray}
    m_{|\mathcal P|+|\mathcal D|}(n)  & \stackrel{a.s.}{=}  & \text{rank}[\bold{G}_{|\mathcal P|+|\mathcal D|}^n [\bold{V}_1^n \ldots  \bold{V}_k^n ]]   -  \text{rank}[\bold{G}_{|\mathcal P|+|\mathcal D|}^n [\cup_{i\ne |\mathcal P|+|\mathcal D|}\bold{V}_i^n ]]  \nonumber\\
& \substack{\text{(Lemma \ref{ranksubmod})} \\a.s.\\ \leq} & \text{rank}[\bold{G}_{|\mathcal P|+|\mathcal D|}^n [\bold{V}_1^n \ldots  \bold{V}_{|\mathcal P|+|\mathcal D|}^n ]]  -  \text{rank}[\bold{G}_{|\mathcal P|+|\mathcal D|}^n [\bold{V}_{1}^n \ldots  \bold{V}_{|\mathcal P|+|\mathcal D|-1}^n ] ] \nonumber  \\
&\substack{\text{(Lemma \ref{LALk})} \\a.s.\\ \leq} & \text{rank}[\bold{G}_{|\mathcal P|+|\mathcal D|+1}^n [\bold{V}_1^n \ldots  \bold{V}_{|\mathcal P|+|\mathcal D|}^n ]]  -  \text{rank}[\bold{G}_{|\mathcal P|+|\mathcal D|}^n [\bold{V}_{1}^n \ldots  \bold{V}_{|\mathcal P|+|\mathcal D|-1}^n ] ] .
\label{1stterm}
\end{eqnarray}

We now upper bound the second term on the L.H.S of (\ref{proofgoal3}), exactly the same way as we upper bounded the third term on the L.H.S. of (\ref{proofgoal}). To avoid redundancy, we only restate the resulting bound which was stated in (\ref{x127}).
\begin{align}
\sum_{j=|\mathcal P|+ |\mathcal D|+1}^{k} m_j(n) \stackrel{a.s.}{\leq}& \text{rank}[\bold{G}_{k}^n [ \bold{V}_1^n  \ldots \bold{V}_{k}^n]] - \text{rank}[\bold{G}_{|\mathcal P|+ |\mathcal D|+1}^n [ \bold{V}_1^n  \ldots \bold{V}_{|\mathcal P|+ |\mathcal D|}^n]]. 
\end{align}

We now sum the upper bounds on individual terms on the L.H.S. of (\ref{proofgoal3}):
\begin{align} 
&  m_{|\mathcal P|+|\mathcal D|}(n)  +  \sum_{j= |\mathcal P|+|\mathcal D|+1}^{k} m_j(n) \stackrel{a.s.}{\leq} \text{rank}  [\bold{G}_{k}^n   [\bold{V}_{1}^n \ldots   \bold{V}_{k}^n]]  -  \text{rank}[\bold{G}_{|\mathcal P|+|\mathcal D|}^n [\bold{V}_{1}^n \ldots  \bold{V}_{|\mathcal P|+|\mathcal D|-1}^n ] ]  \leq n,
\end{align}
which completes the proof of (\ref{proofgoal3}), thus concluding the proof of (\ref{LALbound}) in Theorem \ref{mainthmk}.

\section{conclusion}
In this paper we studied the impact of hybrid CSIT on the linear DoF (LDoF) of broadcast channels with a multiple-antenna transmitter and $k$ single-antenna receivers (MISO BC), where the CSIT supplied by each receiver can be instantaneous ($P$), delayed ($D$), or none ($N$).
We first focused on the $3$-user MISO BC; and we  completely characterized the DoF region for all possible hybrid CSIT configurations, assuming  linear encoding strategies at the transmitters.
In order to prove the result, we presented 3 key tools, and in particular, developed  a novel bound,  called \emph{Interference Decomposition Bound}, which provides a lower bound on the interference dimension at a receiver which supplies delayed CSIT based on the average dimension of constituents of that interference, thereby decomposing the interference into its individual components.

We then extended our main proof ingredients to the general $k$-user setting; and  we  presented a general outer bound on linear DoF region of the $k$-user MISO BC with arbitrary CSIT configuration.
We demonstrated how the bound  provides an approximate characterization of  linear sum-DoF to within an additive gap of $0.5$ for the broad range of scenarios in which the number of receivers supplying instantaneous CSIT is greater than the  number of receivers supplying delayed CSIT.
In addition, for the case where only one receiver supplies delayed CSIT, we completely characterized the linear sum-DoF.

There are several future directions the one can pursue in regards to this work. 
An interesting direction is to improve both the inner and outer bounds for linear DoF of $k$-user MISO BC, where the number of receivers supplying instantaneous CSIT is less than the number of receivers supplying delayed CSIT.
Another interesting future direction is to extend the results to the non-linear setting (DoF). 
To this aim, one needs to extend the three main ingredients of the proof of outer bounds to the non-linear setting.
Least Alignment Lemma has recently been extended to the non-linear setting in \cite{GholamiJafar}.
Hence, an interesting direction would be to extend the Interference Decomposition Bound and MIMO Rank Ratio Inequality for BC to the non-linear setting.

\begin{appendices}

\section{Proof of Converse for Theorem \ref{mainthm3} } \label{convmainthm3}

For each CSIT configuration considered in Table \ref{tab:template}
we provide the converse proof.
Note that the converse proof for the cases $PDD$ and $PDN$ are already provided in Section \ref{3converse}.
Furthermore, since for the case of  $PPP$ the only bound is $0\leq d_1,d_2,d_3 \leq 1$ according to Table \ref{tab:template}, the proof is trivial.
We now prove the converse for Theorem \ref{mainthm3} for the rest of the CSIT configurations.

\subsection{$PPD$}

Note that as mentioned in Remark \ref{PDDtoPPD}, in order to prove $\frac{d_1}{2} + \frac{d_2}{4} + d_3 \leq 1$ for $PDD$, we did not rely on  any specific CSIT assumption with respect to $\text{Rx}_2$. Therefore, the bound $\frac{d_1}{2} + \frac{d_2}{4} + d_3 \leq 1$ also holds for the case of $PPD$. 
Moreover, note that by symmetry one can conclude that $\frac{d_1}{4} + \frac{d_2}{2} + d_3 \leq 1$ also holds for  $PPD$.
Hence, since  $\frac{d_1}{2} + \frac{d_2}{4} + d_3 \leq 1$ and $\frac{d_1}{4} + \frac{d_2}{2} + d_3 \leq 1$ constitute the LDoF region for $PPD$  according to Table \ref{tab:template}, the derivations in the converse proof of $PDD$ also prove the converse for $PPD$.

\subsection{$PPN$}
According to Table \ref{tab:template}, it is sufficient to show that 
$ d_1 + d_3 \leq 1$ and $d_2+ d_3 \leq 1$.
We only show $d_1 + d_3 \leq 1$; since $d_2+ d_3 \leq 1$ can be proven similarly due to symmetry.
Suppose $(d_1, d_2, d_3)$ is linearly achievable as defined in Definition \ref{DoFdef}.
Thus, according to (\ref{DoFcond}), it is sufficient to show that $m_1(n) + m_3(n) \stackrel{a.s.}{\leq} n$.
By the Decodability condition in (\ref{decode}) we have,
\begin{eqnarray}
m_1(n)  + m_3(n)  & \substack{\text{(\ref{decode})} \\ a.s. \\ =}&
\Text{rank}[ \bold{G}_{1}^n \bold{V}_1^n]  + m_3(n) \nonumber\\
& \substack{\text{(\ref{decode})} \\ a.s. \\ =} &  \Text{rank}[ \bold{G}_{1}^n \bold{V}_1^n]+ \Text{rank}[ \bold{G}_{3}^n [\bold{V}_1^n \quad \bold{V}_2^n  \quad \bold{V}_3^n]]   -  \Text{rank}[ \bold{G}_{3}^n [\bold{V}_1^n \quad \bold{V}_2^n] ] \nonumber\\
 & \substack{\text{(Lemma \ref{ranksubmod})}  \\ \leq} &  \Text{rank}[ \bold{G}_{1}^n \bold{V}_1^n]+ \Text{rank}[ \bold{G}_{3}^n [\bold{V}_1^n \quad \bold{V}_3^n]]  -  \Text{rank}[ \bold{G}_{3}^n \bold{V}_1^n ]  \nonumber\\
 & \substack{ (\text{Lemma \ref{LAL})} \\ a.s.  \\ \leq} &   \Text{rank}[ \bold{G}_{3}^n [\bold{V}_1^n \quad \bold{V}_3^n]]    \leq n, \nonumber
\end{eqnarray}
which completes the proof of converse for the case of $PPN$.

\subsection{$PNN,DNN,NNN$}
According to Table \ref{tab:template}, it is sufficient to show that 
$ d_1 + d_2+ d_3 \leq 1$.
In addition, note that it is sufficient to prove $ d_1 + d_2+ d_3 \leq 1$ for  the case of $PNN$; since any upper bound for $PNN$ is also a valid bound for $DNN$ and $NNN$.
Suppose $(d_1, d_2, d_3)$ is linearly achievable as defined in Definition \ref{DoFdef}.
Then, according to (\ref{DoFcond}), it is sufficient to show that $m_1(n) + m_2(n)+ m_3(n) \stackrel{a.s.}{\leq} n$.
By the Decodability condition in (\ref{decode}) we have,
\begin{eqnarray}
m_1(n) +m_2(n)+ m_3(n) & \substack{ a.s. \\ =}  &
\Text{rank}[ \bold{G}_{1}^n \bold{V}_1^n]  +m_2(n)+ m_3(n) \nonumber\\
& \substack{\text{(\ref{decode})} \\ a.s. \\ =} &  \Text{rank}[ \bold{G}_{1}^n \bold{V}_1^n]+ \Text{rank}[ \bold{G}_{2}^n [\bold{V}_1^n \quad \bold{V}_2^n  \quad \bold{V}_3^n]] -  \Text{rank}[ \bold{G}_{2}^n [\bold{V}_1^n \quad \bold{V}_3^n] ] + m_3(n)  \nonumber\\
 & \substack{\text{(Lemma \ref{ranksubmod})}  \\ \leq} &  \Text{rank}[ \bold{G}_{1}^n \bold{V}_1^n]+ \Text{rank}[ \bold{G}_{2}^n [\bold{V}_1^n \quad \bold{V}_2^n]]  -  \Text{rank}[ \bold{G}_{2}^n \bold{V}_1^n ]  + m_3(n) \nonumber\\
 & \substack{(a)  \\ a.s.  \\ \leq}  &  \Text{rank}[ \bold{G}_{2}^n [\bold{V}_1^n \quad \bold{V}_2^n]] +m_3(n) \nonumber\\
 &\substack{\text{(\ref{decode})} \\ a.s. \\ =} &    \Text{rank}[ \bold{G}_{2}^n [\bold{V}_1^n \quad \bold{V}_2^n]] + \Text{rank}[ \bold{G}_{3}^n [\bold{V}_1^n \quad \bold{V}_2^n  \quad \bold{V}_3^n]]  -  \Text{rank}[ \bold{G}_{3}^n [\bold{V}_1^n \quad \bold{V}_2^n] ]  \nonumber\\
 & \substack{ (\text{Lemma \ref{LAL})} \\ a.s.  \\ \leq}  &  \Text{rank}[ \bold{G}_{3}^n [\bold{V}_1^n \quad \bold{V}_2^n\quad \bold{V}_3^n]] \leq n , \nonumber
\end{eqnarray}
where (a) follows by applying Lemma \ref{LAL} and considering $\text{Rx}_2$ as the receiver which supplies no CSIT.
Hence, the proof of converse for the cases  $PNN,DNN,NNN$ is complete.

\subsection{$DDD$}
The bounds stated in Table \ref{tab:template} for $DDD$ have been proven in \cite{MAT} for general encoding schemes via network enhancement  and using the fact that in physically degraded broadcast channel feedback does not increase the capacity.
Therefore, the same bounds also hold for the class of linear schemes.
See \cite{MAT} for the bounds on the DoF of $k$-user MISO broadcast channel with delayed CSIT.

\subsection{$DDN$}
Note that according to Table \ref{tab:template} and due to symmetry of the first two users it is sufficient to show that 
$ \frac{d_1}{2} + d_2+ d_3 \leq 1$. The other inequality (i.e. $d_1 +  \frac{d_2}{2} + d_3 \leq 1$) can be proven similarly.
Suppose $(d_1, d_2, d_3)$ is linearly achievable, as defined in Definition \ref{DoFdef}. 
Thus, according to (\ref{DoFcond}), it is sufficient to show that $\frac{m_1(n)}{2} + m_2(n)+ m_3(n) \stackrel{a.s.}{\leq} n$.
We have
\begin{eqnarray}
\frac{m_1(n)}{2} +m_2(n)+ m_3(n)  &\substack{(\ref{decode})\\ a.s. \\ =}  &
\frac{\Text{rank}[ \bold{G}_{1}^n \bold{V}_1^n]}{2}  +m_2(n)+ m_3(n) \nonumber\\
& \substack{\text{(\ref{decode})} \\ a.s. \\ =}  &   \frac{\Text{rank}[ \bold{G}_{1}^n \bold{V}_1^n]}{2}+ \Text{rank}[ \bold{G}_{2}^n [\bold{V}_1^n \quad \bold{V}_2^n  \quad \bold{V}_3^n]]   -  \Text{rank}[ \bold{G}_{2}^n [\bold{V}_1^n \quad \bold{V}_3^n] ] + m_3(n)  \nonumber\\
 & \substack{\text{(Lemma \ref{ranksubmod})}  \\ \leq} &  \frac{\Text{rank}[ \bold{G}_{1}^n \bold{V}_1^n]}{2}+ \Text{rank}[ \bold{G}_{2}^n [\bold{V}_1^n \quad \bold{V}_2^n]]   -  \Text{rank}[ \bold{G}_{2}^n \bold{V}_1^n ]  + m_3(n) \nonumber\\
 & \substack{ \leq}  & \frac{\Text{rank}[[ \bold{G}_{1}^n ; \bold{G}_{2}^n ]\bold{V}_1^n]}{2}+ \Text{rank}[ \bold{G}_{2}^n [\bold{V}_1^n \quad \bold{V}_2^n]]   -  \Text{rank}[ \bold{G}_{2}^n \bold{V}_1^n ]  + m_3(n) \nonumber\\
 & \substack{ (a) \\ a.s.  \\ \leq} &   \Text{rank}[ \bold{G}_{2}^n [\bold{V}_1^n \quad \bold{V}_2^n]] +m_3(n) \nonumber\\
 &\substack{\text{(\ref{decode})} \\ a.s. \\ =} &   \Text{rank}[ \bold{G}_{2}^n [\bold{V}_1^n \quad \bold{V}_2^n]] + \Text{rank}[ \bold{G}_{3}^n [\bold{V}_1^n \quad \bold{V}_2^n  \quad \bold{V}_3^n]]   -  \Text{rank}[ \bold{G}_{3}^n [\bold{V}_1^n \quad \bold{V}_2^n] ]  \nonumber\\
 & \substack{ (\text{Lemma \ref{LAL})} \\ a.s.  \\ \leq}  &  \Text{rank}[ \bold{G}_{3}^n [\bold{V}_1^n \quad \bold{V}_2^n\quad \bold{V}_3^n]] \leq n ,\nonumber
\end{eqnarray}
where (a) follows by applying Lemma \ref{MIMORRI} to $\text{Rx}_2$ as the receiver which supplies delayed CSIT.
Hence, the proof of converse for the case of  $DDN$ is complete.

\section{Proof of Interference Decomposition Bound (Proof of Lemmas \ref{MainIneq},\ref{MainIneqk})}  \label{IDBproof}
Note  that Lemma \ref{MainIneq} is a special case of Lemma \ref{MainIneqk} where $k=3,$ $\mathcal S=\{1,2\}, j=3$, and  $\ell=1$.
Therefore, in order to prove Lemma \ref{MainIneq} and Lemma \ref{MainIneqk} it is sufficient to prove only Lemma \ref{MainIneqk}.
We first restate Lemma \ref{MainIneqk} here for convenience.

\begin{lemmarep}{Lemma~\ref{MainIneqk}}
{\bf (Interference Decomposition Bound)} 
 Consider a fixed  linear coding strategy $ f^{(n)}$, with corresponding precoding matrices $\bold{V}_{1}^{n}, \bold{V}_{2}^{n},\ldots, \bold{V}_{k}^{n} $ as defined in (\ref{coding}).
 For any $\mathcal S\subseteq \{1,2,\ldots, k\}$, any $\ell \in \mathcal S,$   
and any $j\notin \mathcal S$ for which  $I_j=D$,
\begin{align} 
 & \frac{ \Text{rank}[ \bold{G}_{\ell}^n [\cup_{i\in \mathcal S}\bold{V}_{i}^{n}] ]  - \Text{rank}[ \bold{G}_{\ell}^n [\cup_{\substack{i\in \mathcal S\\ i\ne \ell }} \bold{V}_{i}^{n}  ] ]  + \Text{rank}[ \bold{G}_{j}^n [\cup_{\substack{i\in \mathcal S\\ i\ne \ell}} \bold{V}_{i}^{n}  ]]   }{2}   \stackrel{a.s.}{\leq}     \Text{rank}[ \bold{G}_{j}^n [\cup_{i\in \mathcal S}\bold{V}_{i}^{n}]].
\end{align}
\end{lemmarep}

To prove Lemma \ref{MainIneqk}, we first introduce some definitions.
Consider a fixed linear encoding function $f^{(n)}$, with corresponding precoding matrices $\bold{V}_1^n,\ldots, \bold{V}_k^n$ as defined in (\ref{coding}).
\begin{definition} \label{Tdef}
For  $\mathcal S \subseteq \{1,\ldots, k\},   \ell \in \mathcal S,    j\in \{1,\ldots, k\} $, we define
\begin{align*}
\bm{\mathcal T_1} & \triangleq  \{ t  \in \{1,\ldots, n\}\quad  \text{s.t.}  \quad  \Text{rank}[\bold{G}_{\ell}^t [\cup_{i\in \mathcal S}\bold{V}_i^t   ]] = \Text{rank}[\bold{G}_{\ell}^{t-1} [\cup_{i\in \mathcal S}\bold{V}_i^{t-1}   ]] +1\} \nonumber\\
\bm{\mathcal T_2} & \triangleq  \{ t  \in \bm{\mathcal T_1}  \quad \text{s.t.} \quad [\vec{\bold{g}}_{\ell}(t) [\cup_{i\in \mathcal S}\bold{V}_i(t)   ]  ] \in \Text{rowspan}[\bold{G}_{j}^{t-1}[\cup_{i\in \mathcal S}\bold{V}_i^{t-1}   ]]  \} .
\end{align*}
\end{definition}

\begin{remark}
$\bm{\mathcal T_1}$ is the subset of  time slots in which the dimension of received signal at $\text{Rx}_{\ell}$ increases, while $\bm{\mathcal T_2}$ is the subset of  $\bm{\mathcal T_1}$ in which the received signal at $\text{Rx}_{\ell}$ is already recoverable by  using the past received signals at $\text{Rx}_j$. 
The definitions of $\bm{\mathcal T_1},\bm{\mathcal T_2}$ focus only on the contribution of $\bold{V}_i^n$, where $i\in \mathcal S$, on the dimension of received signals at different receivers; because the statement of Lemma \ref{MainIneqk} only involves $\bold{V}_i^n$, where $i\in \mathcal S$.
\end{remark}

We now state two lemmas that are the main building blocks of the proof of Lemma \ref{MainIneqk}.
\begin{lemma}\label{m1m3bound}
\begin{equation}
 \Text{rank}[ \bold{G}_{\ell}^n [\cup_{\substack{i\in \mathcal S}}\bold{V}_i^{n}   ]   ]  - |\bm{\mathcal T_2}|  \\  \stackrel{a.s.}{\leq}    \Text{rank}[ \bold{G}_{j}^n [\cup_{\substack{i\in \mathcal S}}\bold{V}_i^{n}   ]   ].
\end{equation}
\end{lemma}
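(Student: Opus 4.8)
The claim to prove is Lemma~\ref{m1m3bound}, which states that for a receiver $\text{Rx}_j$ supplying delayed CSIT, the dimension of the received signal at $\text{Rx}_j$ (restricted to the contribution of the precoders indexed by $\mathcal S$) is at least the dimension of the same signal at $\text{Rx}_\ell$, minus the number of ``redundant'' time slots $|\bm{\mathcal T_2}|$. The plan is a temporal/incremental argument: we track, slot by slot, the increments of $\Text{rank}[\bold{G}_\ell^t[\cup_{i\in\mathcal S}\bold{V}_i^t]]$ and $\Text{rank}[\bold{G}_j^t[\cup_{i\in\mathcal S}\bold{V}_i^t]]$, and show that whenever the first increases by one, either the second also increases by one, or the slot belongs to $\bm{\mathcal T_2}$. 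Summing these per-slot implications over $t=1,\dots,n$ yields exactly the stated inequality.

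\textbf{Key steps, in order.} First I would set up notation: write $r_\ell(t)=\Text{rank}[\bold{G}_\ell^t[\cup_{i\in\mathcal S}\bold{V}_i^t]]$ and $r_j(t)=\Text{rank}[\bold{G}_j^t[\cup_{i\in\mathcal S}\bold{V}_i^t]]$, with $r_\ell(0)=r_j(0)=0$, and note that each increases by $0$ or $1$ per slot since at slot $t$ only the single new row vector $\vec{\bold g}_\ell(t)[\cup_{i\in\mathcal S}\bold{V}_i(t)]$ (resp. for $j$) is appended. Then $r_\ell(n)=|\bm{\mathcal T_1}|$ by definition of $\bm{\mathcal T_1}$. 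Second, I would fix a slot $t\in\bm{\mathcal T_1}\setminus\bm{\mathcal T_2}$ and argue that $r_j(t)=r_j(t-1)+1$. The crucial point is that the precoding matrix $\bold{V}_i(t)$ depends only on $\bm{\mathcal{\tilde G}}^t$, which — since $I_j=D$ — contains $\bold G_j^{t-1}$ but \emph{not} $\vec{\bold g}_j(t)$; whereas $\bold G_j^{t-1}[\cup_{i\in\mathcal S}\bold{V}_i^{t-1}]$ is also a function of $\bm{\mathcal{\tilde G}}^{t-1}$ hence of information available before $\vec{\bold g}_j(t)$ is drawn. So one can argue: $t\notin\bm{\mathcal T_2}$ means $\vec{\bold g}_\ell(t)[\cup_{i\in\mathcal S}\bold{V}_i(t)]\notin\Text{rowspan}[\bold G_j^{t-1}[\cup_{i\in\mathcal S}\bold{V}_i^{t-1}]]$; and because $t\in\bm{\mathcal T_1}$, the vector $\vec{\bold g}_\ell(t)[\cup_{i\in\mathcal S}\bold{V}_i(t)]$ is linearly independent of $\bold G_\ell^{t-1}[\cup_{i\in\mathcal S}\bold{V}_i^{t-1}]$ too, which (after an appropriate linear-algebraic/genericity observation) forces the newly appended row for $\text{Rx}_j$, namely $\vec{\bold g}_j(t)[\cup_{i\in\mathcal S}\bold{V}_i(t)]$, to lie outside $\Text{rowspan}[\bold G_j^{t-1}[\cup_{i\in\mathcal S}\bold{V}_i^{t-1}]]$ almost surely — hence $r_j(t)=r_j(t-1)+1$. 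Third, I would sum: $r_j(n)=\sum_{t=1}^n (r_j(t)-r_j(t-1)) \geq \sum_{t\in\bm{\mathcal T_1}\setminus\bm{\mathcal T_2}} 1 = |\bm{\mathcal T_1}|-|\bm{\mathcal T_2}| = r_\ell(n)-|\bm{\mathcal T_2}|$, which is precisely the claim.

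\textbf{The main obstacle.} The delicate step is the second one: justifying that ``$\vec{\bold g}_\ell(t)[\cup_{i\in\mathcal S}\bold{V}_i(t)]$ not in the past row span of $\text{Rx}_j$'' implies ``$\vec{\bold g}_j(t)[\cup_{i\in\mathcal S}\bold{V}_i(t)]$ not in the past row span of $\text{Rx}_j$'' almost surely. The point is that $\vec{\bold g}_j(t)$ is drawn from a continuous distribution \emph{independently} of $\bm{\mathcal{\tilde G}}^t$ (and hence of $\bold{V}_i(t)$ and of $\bold G_j^{t-1}[\cup_{i\in\mathcal S}\bold{V}_i^{t-1}]$), so a generic linear functional applied to the nonzero ``new'' component of $[\cup_{i\in\mathcal S}\bold{V}_i(t)]$ — the component living outside the column space $\text{colspan}([\cup_{i\in\mathcal S}\bold{V}_i^{t-1}]^\top$-related subspace, made precise via the condition $t\in\bm{\mathcal T_1}$) — produces a vector outside the lower-dimensional past subspace with probability one. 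I would make this rigorous by first conditioning on $\bm{\mathcal{\tilde G}}^t$, observing that all relevant subspaces and the matrix $[\cup_{i\in\mathcal S}\bold{V}_i(t)]$ are then deterministic, then invoking the standard fact that a continuously-distributed random vector avoids any fixed proper affine/linear subspace almost surely, and finally combining with the condition defining $\bm{\mathcal T_1}$ to ensure the relevant projection is genuinely nonzero. Care is needed to handle the ``$\ell$ vs $j$'' asymmetry correctly — the hypothesis involves $\vec{\bold g}_\ell(t)$ while the conclusion involves $\vec{\bold g}_j(t)$ — so the cleanest route is probably to show directly (without reference to $\ell$) that $t\in\bm{\mathcal T_1}\setminus\bm{\mathcal T_2}$ forces the new row of $\text{Rx}_j$ to be independent, using that $t\in\bm{\mathcal T_1}$ guarantees the ``new signal direction'' contributed at slot $t$ is nonzero in the quotient space $\mathbb{C}^{\sum_{i\in\mathcal S}m_i(n)}/\Text{colspan}$ of already-used precoding directions, and that $t\notin\bm{\mathcal T_2}$ together with independence of $\vec{\bold g}_j(t)$ upgrades this to independence at $\text{Rx}_j$ as well.
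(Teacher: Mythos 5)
Your proposal is correct and takes essentially the same route as the paper: the paper bounds the number of slots in which the rank at $\text{Rx}_j$ fails to increase by $|\bm{\mathcal T_2}|+(n-|\bm{\mathcal T_1}|)$ using the delayed-CSIT genericity fact (its Lemma~\ref{DCSIT}, i.e., that conditioned on everything the transmitter could know, $\vec{\bold{g}}_j(t)$ is continuous and independent of the precoders, so a non-increment at $\text{Rx}_j$ a.s. forces the whole rowspan of $[\cup_{i\in\mathcal S}\bold{V}_i(t)]$ into the past rowspan at $\text{Rx}_j$). That is exactly the contrapositive of your per-slot claim that every $t\in\bm{\mathcal T_1}\setminus\bm{\mathcal T_2}$ almost surely produces a rank increment at $\text{Rx}_j$, so your counting argument matches the paper's proof in substance.
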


\begin{lemma}\label{T4bound}
\begin{align}
|\bm{\mathcal T_2}| - \Text{rank}[ \bold{G}_{\ell}^n  [\cup_{\substack{i\in \mathcal S\\ i\ne \ell}}\bold{V}_i^{n}   ]]  \stackrel{}{\leq}    \Text{rank}[ \bold{G}_{j}^n [  \cup_{\substack{i\in \mathcal S}}\bold{V}_i^{n}   ]]   -   \Text{rank}[ \bold{G}_{j}^n [\cup_{\substack{i\in \mathcal S\\ i\ne \ell}}\bold{V}_i^{n}    ]] .
\end{align}
\end{lemma}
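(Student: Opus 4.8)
The plan is to prove Lemma~\ref{T4bound} by a purely linear-algebraic argument: I will exhibit the left-hand quantity $|\bm{\mathcal T_2}|$ as the dimension of an explicit subspace $Q$, and then split $\dim Q$ by applying rank--nullity to a carefully chosen coordinate projection. Throughout, write $M^n \triangleq [\cup_{i\in\mathcal S}\bold{V}_i^n]$ and $N^n \triangleq [\cup_{i\in\mathcal S,\, i\ne\ell}\bold{V}_i^n]$; the columns of $M^n$ are indexed by the information symbols of the messages in $\mathcal S$, and $N^n$ is $M^n$ with the columns belonging to message $\ell$ deleted. Let $\pi$ be the coordinate projection of $\mathbb{C}^{1\times(\sum_{i\in\mathcal S}m_i(n))}$ that discards precisely the coordinates indexed by message $\ell$'s symbols. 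Since $\bold{G}_\ell^n$ and $\bold{G}_j^n$ are block diagonal, row $t$ of $\bold{G}_\ell^n M^n$ equals $\vec{\bold g}_\ell(t)[\cup_{i\in\mathcal S}\bold{V}_i(t)]$, row $t$ of $\bold{G}_\ell^n N^n$ equals $\vec{\bold g}_\ell(t)[\cup_{i\in\mathcal S,\, i\ne\ell}\bold{V}_i(t)] = \pi\big(\vec{\bold g}_\ell(t)[\cup_{i\in\mathcal S}\bold{V}_i(t)]\big)$, and analogously for $\text{Rx}_j$; in particular $\pi$ maps $\text{rowspan}(\bold{G}_j^n M^n)$ onto $\text{rowspan}(\bold{G}_j^n N^n)$. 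I would record these identities at the outset.

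First step: since $\bm{\mathcal T_2}\subseteq\bm{\mathcal T_1}$ and, by the definition of $\bm{\mathcal T_1}$, the vector $\vec{\bold g}_\ell(t)[\cup_{i\in\mathcal S}\bold{V}_i(t)]$ appearing at each $t\in\bm{\mathcal T_1}$ is linearly independent of the analogous vectors from all earlier time slots, the family $\{\,\vec{\bold g}_\ell(t)[\cup_{i\in\mathcal S}\bold{V}_i(t)] : t\in\bm{\mathcal T_2}\,\}$ is linearly independent. Let $Q$ be its linear span, so that $\dim Q = |\bm{\mathcal T_2}|$. Second step: for every $t\in\bm{\mathcal T_2}$, the defining condition of $\bm{\mathcal T_2}$ gives $\vec{\bold g}_\ell(t)[\cup_{i\in\mathcal S}\bold{V}_i(t)]\in\text{rowspan}(\bold{G}_j^{t-1}[\cup_{i\in\mathcal S}\bold{V}_i^{t-1}])\subseteq\text{rowspan}(\bold{G}_j^n M^n)$, hence $Q\subseteq\text{rowspan}(\bold{G}_j^n M^n)$.

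Third step: apply rank--nullity to $\pi|_Q$, giving $\dim Q = \dim(Q\cap\ker\pi) + \dim\pi(Q)$. For the image term, $\pi(Q)$ is spanned by the vectors $\pi\big(\vec{\bold g}_\ell(t)[\cup_{i\in\mathcal S}\bold{V}_i(t)]\big)$, $t\in\bm{\mathcal T_2}$, which are rows of $\bold{G}_\ell^n N^n$, so $\dim\pi(Q)\le\text{rank}(\bold{G}_\ell^n N^n)$. For the kernel term, $Q\cap\ker\pi\subseteq\text{rowspan}(\bold{G}_j^n M^n)\cap\ker\pi$, and rank--nullity applied this time to $\pi$ restricted to $\text{rowspan}(\bold{G}_j^n M^n)$ — whose image under $\pi$ is $\text{rowspan}(\bold{G}_j^n N^n)$ — yields $\dim\big(\text{rowspan}(\bold{G}_j^n M^n)\cap\ker\pi\big) = \text{rank}(\bold{G}_j^n M^n) - \text{rank}(\bold{G}_j^n N^n)$. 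Combining the three steps, $|\bm{\mathcal T_2}| = \dim Q \le \big(\text{rank}(\bold{G}_j^n M^n) - \text{rank}(\bold{G}_j^n N^n)\big) + \text{rank}(\bold{G}_\ell^n N^n)$, which rearranges to the claimed inequality after moving $\text{rank}(\bold{G}_\ell^n N^n)$ to the left.

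I expect the only point requiring care to be the choice of projection in the third step: $\pi$ must delete exactly message $\ell$'s coordinates (not those of all of $\mathcal S$), so that $\text{rank}(\bold{G}_j^n M^n) - \text{rank}(\bold{G}_j^n N^n)$ is correctly identified with the dimension of the part of $\text{Rx}_j$'s accumulated received subspace supported on message $\ell$'s coordinates — together with the bookkeeping that each $\vec{\bold g}_\ell(t)[\cup_{i\in\mathcal S}\bold{V}_i(t)]$ with $t\in\bm{\mathcal T_2}$ is simultaneously (i) a fresh direction at $\text{Rx}_\ell$, (ii) already inside $\text{Rx}_j$'s accumulated subspace, and (iii) mapped by $\pi$ into $\text{Rx}_\ell$'s subspace for the messages $\mathcal S\setminus\{\ell\}$. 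It is worth noting that this argument is entirely deterministic: no genericity of the channel coefficients is invoked, which is consistent with the absence of an ``almost surely'' qualifier in the statement; the delayed-CSIT hypothesis on $\text{Rx}_j$ plays no role in this lemma and enters the overall proof of the Interference Decomposition Bound only through the companion Lemma~\ref{m1m3bound}.
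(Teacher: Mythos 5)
Your proof is correct and is essentially the paper's own argument: your rank--nullity decomposition of $\dim Q$ under the coordinate projection $\pi$ is exactly the content of the paper's Claim \ref{dimrank}, applied (as in the paper) to the span $Q$ of the rows $\{\vec{\bold{g}}_{\ell}(t)[\cup_{i\in\mathcal S}\bold{V}_i(t)]\}_{t\in\bm{\mathcal T_2}}$ and to $\text{rowspan}\left[\bold{G}_{j}^n[\cup_{i\in\mathcal S}\bold{V}_i^{n}]\right]$, using the same two facts that these rows are linearly independent (via $\bm{\mathcal T_2}\subseteq\bm{\mathcal T_1}$) and contained in $\text{Rx}_j$'s accumulated rowspan. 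Your remarks that the bound is deterministic and that the delayed-CSIT assumption enters only through Lemma \ref{m1m3bound} are likewise consistent with the paper.
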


Note that proof of Lemma \ref{MainIneqk} is immediate from summing the inequalities in Lemma \ref{m1m3bound} and Lemma \ref{T4bound}.
Hence, we will prove  Lemma \ref{m1m3bound} and Lemma \ref{T4bound}.

\subsection{Proof of Lemma \ref{m1m3bound}}
Before proving  Lemma \ref{m1m3bound}, we first provide its proof sketch for the special case of  $k=3, j=3, \ell=1, \mathcal S = \{1,2\}$, the same special case as considered in Lemma 1,  to emphasize the underlying ideas. 
For such special case, Lemma \ref{m1m3bound}  reduces to the following  inequality:
\begin{equation}
 \Text{rank}[ \bold{G}_{1}^n [\bold{V}_1^{n}   \quad \bold{V}_2^{n}   ]   ]  - |\bm{\mathcal T_2}|  \\  \stackrel{a.s.}{\leq}    \Text{rank}[ \bold{G}_{3}^n [\bold{V}_1^{n}   \quad \bold{V}_2^{n}   ]   ], \label{Lemma8special}
\end{equation}
which  can be re-written in the following equivalent form:
\begin{equation}
 n  -   \Text{rank}[ \bold{G}_{3}^n [\bold{V}_1^{n}   \quad \bold{V}_2^{n}   ]   ]  \stackrel{a.s.}{\leq}   n  - \Text{rank}[ \bold{G}_{1}^n [\bold{V}_1^{n}   \quad \bold{V}_2^{n}   ]   ]  + |\bm{\mathcal T_2}| . \label{Lemma8speq}
\end{equation}

Note that the L.H.S. of (\ref{Lemma8speq}) is basically the number of time slots $t\in \{ 1,\ldots , n\}$ in which $\Text{rank}[ \bold{G}_{3}^t [\bold{V}_1^{t}   \quad \bold{V}_2^{t}   ]   ]$ does not increase (compared to $\Text{rank}[ \bold{G}_{3}^{t-1} [\bold{V}_1^{t-1}   \quad \bold{V}_2^{t-1}   ]   ]$).
Let us denote the set of such time slots by $\bm{\mathcal T}$.
First, note that in each  $t\in \bm{\mathcal T}$, either  $\Text{rank}[ \bold{G}_{1}^t [\bold{V}_1^{t}   \quad \bold{V}_2^{t}   ]   ]$ increases by 1 (compared to $\Text{rank}[ \bold{G}_{1}^{t-1} [\bold{V}_1^{t-1}   \quad \bold{V}_2^{t-1}   ]   ]$), or it remains constant.
Accordingly, we partition $\bm{\mathcal T}$ into two sets, and upper bound the cardinality of each set. 
The number of those time slots $t\in \bm{\mathcal T}$ in which $\Text{rank}[ \bold{G}_{1}^t [\bold{V}_2^{t}   \quad \bold{V}_2^{t}   ]   ]$ remains constant is at most $  n  - \Text{rank}[ \bold{G}_{1}^n [\bold{V}_1^{n}   \quad \bold{V}_2^{n}   ]   ]$, which constitutes the first two terms on the R.H.S of (\ref{Lemma8speq}).

We now upper bound the number of time slots  $t\in \bm{\mathcal T}$, in which $\Text{rank}[ \bold{G}_{1}^t [\bold{V}_2^{t}   \quad \bold{V}_2^{t}   ]   ]$ increases by 1.
In each such time slot, $\text{Rx}_3$  receives an equation which is already recoverable by using its past received equations (since $\Text{rank}[ \bold{G}_{3}^t [\bold{V}_1^{t}   \quad \bold{V}_2^{t}   ]   ]$ does not increase).
But note that due to the assumption of delayed CSIT for $\text{Rx}_3$, the transmitter does not know the channels to $\text{Rx}_3$ when  transmitting its signals at time slot $t$; and the received signal at $\text{Rx}_3$ would be a random linear combination of transmit signals. 
In order for this random linear combination to be known at $\text{Rx}_3$, $\text{Rx}_3$ must have already been able to recover each of the individual signals transmitted at time $t$, based on its past received signals.
Note that if $\text{Rx}_3$ knows  each individual transmit signal at time $t$, it also knows any linear combination of them.
Hence, it can already recover what $\text{Rx}_1$ receives at time $t$.
Therefore,  the number of time slots $t\in \bm{\mathcal T}$ in which $\Text{rank}[ \bold{G}_{1}^t [\bold{V}_2^{t}   \quad \bold{V}_2^{t}   ]   ]$ increases by 1  is upper bounded by the number of time slots in which the received signal at  $\text{Rx}_1$  is already recoverable by $\text{Rx}_3$, and $\Text{rank}[ \bold{G}_{1}^t [\bold{V}_2^{t}   \quad \bold{V}_2^{t}   ]   ]$ increases by 1, which in turn, by the definition of $\bm{\mathcal T_2}$ is equal to $|\bm{\mathcal T_2}| $, the last term on the R.H.S. of (\ref{Lemma8speq}).
Thus, the proof sketch is complete.

The following is the general mathematical proof for Lemma \ref{m1m3bound}, which relies on the above approach.
Let us denote the indicator function by $I(.)$. We then have
\begin{align}
& n - \Text{rank}[ \bold{G}_{j}^n [\cup_{\substack{i\in \mathcal S}}\bold{V}_i^{n} ]] =  \sum_{t=1}^{n}  I(\Text{rank}[\bold{G}_{j}^t [\cup_{\substack{i\in \mathcal S}}\bold{V}_i^{t}  ]] = \Text{rank}[\bold{G}_{j}^{t-1}[\cup_{\substack{i\in \mathcal S}}\bold{V}_i^{t-1}   ]]  )  \nonumber\\
& = \sum_{t\in \bm{\mathcal T_1} }^{}  I(\Text{rank}[\bold{G}_{j}^t  [\cup_{\substack{i\in \mathcal S}}\bold{V}_i^{t}   ] ] = \Text{rank}[\bold{G}_{j}^{t-1}[\cup_{\substack{i\in \mathcal S}}\bold{V}_i^{t-1}  ] ]) +\sum_{t\in \bm{ \mathcal T_1^c}}^{}  I(\Text{rank}[\bold{G}_{j}^t [\cup_{\substack{i\in \mathcal S}}\bold{V}_i^{t} ]  ] = \Text{rank}[\bold{G}_{j}^{t-1}[\cup_{\substack{i\in \mathcal S}}\bold{V}_i^{t-1}   ] ])  \nonumber\\
&\substack{ \text{ (a) } \\ a.s.\\=}  \sum_{t\in \bm{\mathcal T_1}}^{}  I(  \Text{rowspan}[\cup_{\substack{i\in \mathcal S}}\bold{V}_i(t)   ] \subseteq \Text{rowspan}[\bold{G}_{j}^{t-1}[\cup_{\substack{i\in \mathcal S}}\bold{V}_i^{t-1}  ]  ])  +\sum_{t\in \bm{\mathcal T_1^c}}^{}  I(\Text{rank}[\bold{G}_{j}^t [\cup_{\substack{i\in \mathcal S}}\bold{V}_i^{t}  ]  ] = \Text{rank}[\bold{G}_{j}^{t-1}[\cup_{\substack{i\in \mathcal S}}\bold{V}_i^{t-1}   ] ])  \nonumber\\
& \leq   \sum_{t\in \bm{\mathcal T_1}}^{}  I(\vec{\bold{g}}_{\ell}(t) [\cup_{\substack{i\in \mathcal S}}\bold{V}_i(t)   ]  \in \Text{rowspan}[\bold{G}_{j}^{t-1}[\cup_{\substack{i\in \mathcal S}}\bold{V}_i^{t-1}  ] ])  +\sum_{t\in \bm{\mathcal T_1^c}}^{}  I(\Text{rank}[\bold{G}_{j}^t [\cup_{\substack{i\in \mathcal S}}\bold{V}_i^{t} ] ]= \Text{rank}[\bold{G}_{j}^{t-1} [\cup_{\substack{i\in \mathcal S}}\bold{V}_i^{t-1}  ] ]) 
 \nonumber\\
&=   |\bm{\mathcal T_2}| +\sum_{t\in \bm{\mathcal T_1^c}}^{}  I(\Text{rank}[\bold{G}_{j}^t [\cup_{\substack{i\in \mathcal S}}\bold{V}_i^{t}  ] ]= \Text{rank}[\bold{G}_{j}^{t-1} [\cup_{\substack{i\in \mathcal S}}\bold{V}_i^{t-1}   ] ]) \nonumber\\
& \leq   |\bm{\mathcal T_2}|  +\sum_{t\in \bm{\mathcal T_1^c}}^{}  1  =  |\bm{\mathcal T_2}| +n- | \bm{\mathcal T_1}| \nonumber\\
& \stackrel{ (b)}{=}   |\bm{\mathcal T_2}|  +n- \Text{rank}[\bold{G}_{\ell}^n [\cup_{\substack{i\in \mathcal S}}\bold{V}_i^{n}  ]  ], \nonumber
\end{align}
where 
(a) is due to Lemma \ref{DCSIT}, which is stated and proved in Appendix \ref{DCSITlemproof} \footnote{Lemma \ref{DCSIT} is a variation of Lemma 6 in \cite{Ours} which was stated for the setting with \emph{distributed} transmit antennas. Proof of Lemma \ref{DCSIT} follows similar steps as in the proof of Lemma 6 in \cite{Ours}; it is provided in Appendix \ref{DCSITlemproof} for completeness.};
and (b) follows immediately from the definition of $\bm{\mathcal T_1}$.
By rearranging the above inequality, the proof of Lemma \ref{m1m3bound} will be complete.

\subsection{Proof of Lemma \ref{T4bound}} \label{T4boundproof}
We first  state a claim which is useful in lower bounding the R.H.S. of the inequality in Lemma \ref{T4bound}, and it can be proved using simple linear algebra;  hence the proof is omitted for brevity.
\begin{claim} \label{dimrank}
For two matrices $A,B$ of the same row size, 
$ \Text{rank}[A \quad B] - \Text{ rank}[B] =  \Text{dim} ( \Text{span} ( [\vec s \quad \vec 0] \text{ s.t. }   [\vec s \quad \vec 0]\in \Text{rowspan}[A \quad B])) . $
\end{claim}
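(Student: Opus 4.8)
The plan is to read both sides of the identity geometrically and then reduce it to the rank--nullity theorem applied to a coordinate projection restricted to $\Text{rowspan}[A \quad B]$. Say $A$ is $m\times p$ and $B$ is $m\times q$, so that $[A \quad B]$ is $m\times(p+q)$; write vectors of $\mathbb{C}^{p+q}$ in block form $[\vec s \quad \vec t]$ with $\vec s\in\mathbb{C}^{p}$ indexing the $A$-columns and $\vec t\in\mathbb{C}^{q}$ indexing the $B$-columns. Set $R\triangleq\Text{rowspan}[A \quad B]$, a subspace of $\mathbb{C}^{p+q}$ with $\dim R=\Text{rank}[A \quad B]$, and let $W\triangleq\{[\vec s \quad \vec 0] : [\vec s \quad \vec 0]\in R\}$. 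Since $W$ is the intersection of $R$ with the coordinate subspace $\mathbb{C}^{p}\times\{\vec 0\}$, it is a genuine linear subspace, so $\Text{span}(W)=W$ and the right-hand side of the claim equals $\dim W$.

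Next I would introduce the projection $\pi:\mathbb{C}^{p+q}\to\mathbb{C}^{q}$ given by $\pi([\vec s \quad \vec t])=\vec t$ and examine its restriction $\pi|_R$. Two observations do all the work. First, $\ker(\pi|_R)=W$ directly from the definition of $W$. Second, writing $\vec a_i$ and $\vec b_i$ for the $i$-th rows of $A$ and $B$, the space $R$ is spanned by the rows $[\vec a_i \quad \vec b_i]$ of $[A \quad B]$, and $\pi([\vec a_i \quad \vec b_i])=\vec b_i$; hence $\Text{image}(\pi|_R)=\Text{span}\{\vec b_1,\ldots,\vec b_m\}=\Text{rowspan}(B)$, and therefore $\dim\Text{image}(\pi|_R)=\Text{rank}(B)$.

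Finally, applying rank--nullity to $\pi|_R:R\to\mathbb{C}^{q}$ yields $\dim R=\dim\ker(\pi|_R)+\dim\Text{image}(\pi|_R)$, i.e.\ $\Text{rank}[A \quad B]=\dim W+\Text{rank}(B)$, which rearranges to $\Text{rank}[A \quad B]-\Text{rank}(B)=\dim W$, as claimed. I do not expect any real obstacle here; the only point that needs care is the bookkeeping matching the block splitting $\mathbb{C}^{p+q}=\mathbb{C}^{p}\times\mathbb{C}^{q}$ to the column partition of $[A \quad B]$, together with the remark that $W$ is already a subspace so that the $\Text{span}$ in the statement adds nothing. Once that is settled the result is a one-line consequence of rank--nullity.
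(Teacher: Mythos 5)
Your proof is correct: the identification of the right-hand side with $\ker(\pi|_R)$ for the coordinate projection $\pi([\vec s \quad \vec t])=\vec t$ restricted to $R=\Text{rowspan}[A \quad B]$, together with $\Text{image}(\pi|_R)=\Text{rowspan}(B)$ and rank--nullity, gives exactly the claimed identity. The paper omits the proof of this claim as ``simple linear algebra,'' and your argument is precisely the kind of short argument intended, so there is nothing to reconcile.
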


We are now ready to prove Lemma \ref{T4bound}. 
Let $[\vec{\bold{g}}_{\ell}(t) [\cup_{\substack{i\in \mathcal S\\ i\ne \ell}}\bold{V}_i(t)  ]]_{t\in \bm{\mathcal T_2}}$ denote the matrix constructed by rows  $\vec{\bold{g}}_{\ell}(t) [\cup_{\substack{i\in \mathcal S\\ i\ne \ell}}\bold{V}_i(t)  ]$, where $t\in \bm{\mathcal T_2}$.
We have
\begin{eqnarray}
 \Text{rank}[\bold{G}_{j}^n [\cup_{\substack{i\in \mathcal S}}\bold{V}_i^n]  ]]- \Text{rank}[\bold{G}_{j}^n  [\cup_{\substack{i\in \mathcal S\\ i\ne \ell}}\bold{V}_i^n]  ] ] & \stackrel{\text{(Claim \ref{dimrank})}}{=}&
\Text{dim} ( \Text{span} ( [\vec s \quad \vec 0] \text{ s.t. }   [\vec s \quad \vec 0]\in \Text{rowspan}[\bold{G}_{j}^n  [\cup_{\substack{i\in \mathcal S}}\bold{V}_i^n]  ]))  \nonumber\\
& \stackrel{(a)}{\geq}& 
\Text{dim} ( \Text{span} ( [\vec s \quad \vec 0] \text{ s.t. }  [\vec s \quad \vec 0]\in \Text{rowspan}[[\vec{\bold{g}}_{\ell}(t) [\cup_{\substack{i\in \mathcal S}}\bold{V}_i(t)  ]]_{t\in \bm{\mathcal T_2}} ] ))
 \nonumber\\
& \stackrel{\text{(Claim \ref{dimrank})}}{=}&  \Text{rank} [[\vec{\bold{g}}_{\ell}(t) [\cup_{\substack{i\in \mathcal S}}\bold{V}_i(t)  ]]_{t\in \bm{\mathcal T_2}}  ]     -  \Text{rank} [[\vec{\bold{g}}_{\ell}(t) [\cup_{\substack{i\in \mathcal S\\ i\ne \ell}}\bold{V}_i(t)  ]]_{t\in \bm{\mathcal T_2}}  ] \nonumber\\
& \stackrel{(b)}{=} &  |\bm{\mathcal T_2}|  - \Text{rank} [[\vec{\bold{g}}_{\ell}(t) [\cup_{\substack{i\in \mathcal S\\ i\ne \ell}}\bold{V}_i(t)  ]]_{t\in \bm{\mathcal T_2}}  ]  \nonumber\\
& \geq & |\bm{\mathcal T_2}|  - \text{rank}[\bold{G}_{\ell}^n  [\cup_{\substack{i\in \mathcal S\\ i\ne \ell}}\bold{V}_i^n]],  \nonumber
\end{eqnarray}
where (a) follows from the fact that for each $t\in \bm{\mathcal T_2}$,  $\vec{\bold{g}}_{\ell}(t) [\cup_{\substack{i\in \mathcal S}}\bold{V}_i(t)] \in \Text{rowspan}[\bold{G}_{j}^{t-1}[\cup_{\substack{i\in \mathcal S}}\bold{V}_i^{t-1}] ]$; hence, for each $t\in \bm{\mathcal T_2}$,  $\vec{\bold{g}}_{\ell}(t)  [\cup_{\substack{i\in \mathcal S}}\bold{V}_i(t)]  \in \Text{rowspan}[\bold{G}_{j}^{n} [\cup_{\substack{i\in \mathcal S}}\bold{V}_i^{n}]  ]]$;
 and therefore, 
$\Text{rowspan}[[\vec{\bold{g}}_{\ell}(t) [\cup_{\substack{i\in \mathcal S}}\bold{V}_i(t)  ]]_{t\in \bm{\mathcal T_2}} ] \subseteq \Text{rowspan}[\bold{G}_{j}^{n} [\cup_{\substack{i\in \mathcal S}}\bold{V}_i^{n}]  ] ]$.
Furthermore, (b) holds since $\Text{rank}[[\vec{\bold{g}}_{\ell}(t) [\cup_{\substack{i\in \mathcal S}}\bold{V}_i(t)  ] ]_{ t\in \bm{\mathcal T_2}}] = |\bm{ \mathcal T_2} |$, which is due to the following:
 since $\bm{\mathcal T_2} \subseteq \bm{\mathcal T_1}$, if $t\in \bm{\mathcal T_2}$, then $t\in \bm{\mathcal T_1}$.
Therefore, using the definition of $\bm{\mathcal T_1}$, we get 
\begin{equation}
\forall t\in \bm{\mathcal T_2}, \qquad  \vec{\bold{g}}_{\ell}(t) [\cup_{\substack{i\in \mathcal S}}\bold{V}_i(t)] \notin \Text{rowspan} \left ( \bold{G}_{\ell}^{t-1}[\cup_{\substack{i\in \mathcal S}}\bold{V}_i^{t-1}] \right  ).
\end{equation}
Consequently, the vectors $\vec{\bold{g}}_{\ell}(t) [\cup_{\substack{i\in \mathcal S}}\bold{V}_i(t)]$, where $t\in \bm{\mathcal T_2}$, are linearly independent; and therefore, we have $\Text{rank}[[\vec{\bold{g}}_{\ell}(t) [\cup_{\substack{i\in \mathcal S}}\bold{V}_i(t)  ] ]_{ t\in \bm{\mathcal T_2}}] = |\bm{ \mathcal T_2} |$.
Hence, the proof of Lemma \ref{T4bound}  is complete.

\section{Statement and Proof of Lemma \ref{DCSIT}} \label{DCSITlemproof}

\begin{lemma}\label{DCSIT}
Consider a fixed
linear coding strategy $f^{(n)}$ with corresponding precoding matrices $ \bold{V}_{1}^{n},\ldots , \bold{V}_{k}^{n}$ as defined in (\ref{coding}). 
Consider an arbitrary index $j$, where $j\in \{1,\ldots,k\}$, and assume 
$I_j\in \{D,N\}$; i.e.,
the transmitter has either delayed or no CSIT with respect to $\text{Rx}_j$.
In addition, consider an arbitrary set of receiver indices $\mathcal S$, where $\mathcal S \subseteq \{1,\ldots,k\}$.
For any $t \in \{1,2, \ldots ,n \}$, let $\mathcal  A_t,\mathcal B_t$,  denote the following sets of channel realizations:
\begin{itemize}
\item $\mathcal A_t\triangleq \{ \mathcal G^n| \quad \Text{rank}[G_{j}^{t}[\cup_{i\in \mathcal S} V_i^{t}]]=\Text{rank}[G_{j}^{t-1}[\cup_{i\in \mathcal S} V_i^{t-1}]]  \} .  $
\item $\mathcal B_t\triangleq \{ \mathcal G^n|  \quad \Text{rowspan} [\cup_{i\in \mathcal S} V_i(t)] \subseteq   \Text{rowspan} [G_{j}^{t-1}[\cup_{i\in \mathcal S} V_i^{t-1}]] \}$.
\end{itemize}
Then,
\begin{equation}
\Pr (\bm{\mathcal G}^n \in \cup_{t=1}^{n}(\mathcal A_t\cap \mathcal B_t^c))=0.\nonumber
\end{equation}
\end{lemma}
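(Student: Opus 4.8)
\textbf{Proof plan for Lemma~\ref{DCSIT}.}
Since the union is over the finite set $\{1,\ldots,n\}$, by the union bound it suffices to prove $\Pr(\bm{\mathcal G}^n \in \mathcal A_t \cap \mathcal B_t^c)=0$ for each fixed $t$. The idea is to condition away every channel vector except the one that is ``fresh'' at time $t$ from the standpoint of $\text{Rx}_j$, namely $\vec{\bold g}_j(t)$, and then observe that on the event $\mathcal A_t\cap\mathcal B_t^c$ this vector is forced into a proper linear subspace, which is a null set for its continuous law.

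The first step is the measurability bookkeeping. Set $M_t \triangleq [\cup_{i\in\mathcal S}\bold V_i(t)]$ and $W_t \triangleq \Text{rowspan}\big[\bold G_j^{t-1}[\cup_{i\in\mathcal S}\bold V_i^{t-1}]\big]$, both living in $\mathbb C^{1\times q}$ with $q=\sum_{i\in\mathcal S}m_i(n)$. By the coding constraint (\ref{coding}), each $\bold V_i(t)$ is a deterministic function of $\bm{\mathcal{\tilde G}}^t$; since $I_j\in\{D,N\}$ we have $j\notin\mathcal P$, so $\bm{\mathcal{\tilde G}}^t$ does not involve $\vec{\bold g}_j(t)$, and hence $M_t$ is measurable with respect to $\bm{\mathcal H}\triangleq\sigma\big(\{\vec{\bold g}_i(s):(i,s)\ne(j,t)\}\big)$. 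Likewise $W_t$ depends only on $\vec{\bold g}_j(1),\ldots,\vec{\bold g}_j(t-1)$ and $\bm{\mathcal{\tilde G}}^{t-1}$, so $W_t$ is also $\bm{\mathcal H}$-measurable, whereas $\vec{\bold g}_j(t)$ is independent of $\bm{\mathcal H}$ and has a continuous distribution on $\mathbb C^{1\times m}$.

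Next I would rewrite the two events. Because the rows of $\bold G_j^s[\cup_{i\in\mathcal S}\bold V_i^s]$ are exactly $\{\vec{\bold g}_j(r)[\cup_{i\in\mathcal S}\bold V_i(r)]\}_{r\le s}$, the rank at step $t$ fails to increase precisely when the freshly added row $\vec{\bold g}_j(t)M_t$ already lies in $W_t$; thus $\mathcal A_t=\{\vec{\bold g}_j(t)M_t\in W_t\}$, and by definition $\mathcal B_t=\{\Text{rowspan}(M_t)\subseteq W_t\}$. Now condition on $\bm{\mathcal H}$, so that $M_t$ and $W_t$ are deterministic. If $\Text{rowspan}(M_t)\subseteq W_t$, then $\mathcal B_t$ occurs and $\mathcal A_t\cap\mathcal B_t^c=\emptyset$. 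Otherwise, consider the linear map $L:\mathbb C^{1\times m}\to \mathbb C^{1\times q}/W_t$ given by $L(\vec v)=\vec v M_t + W_t$; it is not identically zero, since $L(\vec e_r)=0$ for the $r$-th standard basis row $\vec e_r$ would say that the $r$-th row of $M_t$ lies in $W_t$, and not all rows do (else $\Text{rowspan}(M_t)\subseteq W_t$). Hence $\ker L=\{\vec v:\vec v M_t\in W_t\}$ is a proper linear subspace of $\mathbb C^{1\times m}$, and therefore a set of measure zero for the continuous conditional law of $\vec{\bold g}_j(t)$. In either case $\Pr(\mathcal A_t\cap\mathcal B_t^c\mid\bm{\mathcal H})=0$ almost surely, so $\Pr(\mathcal A_t\cap\mathcal B_t^c)=\mathbb E\big[\Pr(\mathcal A_t\cap\mathcal B_t^c\mid\bm{\mathcal H})\big]=0$; summing over $t=1,\ldots,n$ finishes the proof.

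The only delicate point is the bookkeeping in the second step — checking that both $M_t$ and $W_t$ are determined by the channels other than $\vec{\bold g}_j(t)$ — and this is exactly where the hypothesis $I_j\in\{D,N\}$ enters: were $\text{Rx}_j$ to supply instantaneous CSIT, the transmitter's precoders at time $t$ could depend on $\vec{\bold g}_j(t)$ and the genericity argument would collapse. Everything else is elementary linear algebra, mirroring the proof of Lemma~6 in \cite{Ours} adapted to the centralized-antenna MISO BC.
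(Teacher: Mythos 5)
Your proof is correct and takes essentially the same route as the paper: a union bound over $t$, conditioning so that the precoders and the past rowspan become deterministic, and then observing that on $\mathcal B_t^c$ the event $\mathcal A_t$ forces the continuously distributed $\vec{\bold g}_j(t)$ into a proper subspace (your kernel of $L$ is the paper's nonzero projection matrix onto $\mathcal L^\perp$), hence has probability zero. Your choice to condition on $\sigma\big(\{\vec{\bold g}_i(s):(i,s)\ne (j,t)\}\big)$ rather than only on $\bm{\mathcal G}^{t-1}$ is a slightly more careful bookkeeping of how precoders may depend on time-$t$ channels of receivers in $\mathcal P$, but the underlying argument is the same.
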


\begin{proof}
Note that due to Union Bound, it is sufficient to show that for any $t \in \{1,2, \ldots ,n \}$,
$\Pr (\bm{\mathcal G}^n \in \mathcal A_t\cap \mathcal B_t^c)=0.$
Consider an arbitrary $t \in \{1,2, \ldots ,n \}$. 
Due to Total Probability Law, it is sufficient to show that for any channel realization of the first $t-1$ timeslots, denoted by $\mathcal G^{t-1}$, we have
\begin{equation}
\label{eq:mainEq1}  \Pr(\bm{\mathcal G}^n \in \mathcal A_t\cap  \mathcal B^c_t | \bm{\mathcal G}^{t-1}=\mathcal G^{t-1})=0.
\end{equation}

Consider an arbitrary channel realization of the first $t-1$ time slots $\mathcal G^{t-1}$ and precoding matrices $V_1^{t},\ldots, V_k^{t}$ (which are now deterministic because they are only function of the channel realizations for the first $t-1$ time slots).
Also, suppose that given $\mathcal G^{t-1}$, $ \mathcal  B^c_t$ occurs; since otherwise, the proof of (\ref{eq:mainEq1}) would be complete.
We denote the row $h$ of the matrix $[\cup_{i\in \mathcal S} V_i(t)]$ by $[\cup_{i\in \mathcal S} V_{i,h}(t)]$.
Note that by assuming $ \mathcal  B^c_t$ occurs, and denoting $\mathcal L = \Text{rowspan} [G_{j}^{t-1}[\cup_{i\in \mathcal S} V_i^{t-1}]]$, the following is true (according to the definition of $ \mathcal  B_t$):
\begin{align}
&\exists h\in\{1,\ldots,m \} \quad s.t. \quad [\cup_{i\in \mathcal S} V_{i,h}(t)] \notin   \mathcal L \nonumber\\
  \Rightarrow \quad & \exists h\in\{1,\ldots ,m  \} \quad s.t. \quad    \text{Proj}_{\mathcal L^\perp} [\cup_{i\in \mathcal S} V_{i,h}(t)] \ne 0.
\end{align}
Therefore, the $m \times (\sum_{i\in \mathcal S}^{}m_i(n))$ matrix 
$[\text{Proj}_{\mathcal L^\perp} [\cup_{i\in \mathcal S} V_{i,1}(t)]; \ldots ; \text{Proj}_{\mathcal L^\perp} [\cup_{i\in \mathcal S} V_{i,m}(t)] ]$ is non-zero, which means that its null space has dimension strictly lower than $m$, the number of its rows.
Hence, we have,
\begin{align*}
\Pr  (  \bm{\mathcal G}^n \in \mathcal A_t\cap  \mathcal B^c_t       &    | \bm{\mathcal G}^{t-1}=\mathcal G^{t-1}) \stackrel{(a)}{=} 
\Pr(\bm{\mathcal G}^n \in \mathcal A_t| \bm{\mathcal G}^{t-1}=\mathcal G^{t-1})   \\
&  \stackrel{(b)}{=}   \Pr(     \text{Proj}_{\mathcal L^\perp} [\vec{\bold{g}_{j}}(t)[\cup_{i\in \mathcal S} V_i(t)]] = 0     | \bm{\mathcal G}^{t-1}=\mathcal G^{t-1})   \\
& \stackrel{(c)}{=}   \Pr(  \vec{\bold{g}_{j}}(t)  [\text{Proj}_{\mathcal L^\perp} [\cup_{i\in \mathcal S} V_{i,1}(t)];\ldots;  \text{Proj}_{\mathcal L^\perp} [\cup_{i\in \mathcal S} V_{i,m}(t)] ] = 0     | \bm{\mathcal G}^{t-1}=\mathcal G^{t-1})   \\
& = \Pr( \vec{\bold{g}_{j}}(t)^\top    \in    \Text{nullspace}([\text{Proj}_{\mathcal L^\perp} [\cup_{i\in \mathcal S} V_{i,1}(t)];\ldots ;  \text{Proj}_{\mathcal L^\perp} [\cup_{i\in \mathcal S} V_{i,m}(t)] ]^\top)   | \bm{\mathcal G}^{t-1}=\mathcal G^{t-1}) \\
& \stackrel{(d)}{=}  0,
\end{align*}
where (a) holds since we assumed that for realization $\mathcal G^{t-1}$, $\mathcal B_t^c$ occurs;
(b) holds according to the definition of $\mathcal A_t$;
(c) holds due to linearity of orthogonal projection;
and (d)  holds since, as mentioned before, 
the  matrix 
$[\text{Proj}_{\mathcal L^\perp} [\cup_{i\in \mathcal S} V_{i,1}(t)]; \ldots; \text{Proj}_{\mathcal L^\perp} [\cup_{i\in \mathcal S} V_{i,m}(t)] ]^\top$ is non-zero, meaning that its null space, which is a subspace in $\mathbb R^m$, has dimension strictly lower than $m$.
Therefore, the probability that the random vector $ \vec{\bold{g}_{j}}(t)$ lies in a subspace in $\mathbb R^m$ of strictly lower dimension (than $m$) is zero.

\end{proof}


\section{Proof of MIMO Rank Ratio Inequality for BC  (Proof of Lemmas \ref{MIMORRI},\ref{MIMORRIk})} \label{AppMIMORRI}
Note that Lemma \ref{MIMORRI} is a special case of Lemma \ref{MIMORRIk} where $k=3$, $j=1$, $\mathcal S=\{i\}$, and $i_1=3 , i_2 = \ell$.
Therefore, in order to prove Lemma \ref{MIMORRI} and Lemma \ref{MIMORRIk} it is sufficient to prove only Lemma \ref{MIMORRIk}.
We first re-state Lemma \ref{MIMORRIk} here for convenience.

\begin{lemmarep}{Lemma~\ref{MIMORRIk}}
{\bf (MIMO Rank Ratio Inequality for BC)}
Consider a linear coding strategy $f^{(n)} $, with corresponding $\bold{V}_{1}^{n},\ldots, \bold{V}_{k}^{n} $ as defined in (\ref{coding}).
Let $\bold{Y}_j^n \triangleq  \bold{G}_j^n [\cup_{i\in \mathcal S}\bold{V}_i^n]$, where $\mathcal S \subseteq \{1,2,\ldots , k\}$.
Also, consider  distinct receivers $\text{Rx}_{i_1},\ldots, \text{Rx}_{i_{j+1}}$, where $j=1,2,\ldots, k-1$ and $i_1,\ldots, i_{j+1}\in \{1,\ldots,k\} $. 
If $\text{Rx}_{i_1},\ldots, \text{Rx}_{i_{j}}$ supply delayed CSIT, then,
\begin{align}
& \frac{ \Text{rank} [\bold{Y}_{i_1}^n; \ldots ;\bold{Y}_{i_{j+1}}^n]  }{j+1}  \stackrel{a.s.}{\leq}  \frac{ \Text{rank} [\bold{Y}_{i_1}^n; \ldots ;\bold{Y}_{i_j}^n]  }{j}.
\end{align}
\end{lemmarep}

\begin{proof}
Without loss of generality, we suppose that $i_1=1, i_2=2, \ldots, i_{j+1}=j+1$.
Thus, we need to show that 
\begin{equation} \label{MIMORRIobj}
\frac{ \text{rank} [\bold{Y}_{1}^n; \ldots ;\bold{Y}_{{j+1}}^n]  }{j+1}  \stackrel{a.s.}{\leq}  \frac{ \text{rank} [\bold{Y}_{1}^n; \ldots ;\bold{Y}_{j}^n]  }{j}.
\end{equation}
Let us denote 
\begin{equation}
\text{rank}[A |  B] \triangleq \text{rank}[A ;  B] -  \text{rank}[B] .\label{conditiondef}
\end{equation}
Hence, by sub-modularity property of rank (Lemma \ref{ranksubmod}),  for matrices $A,B,C$ with the same number of columns,
\begin{equation}
\text{rank}[A |  B] \geq  \text{rank}[A |  B;C]; \label{conditioning}
\end{equation}
\begin{equation}
\text{rank}[A |  C] + \text{rank}[B |  C]   \geq  \text{rank}[A;B |  C]. \label{condSubAdd}
\end{equation}
Moreover, we denote $\bold{Y}_j(t) \triangleq  \vec{\bold{g}}_j(t) [\cup_{h\in \mathcal S}\bold{V}_h(t)]$ and $\bold{Y}^t \triangleq [\bold{Y}_1^t ;\ldots ;\bold{Y}_{j}^t]$.
For each $i=1,\ldots , j,$ we have
\begin{eqnarray}
&&\text{rank} [\bold{Y}_i(t)| [\bold{Y}^{t-1};  \bold{Y}_1(t); \ldots ; \bold{Y}_{i-1}(t) ]]  \nonumber\\
&\stackrel{(\ref{conditiondef})}{=}&  I\left (  \vec{\bold{g}}_i(t) [\cup_{h\in \mathcal S}\bold{V}_h(t)]  \notin \text{rowspan} [\bold{Y}^{t-1};  \bold{Y}_1(t); \ldots ; \bold{Y}_{i-1}(t) ]\right) \label{a1}\\ 
& =& 1 - I\left (  \vec{\bold{g}}_i(t) [\cup_{h\in \mathcal S}\bold{V}_h(t)]  \in \text{rowspan} [\bold{Y}^{t-1};  \bold{Y}_1(t); \ldots ; \bold{Y}_{i-1}(t) ]\right) \label{a2}\\
&\substack{(a)\\ a.s.\\ =}& 1 - I\left (   \text{rowspan}[\cup_{h\in \mathcal S}\bold{V}_h(t)]  \subseteq \text{rowspan} [\bold{Y}^{t-1};  \bold{Y}_1(t); \ldots ; \bold{Y}_{i-1}(t) ]\right) \label{a3}  \\
&\stackrel{(b)}{\geq}& 1 - I\left (  \vec{\bold{g}}_{j+1}(t)  [\cup_{h\in \mathcal S}\bold{V}_h(t)]  \in \text{rowspan} [\bold{Y}^{t-1};  \bold{Y}_1(t); \ldots ; \bold{Y}_{i-1}(t) ]\right) \label{a4} \\
&\stackrel{(\ref{conditiondef})}{ =}& \text{rank} [\bold{Y}_{j+1}(t)| [\bold{Y}^{t-1};  \bold{Y}_1(t); \ldots ; \bold{Y}_{i-1}(t) ]] \label{a5}  \\
&\stackrel{(\ref{conditioning})}{\geq} & \text{rank} [\bold{Y}_{j+1}(t)| [\bold{Y}^{t-1};   \bold{Y}_1(t); \ldots ;  \bold{Y}_j(t) ]]  \\
&\stackrel{(\ref{conditiondef})}{=} & \text{rank} [[\bold{Y}_1(t); \ldots ;  \bold{Y}_{j+1}(t)]| \bold{Y}^{t-1}] - \text{rank} [[\bold{Y}_1(t); \ldots ;  \bold{Y}_{j}(t)]| \bold{Y}^{t-1}]  \\
&\stackrel{(\ref{conditioning})}{\geq} & \text{rank} [[\bold{Y}_1(t); \ldots ;  \bold{Y}_{j+1}(t)]  | [\bold{Y}^{t-1}; \bold{Y}_{j+1}^{t-1}]] - \text{rank} [[\bold{Y}_1(t); \ldots ;  \bold{Y}_{j}(t)]  | \bold{Y}^{t-1}], \label{ineqs4MIMORRI}
\end{eqnarray}
where to see why (a) holds, we first present  the following variant of  Lemma \ref{DCSIT}: if $\mathcal A_t$  denotes the event:\\ $\vec{\bold{g}}_i(t) [\cup_{h\in \mathcal S}\bold{V}_h(t)]  \in \text{rowspan} [\bold{Y}^{t-1};  \bold{Y}_1(t); \ldots ; \bold{Y}_{i-1}(t) ]$, and $\mathcal B_t$ denotes the event:  $\text{rowspan}[\cup_{h\in \mathcal S}\bold{V}_h(t)]  \subseteq \\ \text{rowspan} [\bold{Y}^{t-1};  \bold{Y}_1(t); \ldots ; \bold{Y}_{i-1}(t) ]$, then
\begin{equation} \label{DCSITvariant}
\Pr \left(   \mathcal A_t   \cap   \mathcal B_t ^c    \right ) = 0,
\end{equation}
 which can be proven using the same steps as in proof of Lemma \ref{DCSIT}; therefore, its proof is omitted for brevity. 
As a result, $I(\mathcal A_t) = I(\mathcal A_t \cap \mathcal B_t)+I(\mathcal A_t \cap \mathcal B_t^c) \substack{(\ref{DCSITvariant})\\ a.s.\\ = }   I(\mathcal A_t \cap \mathcal B_t) = I(\mathcal B_t)$,
where the last equality holds since  occurrence of $\mathcal B_t$ implies occurrence of   $\mathcal A_t$.
Therefore, $1- I(\mathcal A_t) \stackrel{a.s.}{=}  1- I(\mathcal B_t) .$
In addition, note that the left-hand-side of (a) is $1- I(\mathcal A_t)$, and  the right-hand-side of (a) is $1- I(\mathcal B_t)$.
Hence,  (a) holds.
Moreover, (b) holds due to the fact that if 
$$  \text{rowspan}[\cup_{h\in \mathcal S}\bold{V}_h(t)]  \subseteq \text{rowspan} [\bold{Y}^{t-1};  \bold{Y}_1(t); \ldots ; \bold{Y}_{i-1}(t) ],$$ 
then, 
$$\vec{\bold{g}}_{j+1}(t)  [\cup_{h\in \mathcal S}\bold{V}_h(t)]  \in \text{rowspan} [\bold{Y}^{t-1};  \bold{Y}_1(t); \ldots ; \bold{Y}_{i-1}(t) ].$$

By summing both sides of (\ref{ineqs4MIMORRI}) over $i=1,\ldots , j,$ and using (\ref{conditiondef}) we obtain,
\begin{equation}
\text{rank} [[\bold{Y}_1(t); \ldots ; \bold{Y}_j(t)]  | \bold{Y}^{t-1} ]   \stackrel{a.s.}{\geq }  j \left ( \text{rank} [[\bold{Y}_1(t); \ldots ;  \bold{Y}_{j+1}(t)] | [\bold{Y}^{t-1}; \bold{Y}_{j+1}^{t-1}]] - \text{rank} [[\bold{Y}_1(t); \ldots ;  \bold{Y}_{j}(t)]| \bold{Y}^{t-1}]\right); 
\end{equation}
and by rearranging the above inequality and dividing both sides by $j(j+1)$ we obtain
\begin{equation}
\frac{  \text{rank} [[\bold{Y}_1(t); \ldots ; \bold{Y}_j(t)]  | \bold{Y}^{t-1} ] }{j}  \stackrel{a.s.}{\geq }  \frac{ \text{rank} [[\bold{Y}_1(t); \ldots ;  \bold{Y}_{j+1}(t)]|[ \bold{Y}^{t-1}; \bold{Y}_{j+1}^{t-1}]]}{j+1}. 
\end{equation}
Finally, by summing both sides of the above inequality over all $t=1,\ldots, n$, and using the definition in (\ref{conditiondef}), the  proof of (\ref{MIMORRIobj}) would be complete, which concludes the proof of  Lemma \ref{MIMORRIk}.
\end{proof}

\section{Proof of Least Alignment Lemma  (Proof of Lemmas \ref{LAL},\ref{LALk})} \label{LALproof}
Note that Lemma \ref{LAL} is a special case of Lemma \ref{LALk} where $k=3$ and $j=3$.
Therefore, in order to prove Lemma \ref{LAL} and Lemma \ref{LALk} it is sufficient to prove only Lemma \ref{LALk}.
We first re-state Lemma \ref{LALk} here for convenience.

\begin{lemmarep}{Lemma~\ref{LALk}}
{\bf (Least Alignment Lemma)}    For any linear coding strategy $ f^{(n)}$, with corresponding $\bold{V}_{1}^{n}, \ldots,\bold{V}_{k}^{n} $ as defined in (\ref{coding}), and any $\mathcal S\subseteq \{1,2,\ldots, k\}$, if $ I_j=N$ for some $j\in \{1,2,\ldots, k\}$,
\begin{align*}
 \forall \ell\in \{1,2,\ldots, k\},\qquad    \Text{ rank} & \left[ \bold{G}_{\ell}^n[\cup_{i\in \mathcal S} \bold{V}_i^n]  \right] \stackrel{a.s.}{\leq} \Text{ rank} \left[\bold{G}_{j}^n[\cup_{i\in \mathcal S} \bold{V}_i^n]  \right].
\end{align*}
\end{lemmarep}

\begin{proof}
Define $m(n)\triangleq \sum_{i\in \mathcal S}^{}  m_i(n)$.
We first state a lemma that will be later useful  in  proving  Lemma \ref{LALk}.

\begin{lemma}\label{roots}
{\bf(\cite{Measure})} For $n\in\mathbb N$, a multi-variate polynomial function on $\mathbb C^n$ to $\mathbb C$, is either identically $0$, or non-zero almost everywhere.
\end{lemma}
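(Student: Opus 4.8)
The plan is to read ``non-zero almost everywhere'' with respect to Lebesgue measure on $\mathbb C^n$, identified with $\mathbb R^{2n}$ via $z_\ell = x_\ell + \sqrt{-1}\,y_\ell$, and to show that whenever a polynomial $p$ is not identically $0$, its zero set $\mathcal Z \triangleq \{z\in \mathbb C^n : p(z)=0\}$ has $2n$-dimensional Lebesgue measure zero. I would argue by induction on the number of variables $n$. First I would note that, since $\mathcal Z$ is the preimage of the closed singleton $\{0\}$ under the continuous map $p$, it is closed and hence Borel measurable, so both its measure and the slicing used below are legitimate. To avoid the indeterminate product $\infty\cdot 0$ that otherwise appears in the Tonelli step, I would establish the measure-zero claim on an arbitrary bounded box $B_R \triangleq [-R,R]^{2n}$ and then let $R\to\infty$.

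For the base case $n=1$, a polynomial $p$ that is not identically zero has some finite degree $d$ and, by the fundamental theorem of algebra, at most $d$ roots; a finite subset of $\mathbb C\cong\mathbb R^2$ has planar Lebesgue measure zero, which settles $n=1$.

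For the inductive step I would assume the statement for $n-1$ variables and expand $p$ in the last variable as $p(z_1,\ldots,z_n)=\sum_{r=0}^{d} c_r(z_1,\ldots,z_{n-1})\,z_n^{\,r}$, where each coefficient $c_r$ is a polynomial in $(z_1,\ldots,z_{n-1})$. Because $p\not\equiv 0$, at least one coefficient $c_{r_0}$ is not identically zero, so by the inductive hypothesis the exceptional set $\mathcal E \triangleq \{(z_1,\ldots,z_{n-1}) : c_r(z_1,\ldots,z_{n-1})=0 \text{ for all } r\}\subseteq\{c_{r_0}=0\}$ has $(2n-2)$-dimensional measure zero. For every $(z_1,\ldots,z_{n-1})\notin\mathcal E$ the one-variable slice $z_n\mapsto p(z_1,\ldots,z_{n-1},z_n)$ is a nonzero univariate polynomial, whose zero set in $\mathbb C$ is finite by the base case and therefore of planar measure zero.

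The final step is to combine these facts through Tonelli's theorem applied to the nonnegative measurable function $\mathbf{1}_{\mathcal Z\cap B_R}$: the measure of $\mathcal Z\cap B_R$ equals the integral, over the first $n-1$ bounded complex coordinates, of the planar measure of each slice. That slice-measure vanishes off $\mathcal E$ and is bounded by $(2R)^2$ on $\mathcal E$, while $\mathcal E$ has measure zero, so the integral is $0$; letting $R\to\infty$ yields $\mathrm{meas}(\mathcal Z)=0$. I expect the only genuinely delicate point to be the bookkeeping in this Tonelli step, namely confining attention to a bounded box so that every slice has finite measure and the product $\mathrm{meas}(\mathcal E)\cdot(2R)^2=0$ is unambiguous; the underlying polynomial algebra is routine by comparison.
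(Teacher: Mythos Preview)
Your argument is correct and is in fact the standard proof of this well-known measure-theoretic fact: induction on the number of variables, with the base case handled by the fundamental theorem of algebra and the inductive step carried through via Tonelli on slices, with the bounded-box truncation cleanly sidestepping the $0\cdot\infty$ ambiguity. The paper, however, does not prove this lemma at all; it simply cites it from an external reference (the bracketed citation in the lemma header). So there is no ``paper's own proof'' to compare against---you have supplied a self-contained argument where the paper chose to invoke the literature, and your treatment is more than adequate for this purpose.
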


We now prove Lemma \ref{LALk}.
Denote by $[1:n]$ the set $\{1,\ldots, n\}$.
For any matrix $B_{n\times m(n)}$ and  $I_1\subseteq [1:n]$, and $I_2\subseteq [1:m(n)]$, we
denote by $B_{I_1,I_2}$ the sub-matrix of $B$ whose rows and columns are specified by $I_1$ and  $I_2$, respectively.
Define the set of channel  realizations $\mathcal A$ as:
\begin{equation}
 \mathcal A \triangleq \left\{\mathcal G^n | \text{rank}[G_{\ell}^n[\cup_{i\in \mathcal S} {V}_i^n]] > \text{rank}[G_{j}^n[\cup_{i\in \mathcal S} {V}_i^n]]\right\}. \label{Adef}
\end{equation}

In order to prove $\text{rank}[\bold{G}_{\ell}^n[\cup_{i\in \mathcal S} \bold{V}_i^n]] \stackrel{a.s.}{\leq} \text{rank}[\bold{G}_{j}^n[\cup_{i\in \mathcal S} \bold{V}_i^n]]$, we only need to show $\Pr(\mathcal A)=0$.
Since a matrix $B_{n\times m(n)}$ has rank $r$ if and only if the maximum size of a square sub-matrix of $B$ with non-zero determinant is $r$, 
\begin{align}
 \mathcal A \subseteq  \{\mathcal G^n | \quad & \exists I_1\subseteq [1:n], I_2\subseteq [1:m(n)],  |I_1|=|I_2|, \nonumber\\
& s.t. \quad  \text{det}([G_{\ell}^n[\cup_{i\in \mathcal S} {V}_i^n]]_{I_1,I_2}) \ne   \text{det}([G_{j}^n[\cup_{i\in \mathcal S} {V}_i^n]]_{I_1,I_2}) = 0\}, \nonumber  
\end{align}
which can be rewritten as
\begin{align}
 A \subseteq \cup_{\substack{I_1\subseteq [1: n] \\  I_2\subseteq [1: m(n)]   \\ |I_1|=|I_2|} } &  \left\{\mathcal G^n| 
  \text{det}([G_{\ell}^n[\cup_{i\in \mathcal S} {V}_i^n] ]_{I_1,I_2}) \ne 0, \quad \text{det}  ([G_{j}^n[\cup_{i\in \mathcal S} {V}_i^n] ]_{I_1,I_2}) = 0  \right\}. \label{detne}
\end{align}

Let $X^n$ denote a diagonal matrix of size $n\times n$  where the elements on the diagonal  are variables in $ \mathbb C$.
Then, for any $I_1\subseteq [1: n], I_2\subseteq [1:m(n)] , \Text{ where } |I_1|=|I_2|$,  $\text{det}([X^n[\cup_{i\in \mathcal S} {V}_i^n]]_{I_1,I_2})$ is a multi-variate polynomial function in
the elements of $X^n$.
 Note that  if for some realization $X^n=G_{\ell}^n$,
$\text{det}([G_{\ell}^n[\cup_{i\in \mathcal S} {V}_i^n]]_{I_1,I_2}) \ne 0$,
then the polynomial function defined by $\text{det}([X^n[\cup_{i\in \mathcal S} {V}_i^n] ]_{I_1,I_2})$ is not identical to zero ( i.e., $\text{det}([X^n  [\cup_{i\in \mathcal S} {V}_i^n] ]_{I_1,I_2}) \stackrel{\Text{identical}}{\ne} 0$).
So, by (\ref{detne}), we have
\begin{align}
\mathcal A \subseteq  \cup_{\substack{I_1\subseteq [1: n] \\  I_2\subseteq [1:m(n)]   \\ |I_1|=|I_2|} } & \{\mathcal G^n   |   \text{det}([X^n[\cup_{i\in \mathcal S} {V}_i^n] ]_{I_1,I_2})\stackrel{\Text{identical}}{\ne} 0,\quad  \text{det}([G_{j}^n[\cup_{i\in \mathcal S} {V}_i^n] ]_{I_1,I_2}) = 0\} \nonumber\\
=  \cup_{\substack{I_1\subseteq [1: n] \\  I_2\subseteq [1:m(n)]   \\ |I_1|=|I_2|} } & \{\mathcal G^n   |   \text{det}([X^n [\cup_{i\in \mathcal S} {V}_i^n] ]_{I_1,I_2})\stackrel{\Text{identical}}{\ne} 0, \quad  G_{j}^n \text{ is root of } \text{det}([X^n[\cup_{i\in \mathcal S} {V}_i^n]  ]_{I_1,I_2})  \}. \label{detpol}
\end{align}
Note that  by Lemma \ref{roots}, for every $I_1\in [1:n] ,I_2\in [1:m(n)], |I_1|=|I_2|$, we have
\begin{align}
\Pr(  \{ & \mathcal  G^n   |   \text{det}([X^n [\cup_{i\in \mathcal S} {V}_i^n]  ]_{I_1,I_2})\stackrel{\Text{identical}}{\ne} 0,
\quad  G_{j}^n   \text{  is root of } \text{det}([X^n [\cup_{i\in \mathcal S} {V}_i^n]]_{I_1,I_2})    \} ) =0.
\end{align}
So, since finite union of measure-zero sets has measure zero, 
\begin{align}
\Pr(  \cup_{\substack{I_1\subseteq [1: n] \\  I_2\subseteq [1:m(n)]   \\ |I_1|=|I_2|} }&  \{\mathcal G^n   |   \text{det}([X^n [\cup_{i\in \mathcal S} {V}_i^n] ]_{I_1,I_2})\stackrel{\Text{identical}}{\ne} 0,\quad   G_{j}^n \text{: root of }  \text{det}([X^n [\cup_{i\in \mathcal S} {V}_i^n] ]_{I_1,I_2})  \}  )=0,
\end{align}
which by (\ref{detpol})  implies that
$\Pr(\mathcal A)=0.$
Therefore, according to the definition of $\mathcal A$ in (\ref{Adef}),
\begin{align}
\Text{ rank}  \left[ \bold{G}_{\ell}^n[\cup_{i\in \mathcal S} \bold{V}_i^n]  \right] \stackrel{a.s.}{\leq} \Text{ rank} \left[\bold{G}_{j}^n[\cup_{i\in \mathcal S} \bold{V}_i^n]  \right],
\end{align}
which completes the proof of Least Alignment Lemma.
\end{proof}

\begin{remark} \label{LALExt}
Using the same line of argument as in the proof of Lemma \ref{LALk}, 
one can prove Lemma \ref{LALk} for a more general network setting where there are arbitrary number of transmitters, and the transmitters have arbitrary number of antennas.
In addition, the statement of Lemma \ref{LALk} holds even if $\text{Rx}_j,\text{Rx}_{\ell}$ have multiple but equal number of antennas.
\end{remark}

\section{Proof of Proposition \ref{propapprox} (Constant Gap Characterization for $|\mathcal P| \geq |\mathcal D|) $} \label{constgap}
In this Appendix we show that for $|\mathcal P| \geq |\mathcal D| $,  Theorem \ref{mainthmk} leads to  an approximate characterization of $\Text{LDoF}_{\Text{sum}}$ to within an additive gap of  $\frac{1}{2}$, as presented in Proposition \ref{propapprox}.
First, note that for the special case of $|\mathcal P| = |\mathcal D| = 0$,  $\text{LDoF}_{\text{region}} $ is completely characterized by
$\{(d_1,\ldots ,d_k) \quad |\quad  \sum_{i=1}^{k}d_i\leq 1\}$.
Thus, henceforth we assume that $|\mathcal P| > 0$.

Moreover, note that a naive lower bound for $\Text{LDoF}_{\Text{sum}}$ is $|\mathcal P|$; since we can focus only on the $|\mathcal P|$ receivers that provide instantaneous CSIT, and for those $|\mathcal P|$ receivers we can perform zero-forcing to cancel interference and achieve  $|\mathcal P|$ as a lower bound on $\text{LDoF}_{\text{sum}}$. 
Using this lower bound we show that for the case where $|\mathcal P| \geq |\mathcal D| $, the statement of Proposition \ref{propapprox} holds.
In particular, we first consider the case where  $ |\mathcal D| =0. $ 
For this case, by (\ref{LALbound}) in Theorem \ref{mainthmk} we have
\begin{align}
 \forall i\in  \mathcal P \cup \mathcal D , \quad     d_{i} +\sum_{j\in \mathcal N} d_j \leq 1,
\end{align}
which, together with $\forall i, d_i\leq 1$, yields
\begin{align}
\text{LDoF}_{\text{sum}} \leq |\mathcal P|.
\end{align}
Hence, the naive lower bound of $|\mathcal P|$ on $\Text{LDoF}_{\Text{sum}}$ is tight for the case where  $ |\mathcal D| =0. $ 
Moreover, $\text{LDoF}_{\text{sum}}$ for the special case where  $ |\mathcal D| =1 $ is characterized in Proposition \ref{D1}.
Therefore, we only need to prove Proposition \ref{propapprox} for the case of $|\mathcal P| \geq |\mathcal D| > 1. $
Recall that by (\ref{IDBthm}) in Theorem \ref{mainthmk},
\begin{equation}
\forall i\in   \mathcal D ,  \forall \pi_{\mathcal P \cup \mathcal D \setminus i}, \quad    \sum_{j =1}^{|\mathcal P|+ |\mathcal D|-1} \frac{d_{\pi_{\mathcal P \cup \mathcal D \setminus i} (j)}}{2^j} + d_{i} +\sum_{j\in \mathcal N} d_j \leq 1.
\end{equation}

Without loss of generality, suppose $\mathcal P = \{ 1, \ldots , |\mathcal P| \} $, and $\mathcal D = \{ |\mathcal P|+1, \ldots , |\mathcal P|+|\mathcal D| \}$, and $\mathcal N = \{ |\mathcal P|+|\mathcal D|+1, \ldots , k \}$.
In addition, let $i=|\mathcal P|+ |\mathcal D|$, and $\pi_{\mathcal P \cup \mathcal D \setminus i}$ be the identity permutation.
Consequently,  by  (\ref{IDBthm}) in Theorem \ref{mainthmk} we obtain:
\begin{equation} \label{ourtechnique}
\sum_{i=1}^{|\mathcal P|+|\mathcal D|-1} \frac{d_i}{2^{i}} + d_{|\mathcal P|+|\mathcal D|}+\sum_{j\in \mathcal N} d_j  \leq 1, 
\end{equation}
or equivalently,
\begin{equation} \label{ourtechnique2}
(\sum_{i=1}^{|\mathcal P|} \frac{d_i}{2^{i}} )+(\sum_{i=|\mathcal P|+1}^{|\mathcal P|+|\mathcal D|-1 } \frac{d_i}{2^{i}}+  d_{|\mathcal P|+|\mathcal D|})+\sum_{j\in \mathcal N} d_j  \leq 1. 
\end{equation}
Note that in the above inequality there are $|\mathcal P|$ different coefficients (i.e. $\frac{1}{2}, \ldots , \frac{1}{2^{|\mathcal P|}}$) for receivers in $\mathcal P$, and $|\mathcal D|$ different coefficients (i.e. $\frac{1}{2^{|\mathcal P|+1}}, \ldots , \frac{1}{2^{|\mathcal P|+|\mathcal D|-1}}, 1$) for receivers in $\mathcal D$.
Due to symmetry, we can consider all the possible $|\mathcal P|!\times|\mathcal D|!$ joint permutations of the receivers in $\mathcal P$ and $\mathcal D$, leading to permutations of the corresponding coefficients in (\ref{ourtechnique2}).
By  summing over all those resulting inequalities,  and diving by $|\mathcal P|!\times|\mathcal D|!$, we obtain 
\begin{equation} \label{permuted}
 ( 1 - \frac{1}{2^{|\mathcal P|}} ) (\sum_{i=1}^{|\mathcal P|} \frac{d_i}{|\mathcal P|}) + ( 1+\frac{1}{2^{|\mathcal P|}}- \frac{1}{2^{|\mathcal P|+|\mathcal D|-1}} ) (\sum_{i=|\mathcal P|+1}^{|\mathcal P|+|\mathcal D|} \frac{d_i}{|\mathcal D|}) +\sum_{j\in \mathcal N} d_j \leq 1. 
\end{equation}
Note that $\Text{LDoF}_{\Text{sum}} \leq \max \sum_{i=1}^k d_i$ subject to (\ref{permuted}) and $d_i\leq 1$ for all $i$, which is basically a simple linear program.
By solving the linear program, one can easily see that 
 \begin{align}
\Text{LDoF}_{\Text{sum}} \leq  \max \left( |\mathcal P| + \frac{|\mathcal D|}{2^{|\mathcal P|}+1 - \frac{1}{2^{|\mathcal D|-1}}} , |\mathcal P| + \frac{1}{2^{|\mathcal P|}}  ,1 ,  \frac{|\mathcal D|}{1 + \frac{1}{2^{|\mathcal P|}} - \frac{1}{2^{|\mathcal P|+|\mathcal D|-1}}}   \right).
\end{align}

Note that since we assumed $|\mathcal P| \geq |\mathcal D| > 1 $,  the above inequality simplifies as follows:
 \begin{align}
\Text{LDoF}_{\Text{sum}} \leq  |\mathcal P| + \frac{|\mathcal D|}{2^{|\mathcal P|}+1 - \frac{1}{2^{|\mathcal D|-1}}},
\end{align}
which together with $\Text{LDoF}_{\Text{sum}} \geq |\mathcal P|$ leads to
\begin{equation}
|\mathcal P| \leq \Text{LDoF}_{\Text{sum}} \leq  |\mathcal P|+ \frac{|\mathcal D|}{2^{|\mathcal P|}+1 - \frac{1}{2^{|\mathcal D|-1}}}.
\end{equation}
Therefore, the gap between upper and lower bounds on $\Text{LDoF}_{\Text{sum}}$ is upper bounded as
\begin{align*}
\text{Gap} = \frac{|\mathcal D|}{2^{|\mathcal P|}+1 - \frac{1}{2^{|\mathcal D|-1}}} \stackrel{}{\leq} \frac{|\mathcal D|  }{2^{|\mathcal P|}} \stackrel{}{\leq} \frac{|\mathcal P|  }{2^{ |\mathcal P| }}   \leq \frac{1}{2}.
\end{align*}
Hence, the proof of Proposition \ref{propapprox} is complete.

\section{Proof of Proposition \ref{D1} ($\Text{LDoF}_{\Text{sum}} = |\mathcal P| + \frac{1}{2^{|\mathcal P|}}$ for $|\mathcal D|=1$)} \label{kAchiev}
We focus on the $k$-user MISO BC with only one receiver supplying delayed CSIT. 
We first prove the converse. 
Assume without loss of generality that  $\mathcal P=\{1,\ldots , |\mathcal P|\}$, $\mathcal D=\{|\mathcal P|+1\}$ and $\mathcal N=\{ |\mathcal P|+2, \ldots, k\}$.
Further, let $i=|\mathcal P|+1$, and $\pi_{\mathcal P \cup \mathcal D \setminus i}$ denote the identity permutation.
Then, by Theorem \ref{mainthmk} the solution to the following linear program provides an upper bound on $\Text{LDoF}_{\Text{sum}}$:
\begin{align}
\Text{LDoF}_{\Text{sum}} \leq \quad  \max\quad &\sum_{i=1}^{k} d_i \nonumber\\
 s.t. \quad &\sum_{i=1}^{|\mathcal P|} \frac{d_i}{2^{i}} + d_{|\mathcal P|+1}+\sum_{j\in \mathcal N} d_j  \leq 1,  \label{x117}\\
 \quad  &0\leq d_i \leq 1, \qquad i=1,\ldots, k,
\end{align}
where the first constraint in the linear program is due to  (\ref{IDBthm}) in Theorem \ref{mainthmk}.
Thus, by solving the above linear program one can readily see that
\begin{equation}
\Text{LDoF}_{\Text{sum}} \leq |\mathcal P| + \frac{1}{2^{|\mathcal P|}}.
\end{equation}
Hence, the converse proof is complete. 
We now present the achievable scheme, which is a multi-phase scheme that uses hybrid CSIT available to the transmitter to perform interference alignment.
The new achievable scheme  generalizes the schemes for $PD$ in \cite{RaviMaddah} and $PPD$ in \cite{3UserHybrid} (see Figure \ref{FigureScheme} for the special case of $PPPD$).

To achieve $\Text{LDoF}_{\Text{sum}}$ of $|\mathcal P| + \frac{1}{2^{|\mathcal P|}}$, we will ignore the receivers in $\mathcal N$; and we show that we can linearly achieve $(d_1, \ldots , d_{|\mathcal P|+1})= \left(1, \ldots, 1, \frac{1}{2^{|\mathcal P|}}\right)$.
Therefore, if, with slight abuse of notation, we denote $K\triangleq |\mathcal P|+1$,  we need to show that the following DoF tuple is linearly achievable:
\begin{align}
(d_1, \ldots, d_{K-1}, d_{K})&= \left(1, \ldots, 1, \frac{1}{2^{K-1}}\right).
\end{align}
To this end, we present a new multi-phase communication scheme which 
\begin{itemize}
\item operates over $2^{K-1}$ time slots;
\item delivers $2^{K-1}$ symbols to each of the receivers $1, \ldots, K-1$;
\item delivers $1$ symbol to receiver $K$.
\end{itemize}
The overall scheme is split into $K$ phases, indexed as $i=0, 1, 2, \ldots, (K-1)$:
\begin{itemize}
\item the duration of $i$-th phase is ${K-1 \choose i}$ time slots;
\item each of the first $(K-1)$ receivers obtain new (interference-free) linear equations in every time slot;
\item receiver $K$ obtains ${K-1 \choose i}$ equations during phase $i$ (one corresponding to each time  slot). 
\end{itemize}
At the end of the $i$-th phase, receiver $K$ does the following: it uses its received ${K-1 \choose i-1}$ equations from phase $(i-1)$ and ${K-1 \choose i}$ equations in phase $i$ to obtain ${K-1 \choose i}$ new equations with the following specific property: 
each equation is a linear combination of the desired symbol by $\text{Rx}_K$ and $(K-1-i)$ undesired symbols, where each undesired symbol is in fact desired by another receiver. 

Throughout the proof of the achievable scheme we only utilize the first $K$ transmit antennas; therefore, without loss of generality we can assume as well that there are only $K$ transmit antennas.
We first start with  Phase 0, and then explain the transmission strategy for an  arbitrary phase $i$  in full detail.

\subsection{Phase $0$}
Phase $0$ is of duration ${K-1 \choose 0}=1$, i.e., this phase only has $1$ time slot. In this phase, the transmitter sends 
$2$  information symbols  for each of $\text{Rx}_{1}, \text{Rx}_2, \ldots, \text{Rx}_{K-1}$, denoted  by $(\bold{s}^{1}_{1}, \bold{s}^{2}_{1})$, $(\bold{s}^{1}_{2}, \bold{s}^{2}_{2})$, $\ldots$ $,(\bold{s}^{1}_{K-1}, \bold{s}^{2}_{K-1})$, along with one symbol, denoted by $\bold{s}_K$, for the $K$-th receiver. 
Let $\vec{\bold{g}}_{\mathcal S}(1)^\perp$, where $\mathcal S\subseteq \{1,\ldots, K-1\}$, denote a full row rank matrix of size $(K-|\mathcal S|) \times K$, where  each row of $\vec{\bold{g}}_{\mathcal S}(1)^\perp$ is perpendicular to any $\vec{\bold{g}}_{i}(1)$ where $i\in \mathcal S$.
We need to deliver one equation about $(\bold{s}^{1}_{i}, \bold{s}^{2}_{i})$ interference-free to $\text{Rx}_i$, for $i=1,\ldots, K-1$. 
To this aim, the transmit signal at time $1$ will be:
\begin{align}
\vec{\bold{x}}_1(1) = \sum_{i=1}^{K-1} [\vec{\bold{g}}_{\{1,\ldots , K-1\} \setminus \{i\} }(1)^\perp] ^\top \left[\begin{array}{c}\bold{s}^{1}_{i}\\
\bold{s}^{2}_{i}\end{array}\right]    +   [\vec{\bold{g}}_{\{1,\ldots , K-1\}  }(1)^\perp] ^\top  \bold{s}_{K}.
\end{align}

As a result, each of the first $K-1$ receivers obtain one equation in $2$ desired symbols:
\begin{align}
\bold{y}_{i}(1)&=  \vec{\bold{g}}_{i}(1)    [\vec{\bold{g}}_{\{1,\ldots , K-1\} \setminus \{i\} }(1)^\perp] ^\top  \left[\begin{array}{c}\bold{s}^{1}_{i}\\
\bold{s}^{2}_{i}\end{array}\right]   , \qquad i=1,\ldots , K-1;
\end{align}
and receiver $K$ obtains $\bold{s}_K$ along with interference from the other symbols:
\begin{align}
\bold{y}_{K}(1)  = \sum_{i=1}^{K-1} \vec{\bold{g}}_{K}(1)      [\vec{\bold{g}}_{\{1,\ldots , K-1\} \setminus \{i\} }(1)^\perp] ^\top \left[\begin{array}{c}\bold{s}^{1}_{i}\\
\bold{s}^{2}_{i}\end{array}\right]    +\vec{\bold{g}}_{K}(1)   [\vec{\bold{g}}_{\{1,\ldots , K-1\}  }(1)^\perp] ^\top  \bold{s}_{K},
\end{align}
which  can be re-written as:
\begin{align}
\bold{y}_{K}(1)&=  L_{1}(\bold{s}^{1}_{1}, \bold{s}^{2}_{1}) + L_{2}(\bold{s}^{1}_{2}, \bold{s}^{2}_{2}) + \ldots + L_{K-1}(\bold{s}^{1}_{K-1}, \bold{s}^{2}_{K-1}) + \left( \vec{\bold{g}}_{K}(1)   [\vec{\bold{g}}_{\{1,\ldots , K-1\}  }(1)^\perp] ^\top \right)    \bold{s}_{K} ,
\end{align}
where  $ L_{i}(\bold{s}^{1}_{i}, \bold{s}^{2}_{i}) =        \vec{\bold{g}}_{K}(1)    [\vec{\bold{g}}_{\{1,\ldots , K-1\} \setminus \{i\} }(1)^\perp] ^\top \left[\begin{array}{c}\bold{s}^{1}_{i}\\
\bold{s}^{2}_{i}\end{array}\right]     $.
We observe that the $K$-th receiver has obtained $1$ equation, and this equation has $(K-1)$ interfering order-2 symbols, where each order-2 symbol is 
desirable by one of the other $(K-1)$ receivers. 
In particular, each order-2 symbol $L_{i}(\bold{s}^{1}_{i}, \bold{s}^{2}_{i})$
is desired by $\text{Rx}_i$.

The purpose of subsequent phases of the scheme is the following: in each phase $i$, we deliver the interference symbols of phase $i-1$ to the intended receivers while simultaneously sending new information symbols. 
 This should be done in an iterative manner to create a new set of equations at the $K$-th receiver with net interference from a smaller set of receivers, where the interference is useful for that set of receivers. 
With this broad goal in mind, we next describe the transmission strategy for the general phase $i$.

\subsection{Phase $i$}
Duration of Phase $i$ is ${K-1 \choose i}$ time slots. Let us index the slots as $j=1, 2, \ldots, {K-1 \choose i}$.

\subsubsection{Transmission in slot $j$, $j=1, 2, \ldots, {K-1 \choose i}$}
In each time slot $j$, the transmitter selects $i$ receivers out of first $(K-1)$ receivers. This splits the set of $(K-1)$ receivers into two disjoint sets, and for simplicity we denote these as:
\begin{itemize}
\item $\mathcal{R}$ (Repetition set): this is a set of $i$ receivers. Let us denote the indices of the receivers in this set by $(p_1, p_2, \ldots, p_{i})$.
\item $\mathcal{F}$ (Fresh set): this is the remaining set of $(K-1-i)$ receivers, and we denote this set of receivers as $(p_{i+1}, \ldots, p_{K-1})$
\end{itemize}
The basic idea behind the scheme can now be explained clearly:
\begin{itemize}
\item Note that in phase $(i-1)$, the $K$-th receiver has obtained ${K-1 \choose i-1}$ equations, where each equation is a linear combination of $(K-i)$ undesired symbols and the intended symbol (of the $K$-th receiver). 
\item Via delayed CSIT, the transmitter can reconstruct all of these equations within noise distortion. 
\item Out of these ${K-1 \choose i-1}$ equations, the transmitter focuses on those equations which consist of all symbols from the receivers $p_{i+1}, \ldots, p_{K-1}$ (i.e., the receivers belonging to the fresh set $\mathcal{F}$). In total, there are exactly ${i \choose 1} = i$ such equations. The reason is that each equation in phase $(i-1)$ has interference from exactly $(K-i)$ receivers. We zoom in on such equations with interference from $(K-1-i)$ receivers $p_{i+1}, \ldots, p_{K-1}$, and thus the remaining flexibility is to choose $1$ more interference symbol. The total remaining receivers to select from are $(K-1) -(K-1-i) = i$ and hence the number of ways is ${i \choose 1} = i$.

\item From each of these $i$ equations, the transmitter reconstructs the only  symbol in the equation which is desired by one of the receivers in the repetition set $(p_1, p_2, \ldots, p_i)$. Let us denote the reconstructed symbols by $\bold{s}_{p_1}(j), \ldots, \bold{s}_{p_{i}}(j)$ . 
Also, we denote those $i$ equations as following:
\begin{align}
& \bold{s}_{p_{1}}(j) +  LC_{1}: \text{ where $ LC_{1}$ is a linear combination of symbols for receivers in set } \mathcal{F} \cup \{K\}  \label{E1}\\
& \bold{s}_{p_{2}}(j) +  LC_{2}: \text{ where $ LC_{2}$ is a linear combination of symbols for receivers in set } \mathcal{F} \cup \{K\}\label{E2}\\  
&\vdots\\
& \bold{s}_{p_{i}}(j) +  LC_{i}: \text{ where $ LC_{i}$ is a linear combination of symbols for receivers in set } \mathcal{F} \cup \{K\}. \label{Ei}
\end{align}

\item For each of the $(K-1-i)$ receivers in the fresh set, the transmitter sends $2$ precoded fresh (i.e. new) symbols. Let us denote these as $(\bold{s}^{1}_{p_{i+1}}(j), \bold{s}^{2}_{p_{i+1}}(j))$, $(\bold{s}^{1}_{p_{i+2}}(j), \bold{s}^{2}_{p_{i+2}}(j))$ up to $(\bold{s}^{1}_{p_{K-1}}(j), \bold{s}^{2}_{p_{K-1}}(j))$.  
\end{itemize} 
Hence in the $j$-th slot of phase $i$, the transmitter sends:
\begin{align}
\vec{\bold{x}}_{i}(j)&= \underbrace{\sum_{r=1}^{i}   [\vec{\bold{g}}_{\{1,\ldots , K-1\} \setminus \{p_{r}(j)\} }(j)^\perp] ^\top  \left[\begin{array}{c}\bold{s}_{p_{r}}(j)\\
0\end{array}\right]}_{i \text{ repetition symbols}} + \underbrace{\sum_{r=i+1}^{K-1}   [\vec{\bold{g}}_{\{1,\ldots , K-1\} \setminus \{p_{r}(j)\} }(j)^\perp] ^\top  \left[\begin{array}{c}\bold{s}^{1}_{p_{r}}(j)\\
\bold{s}^{2}_{p_{r}}(j)\end{array}\right]}_{2(K-1-i) \text{ fresh symbols}}.
\end{align}
Clearly, the set of repetition receivers $\{p_{1}, p_2, \ldots, p_{i}\}$ receive one symbol without interference. Similarly, the set of receivers $\{p_{i+1}, \ldots, p_{K-1}\}$ also receive one clean (interference-free) useful symbol in this slot (which is a linear combination of the two fresh symbols).

\subsubsection{Operation at  $\text{Rx}_K$ in time slot $j$ of phase $i$}
Let us now focus on $\text{Rx}_K$ at the $j$-th time slot of phase $i$. $\text{Rx}_K$ obtains 
\begin{align}
\bold{y}_{K}(i, j)&= \sum_{r=1}^{i}\alpha_{r}(i, j)\bold{s}_{p_{r}}(j) + \sum_{r=i+1}^{K-1} LC(\bold{s}^{1}_{p_{r}}(j), \bold{s}^{2}_{p_{r}}(j)),\label{EA}
\end{align}
where 
$\alpha_{r}(i, j)$ denotes the coefficient of the symbol $\bold{s}_{p_{r}}(j)$ when received at $\text{Rx}_K$; and
$ LC(\bold{s}^{1}_{p_{r}}(j), \bold{s}^{1}_{p_{r}}(j))$ denotes the linear combination of $\bold{s}^{1}_{p_{r}}(j), \bold{s}^{1}_{p_{r}}(j)$ received at $\text{Rx}_K$.
Note that from phase $(i-1)$, the receiver also has $i$ equations $\bold{s}_{p_{1}}(j) +  LC_{1},\ldots, \bold{s}_{p_{i}}(j) +  LC_{i}$ as mentioned in (\ref{E1})-(\ref{Ei}).
Using these $i$ equations together with  (\ref{EA}), receiver $K$ eliminates the $i$ symbols $\bold{s}_{p_{1}}(j), \bold{s}_{p_{2}}(j), \ldots, \bold{s}_{p_{i}}(j)$; and it is left with an equation of the following form:
\begin{align}
LC_{p_{i+1}}(j) + LC_{p_{i+2}}(j) + \ldots + LC_{p_{K-1}}(j) + \bold{s}_{K}.
\end{align}
This equation consists of $(K-1-i)$ interfering symbols, where each interfering symbol $LC_{p_{r}}(j)$ is desired by  $\text{Rx}_{p_{r}}$, for $r=i+1, i+2, \ldots, (K-1)$.

Recall that  the slot index  $j$ varies from $1$ to ${K-1 \choose i}$, each slot corresponding to the partitioning of the set of $K-1$ receivers into two disjoint sets of size $i$ and $(K-1-i)$. 
Each slot gives the $K$-th receiver one equation with interference from exactly $(K-1-i)$ receivers. Hence, in total, at the end of phase $i$, the receiver has ${K-1 \choose i}$ equations, and each equation has interference from symbols desired by exactly $(K-1-i)$ receivers. Thus, we can now readily apply this process iteratively.

\subsubsection{Phase $K-1$ (the last phase; corresponding to $i=K-1$)}
Before the last phase $K-1$, (i.e., just after phase $K-2$), the $K$-th receiver has ${K-1 \choose i-1} = {K-1 \choose K-2} = K-1$ equations, and each equation has interference from exactly $(K-1)- (i-1) = (K-1)- (K-2) = 1$ receiver. 
Hence, the $K$-th receiver has $K-1$  equations of the following form before the last phase: 
\begin{align}
 LC'_{1} + \bold{s}_K ,LC'_{2} + \bold{s}_K, \ldots, LC'_{K-1} + \bold{s}_K,\label{RK}
\end{align}
where $LC'_{1}$ is desired by receiver $1$, $LC'_{2}$ is desired by receiver $2$, etc.

In the last phase, whose duration is only $1$ slot (since ${K-1 \choose K-1} =1)$, the transmitter sends $LC'_{1}, \ldots, LC'_{K-1}$ without any interference to receivers $1, \ldots, K-1$ by utilizing instantaneous  CSIT. Receiver $K$ obtains a linear combination of $LC'_{1}, \ldots LC'_{K-1}$.
Hence, the $K$th receiver has $K$ equations  in $K$ variables $LC'_1, LC'_2, \ldots, LC'_K$ and $s_K$. Therefore, it can decode $s_K$;  
and the proof is complete.

\begin{figure}[t]
  \centering
\includegraphics[width=14.0cm]{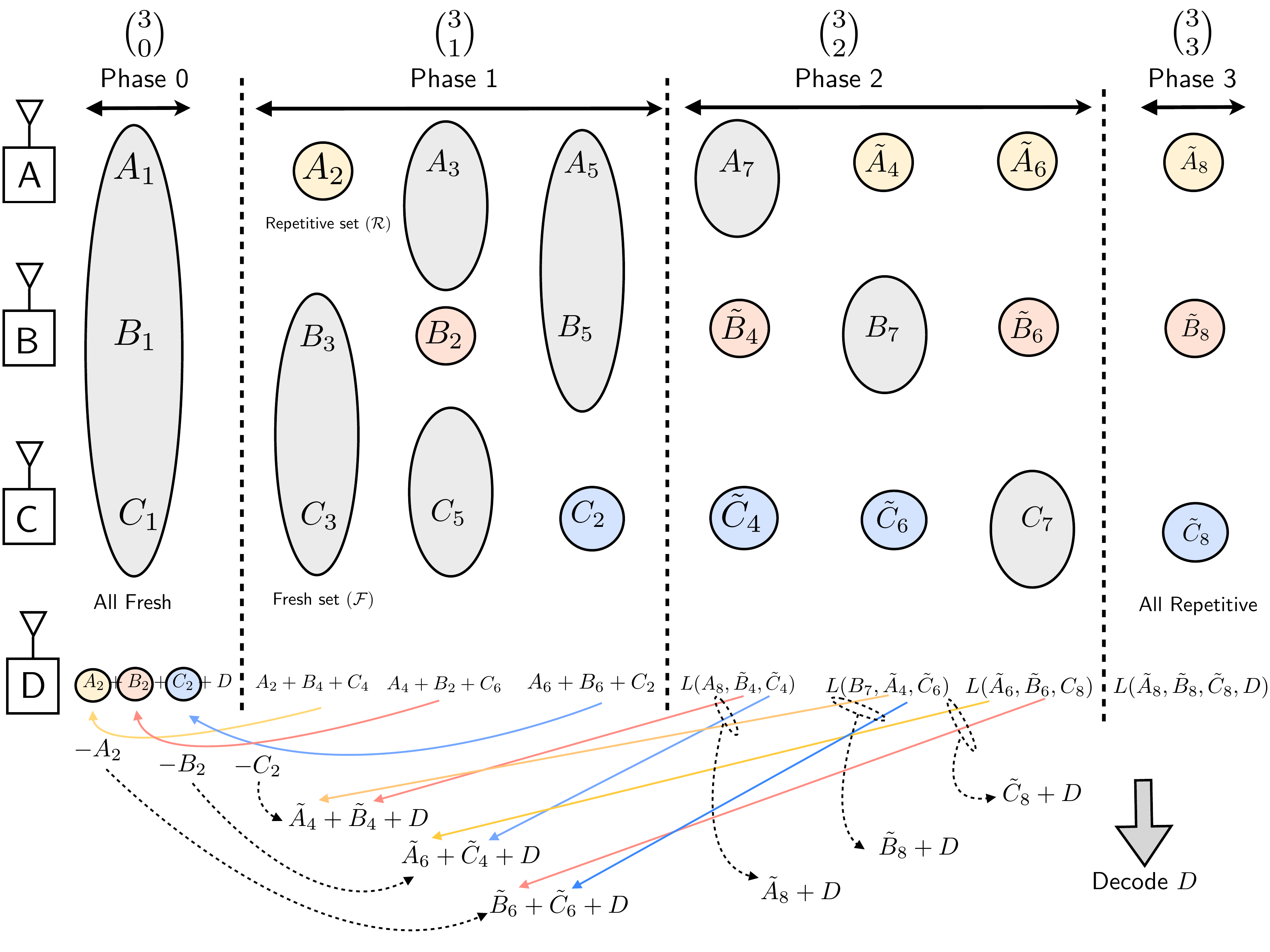}
\vspace{-0.2in}
\caption{Scheme for $4$-user MISO BC: $PPPD$ Setting.}\label{FigureScheme}
\end{figure}

\subsection{Illustrative Example -- $4$ User MISO BC}
Here, we present the achievable scheme for $K=4$  to clearly illustrate the idea behind the iterative scheme. For the case of 4-user MISO BC with $PPPD$, the goal is to  achieve:
\begin{align}
(d_1, d_2, d_3, d_4)&= \left(1, 1, 1, \frac{1}{2^3}\right).
\end{align}
Here, the scheme has $K= 4$ phases, with the following phase durations:
\begin{itemize}
\item Phase $0$: ${3 \choose 0} =1$ time slots; Tx sends two new information symbols for each of  the first three receivers, and one symbol for the fourth receiver. Each of the first three receivers will receive a linear combination of its two desired symbols without any interference. 
\item Phase $1$: ${3 \choose 1} =2$ time slots; in each time slot, Tx sends 
the signal received in the past by $\text{Rx}_K$ with respect to the symbols of one of the first three receivers; and it also sends
two new  information symbols for each of the other 2 receivers supplying instantaneous CSIT.
\item Phase $2$: ${3 \choose 2} =3$ time slots; in each slot, Tx sends fresh information for $1$ receiver with instantaneous CSIT and supplies past signals received by $\text{Rx}_K$ with respect to the remaining $2$ receivers supplying instantaneous CSIT.
\item Phase $3$: ${3 \choose 3} =1$ time slot; Tx sends past received signals by $\text{Rx}_K$ which are  desired by the three receivers supplying instantaneous CSIT. 
\end{itemize}
See Figure \ref{FigureScheme} which illustrates the achievable scheme for $4$-user MISO BC, where the first 3 receivers supply instantaneous CSIT, while the fourth receiver supplies delayed CSIT.

\section{Proof of Claim \ref{cl12}} \label{cl12proof}
We first re-state Claim \ref{cl12} here for convenience.

\begin{claimrep}{Claim~\ref{cl12}} 
\begin{equation} 
 \sum_{j=1}^{|\mathcal P|+ |\mathcal D|-1} \frac{m_j(n)}{2^{j}}  \stackrel{a.s.}{\leq } \Text{rank}[ \bold{G}_{|\mathcal P|+ |\mathcal D|}^n [\bold{V}_{1}^{n}\ldots \bold{V}_{|\mathcal P|+ |\mathcal D|-1}^{n}]].
\end{equation}
\end{claimrep}

To prove Claim \ref{cl12} we first prove the following inequality by induction, and then show how it leads to proving Claim \ref{cl12}.
\begin{align} 
&  \sum_{j=1}^{ i-1 } \frac{m_j(n)}{2^{j}} + \frac{\Text{rank}[ \bold{G}_{|\mathcal P|+ |\mathcal D|}^n [\bold{V}_{i}^{n}\ldots \bold{V}_{|\mathcal P|+ |\mathcal D|-1}^{n}]] }{2^{i-1}} \stackrel{a.s.}{\leq } \Text{rank}[ \bold{G}_{|\mathcal P|+ |\mathcal D|}^n [\bold{V}_{1}^{n}\ldots \bold{V}_{|\mathcal P|+ |\mathcal D|-1}^{n}]], \quad i=2,\ldots, |\mathcal P|+ |\mathcal D|-1.\label{cl12proofproxy}
\end{align}

We prove (\ref{cl12proofproxy}) by induction on $i$. 
For the base case of $i=2$, the inequality in (\ref{cl12proofproxy}) simplifies to
\begin{align} 
&   \frac{m_1(n)}{2} + \frac{\Text{rank}[ \bold{G}_{|\mathcal P|+ |\mathcal D|}^n [\bold{V}_{2}^{n}\ldots \bold{V}_{|\mathcal P|+ |\mathcal D|-1}^{n}]] }{2} \stackrel{a.s.}{\leq } \Text{rank}[ \bold{G}_{|\mathcal P|+ |\mathcal D|}^n [\bold{V}_{1}^{n}\ldots \bold{V}_{|\mathcal P|+ |\mathcal D|-1}^{n}]].
\end{align}
Hence, the base case of $i=2$
 holds due to Lemma \ref{MainIneqk} and (\ref{decode}).
Suppose that the induction hypothesis is true for $i=s$. We show that it will also hold for $i=s+1$.
By our assumption we have  
\begin{eqnarray} 
 && \Text{rank}[ \bold{G}_{|\mathcal P|+ |\mathcal D|}^n [\bold{V}_{1}^{n}\ldots \bold{V}_{|\mathcal P|+ |\mathcal D|-1}^{n}]]  \stackrel{a.s.}{\geq }  \sum_{j=1}^{ s-1 } \frac{m_j(n)}{2^{j}} + \frac{\Text{rank}[ \bold{G}_{|\mathcal P|+ |\mathcal D|}^n [\bold{V}_{s}^{n}\ldots \bold{V}_{|\mathcal P|+ |\mathcal D|-1}^{n}]] }{2^{s-1}}\nonumber \\
& \substack{(\text{Lemma \ref{MainIneqk}})\\ a.s. \\ \geq } &  \sum_{j=1}^{ s-1 } \frac{m_j(n)}{2^{j}} + \frac{ \frac{  \Text{rank}[ \bold{G}_{s}^n [\bold{V}_{s}^{n}\ldots \bold{V}_{|\mathcal P|+ |\mathcal D|-1}^{n}]] - \Text{rank}[ \bold{G}_{s}^n [\bold{V}_{s+1}^{n}\ldots \bold{V}_{|\mathcal P|+ |\mathcal D|-1}^{n}]]      +   \Text{rank}[ \bold{G}_{|\mathcal P|+ |\mathcal D|}^n [\bold{V}_{s+1}^{n}\ldots \bold{V}_{|\mathcal P|+ |\mathcal D|-1}^{n}]] }{2}}{2^{s-1}}\nonumber \\
& \substack{\text{(Lemma \ref{ranksubmod})} \\ a.s. \\ \geq } &  \sum_{j=1}^{ s-1 } \frac{m_j(n)}{2^{j}}  + \frac{  \Text{rank}[ \bold{G}_{s}^n [\bold{V}_{1}^{n}\ldots \bold{V}_{k}^{n}]] - \Text{rank}[ \bold{G}_{s}^n [\cup_{\substack{i\in \{1,\ldots, k\} \\ i\ne s}}\bold{V}_{i}^{n}]]      +   \Text{rank}[ \bold{G}_{|\mathcal P|+ |\mathcal D|}^n [\bold{V}_{s+1}^{n}\ldots \bold{V}_{|\mathcal P|+ |\mathcal D|-1}^{n}]] }{2^s}\nonumber \\
& \substack{(\ref{decode})\\ a.s. \\ =} &  \sum_{j=1}^{ s-1 } \frac{m_j(n)}{2^{j}}  + \frac{\Text{rank}[ \bold{G}_{s}^n \bold{V}_{s}^{n}]    +   \Text{rank}[ \bold{G}_{|\mathcal P|+ |\mathcal D|}^n [\bold{V}_{s+1}^{n}\ldots \bold{V}_{|\mathcal P|+ |\mathcal D|-1}^{n}]] }{2^s}\nonumber \\
& \substack{(\ref{decode})\\ a.s. \\ =}  & \sum_{j=1}^{ s} \frac{m_j(n)}{2^{j}} + \frac{      \Text{rank}[ \bold{G}_{|\mathcal P|+ |\mathcal D|}^n [\bold{V}_{s+1}^{n}\ldots \bold{V}_{|\mathcal P|+ |\mathcal D|-1}^{n}]] }{2^{s}}.\nonumber
\end{eqnarray}
Hence, the induction hypothesis holds for $i=s+1$ as well; and as a result, the proof of (\ref{cl12proofproxy}) is complete.
We now show how (\ref{cl12proofproxy}) leads to proof of Claim \ref{cl12}.
Let $ i = |\mathcal P|+ |\mathcal D|  -  1$. Then, by (\ref{cl12proofproxy}),
\begin{eqnarray}
 \Text{rank}[ \bold{G}_{|\mathcal P|+ |\mathcal D|}^n [\bold{V}_{1}^{n}\ldots \bold{V}_{|\mathcal P|+ |\mathcal D|-1}^{n}]]   &\stackrel{a.s.}{\geq }& \sum_{j=1}^{ |\mathcal P|+ |\mathcal D|  - 2 } \frac{m_j(n)}{2^{j}} + \frac{\Text{rank}[ \bold{G}_{|\mathcal P|+ |\mathcal D|}^n  \bold{V}_{|\mathcal P|+ |\mathcal D|-1}^{n}] }{2^{|\mathcal P|+ |\mathcal D|-2}}\nonumber \\
& \substack{\text{(Lemmaa \ref{MIMORRIk})} \\  a.s.\\ \geq }& \sum_{j=1}^{ |\mathcal P|+ |\mathcal D|  - 2 } \frac{m_j(n)}{2^{j}} + \frac{\Text{rank}[[ \bold{G}_{|\mathcal P|+ |\mathcal D|-1}^n ; \bold{G}_{|\mathcal P|+ |\mathcal D|}^n  ]  \bold{V}_{|\mathcal P|+ |\mathcal D|-1}^{n}] }{2^{|\mathcal P|+ |\mathcal D|-1}}\nonumber \\
& \substack{ \geq }& \sum_{j=1}^{ |\mathcal P|+ |\mathcal D|  - 2 } \frac{m_j(n)}{2^{j}} + \frac{\Text{rank}[ \bold{G}_{|\mathcal P|+ |\mathcal D|-1}^n   \bold{V}_{|\mathcal P|+ |\mathcal D|-1}^{n}] }{2^{|\mathcal P|+ |\mathcal D|-1}}\nonumber \\
& \substack{(\ref{decode}) \\  a.s.\\ = }& \sum_{j=1}^{ |\mathcal P|+ |\mathcal D|  - 1 } \frac{m_j(n)}{2^{j}} ,\nonumber 
\end{eqnarray}
which completes the proof of Claim \ref{cl12}.

\section{Proof of Claim \ref{MIMOvariant}} \label{MIMOvariantApp}
We first re-state the Claim for convenience.

\begin{claimrep}{Claim~\ref{MIMOvariant}}
\begin{equation}
   \frac{ \Text{rank}[[\bold{G}_1^n; \ldots ; \bold{G}_k^n ] [\bold{V}_1^n \ldots  \bold{V}_{|\mathcal P|}^n ]]  }{k}\stackrel{a.s.}{\leq} \frac{\Text{rank}[[\bold{G}_{|\mathcal P| + 1}^n; \ldots ; \bold{G}_{|\mathcal P|+|\mathcal D|}^n ] [\bold{V}_{1}^n \ldots  \bold{V}_{|\mathcal P| }^n ] ]}{|\mathcal D|}.
\end{equation}
\end{claimrep}

\begin{proof}

We consider the notations (\ref{conditiondef}), (\ref{conditioning}); and we  use
$\bold{Y}_j^n \triangleq  \bold{G}_j^n [\bold{V}_1^n \ldots  \bold{V}_{|\mathcal P|}^n ]$, and $\bold{Y}_j(t) \triangleq  \vec {\bold{g}}_j(t) [\bold{V}_1(t) \ldots  \bold{V}_{|\mathcal P|}(t) ]$.
Furthermore, we denote by $\bold{Y}_{\mathcal S}^n$ the column concatenation of matrices $\bold{G}_j^n [\bold{V}_1^n \ldots  \bold{V}_{|\mathcal P|}^n ]$, where $j\in \mathcal S$.
Therefore, we need to show that 
\begin{equation} \label{MIMOvarObj}
   \frac{ \text{rank}[\bold{Y}_1^n; \ldots ; \bold{Y}_k^n ]  }{k}\stackrel{a.s.}{\leq} \frac{\text{rank}[\bold{Y}_{\mathcal D}^n ]}{|\mathcal D|}.
\end{equation}
For all $ t=1,\ldots, n,$  we have
\begin{align}
( |\mathcal P|+ |\mathcal N| ) \times & \text{rank}[\bold{Y}_{\mathcal D}(t)   |   \bold{Y}_{\mathcal D}^{t-1}  ]  \stackrel{(\ref{conditiondef})}{=}\sum_{i=|\mathcal P|+1}^{|\mathcal P| +|\mathcal D|} ( |\mathcal P|+ |\mathcal N| ) \times \text{rank}[\bold{Y}_{i}(t)   |  [ \bold{Y}_{\mathcal D}^{t-1} ; \bold{Y}_{|\mathcal P|+1}(t) ;\ldots ; \bold{Y}_{i-1}(t) ]] \nonumber\\
& \substack{(a)\\ a.s.\\ \geq}   \sum_{i=|\mathcal P|+1}^{|\mathcal P| +|\mathcal D|}  \sum_{j \in \mathcal P \cup \mathcal N}    \text{rank}[\bold{Y}_{j}(t)  | [ \bold{Y}_{\mathcal D}^{t-1} ; \bold{Y}_{|\mathcal P|+1}(t) ;\ldots ; \bold{Y}_{i-1}(t) ]] \nonumber\\
& \substack{ (\ref{conditioning}) \\ \geq}   \sum_{i=|\mathcal P|+1}^{|\mathcal P| +|\mathcal D|}  \sum_{j \in \mathcal P \cup \mathcal N}    \text{rank}[\bold{Y}_{j}(t)   | [ \bold{Y}_{\mathcal D}^{n}; \bold{Y}_{\mathcal P \cup \mathcal N}^{t-1} ]]  \quad \substack{ =} \quad   |\mathcal D|  \sum_{j \in \mathcal P \cup \mathcal N}    \text{rank}[\bold{Y}_{j}(t)   |  [ \bold{Y}_{\mathcal D}^{n}; \bold{Y}_{\mathcal P \cup \mathcal N}^{t-1}] ] \nonumber\\
& \substack{(\ref{condSubAdd})\\  \geq }   |\mathcal D|   \times  \text{rank}[\bold{Y}_{\mathcal P \cup \mathcal N}(t)   | [  \bold{Y}_{\mathcal D}^{n}; \bold{Y}_{\mathcal P \cup \mathcal N}^{t-1}] ], \label{MIMOvarder}
\end{align}
where
(a) follows from the same arguments as in (\ref{a1})-(\ref{a5}) which were used to show that 
$$\text{rank} [\bold{Y}_i(t)| [\bold{Y}^{t-1};  \bold{Y}_1(t); \ldots ; \bold{Y}_{i-1}(t) ]]  \stackrel{a.s.}{\geq} 
\text{rank} [\bold{Y}_{j+1}(t)| [\bold{Y}^{t-1};  \bold{Y}_1(t); \ldots ; \bold{Y}_{i-1}(t)] ],$$
for the case where $i\in \{1,\ldots , j\} \subseteq \mathcal D$,   and $\bold{Y}^{t-1}\triangleq [\bold{Y}_1^{t-1}; \ldots ;\bold{Y}_{j}^{t-1}] $.

By summing both sides of the inequality (\ref{MIMOvarder}) over all $t=1,\ldots, n$, we obtain 
\begin{align}
( |\mathcal P|+ |\mathcal N| ) \times \text{rank}[  \bold{Y}_{\mathcal D}^{n}  ]  & \stackrel{a.s.}{\geq } |\mathcal D|   \times  \text{rank}[\bold{Y}_{\mathcal P \cup \mathcal N}^n   |   \bold{Y}_{\mathcal D}^{n} ] \nonumber\\
&\stackrel{(\ref{conditiondef})}{=} |\mathcal D|   \times  \text{rank}[\bold{Y}_{1}^n;\ldots ;  \bold{Y}_{k}^{n} ]
-   |\mathcal D|   \times  \text{rank}[ \bold{Y}_{\mathcal D}^{n} ] . 
\end{align}
Finally, by rearranging the above inequality we obtain
(\ref{MIMOvarObj}), which proves Claim \ref{MIMOvariant}. 
\end{proof}

\end{appendices}

\bibliographystyle{ieeetr}
\bibliography{HybridCSIT2}

\end{document}